\newtheorem{fact}{Fact}
\newtheorem{observation}{Observation}
\newcommand{\ryc}[1]{}
\newcommand{\jd}[1]{}
\newcommand{\jr}[1]{}
\newcommand{\Nulls}{\mathcal{N}}
\newcommand{\nulls}{\textit{Nulls}}
\renewcommand{\epsilon}{\varepsilon}
\newcommand{\Sch}{{\cal S}}
\newcommand{\Inst}{{\cal I}}
\newcommand{\schema}{\text{Schema}}
\newcommand{\Pro}{P}
\newcommand{\A}{\mathcal{A}}
\newcommand{\C}{\mathcal{C}}
\newcommand{\Const}{\mathbb{C}}
\renewcommand{\P}{\mathbb{P}}
\newcommand{\Que}{\mathbb{Q}}
\newcommand{\Q}{\mathcal{Q}}
\newcommand{\pre}{\text{in}}
\newcommand{\post}{\text{out}}
\newcommand{\safe}{\text{pres}}
\newcommand{\Qsafe}{\Q_\safe}
\newcommand{\outcome}{\textit{outcomes}}
\newcommand{\dynoutcome}{\textit{dyn-outcomes}}
\newcommand{\rep}{\textit{rep}}
\newcommand{\Cpre}{\mathcal{C}_\text{in}}
\newcommand{\Cpost}{\mathcal{C}_\text{out}}
\newcommand{\Scope}{\textit{Scope}}
\newcommand{\skb}{\mathcal K}
\newcommand{\chase}{\textit chase}
\newcommand{\removerels}{\textsc RemoveRelations}
\newcommand{\Reps}{\mathbf{W}}
\newcommand{\certain}{\textit certain}
\newcommand{\evisits}{\textit{EVisits}\xspace}
\newcommand{\locvisits}{\textit{LocVisits}\xspace}
\newcommand{\patients}{\textit{Patients}\xspace}
\newcommand{\inid}{\textit{insId}\xspace}
\newcommand{\fid}{\textit{facility}\xspace}
\newcommand{\pid}{\textit{pId}\xspace}
\newcommand{\timestamp}{\textit{timestp}\xspace}
\newcommand{\age}{\textit{age}\xspace}
\newcommand{\ptime}{\textbf{{\sf Ptime}}\xspace}
\newcommand{\np}{\textbf{{\sf NP}}\xspace}
\newcommand{\pspace}{\textbf{{\sf Pspace}}\xspace}
\newcommand{\conp}{\textbf{{\sf coNP}}\xspace}
\newcommand{\exptime}{\textbf{{\sf Exptime}}\xspace}
\newcommand{\nexptime}{\textbf{{\sf NExptime}}\xspace}
\newcommand{\twonexptime}{\textbf{{\sf N2Exptime}}\xspace}
\newcommand{\exptimenp}{\textbf{{\sf Exptime$^\np$}}\xspace}
\newcommand{\OMIT}[1]{}
\begin{document}

\mainmatter  % start of an individual contribution

% first the title is needed
\title{Assessing Achievability of Queries and Constraints}

% a short form should be given in case it is too long for the running head
\titlerunning{Assessing Achievability of Queries and Constraints}

% the name(s) of the author(s) follow(s) next
%
% NB: Chinese authors should write their first names(s) in front of
% their surnames. This ensures that the names appear correctly in
% the running heads and the author index.
%
\author{Rada Chirkova$^{1}$\and Jon Doyle$^{1}$\and Juan L. Reutter$^{2}$}
\authorrunning{Assessing Achievability of Queries and Constraints}
% (feature abused for this document to repeat the title also on left hand pages)

% the affiliations are given next; don't give your e-mail address
% unless you accept that it will be published
\institute{$^{1}$ Computer Science Department, North Carolina State University\\
North Carolina, USA\\
$^{2}$ Pontificia Universidad Cat\'olica de Chile\\
%\mailsa\\
\texttt{chirkova@csc.ncsu.edu, Jon\_Doyle@ncsu.edu, jreutter@ing.puc.cl}  \\
}

%
% NB: a more complex sample for affiliations and the mapping to the
% corresponding authors can be found in the file "llncs.dem"
% (search for the string "\mainmatter" where a contribution starts).
% "llncs.dem" accompanies the document class "llncs.cls".
%

\toctitle{Lecture Notes in Computer Science}
\tocauthor{Authors' Instructions}
\maketitle

\begin{abstract}
Assessing and improving the quality of data in data-intensive systems are fundamental challenges that have given rise to numerous applications targeting transformation and cleaning of data. However, while schema design, data cleaning, and data migration are nowadays reasonably well understood in isolation, not much attention has been given to the interplay between the tools that address issues in these areas. Our focus is on the problem of determining whether there exist sequences of data-transforming procedures that, when applied to the (untransformed) input data, would yield data satisfying the conditions required for performing %deemed prerequisite to 
the task in question. Our goal is to develop a framework that would address this problem, starting with the relational setting.

In this paper we abstract data-processing tools as black-box procedures. This abstraction describes procedures by a specification of which parts of the database might be modified by the procedure, as well as by the %two sets of 
constraints that specify the required states of the database before and after applying the procedure. We then proceed to study fundamental algorithmic questions arising in this context, such as understanding when one can guarantee that sequences of procedures apply to original or transformed data, when they succeed at improving the data, and when knowledge bases can represent the outcomes of procedures. Finally, we turn to the problem of determining whether the application of a sequence of procedures to a database results in the satisfaction of properties specified by either queries or constraints. We show that this problem is decidable for some broad and realistic classes of procedures and properties, even when procedures are allowed to alter the schema of instances.
\end{abstract}

\section{Introduction}\label{sec:intro}

A common approach to ascertaining and improving the quality of data 
is to develop procedures and workflows for repairing or improving data sets 
with respect to quality constraints. 
The community has identified a wide range of %data-management 
problems that are vital in this respect, leading to the creation of several lines of research, which 
have normally been followed by the development of toolboxes %of  applications 
that practitioners can use to solve their problems. %This has been the case, for instance, 
%for the %the notoriously labor- and knowledge-intensive
%Extract-Transform-Load (ETL) \cite{Devlin97,Kimball04} process in
%business applications, or for the development of %automatic 
%tools to reason about the completeness or cleanliness 
%of the data \cite{2012FanGeertsBook}. 

As a result, organizations facing data-improvement problems now have access to a variety of data-mana- gement 
tools to choose from; the tools can be assembled to construct what can be called %called 
workflows of data operations. 
However, in contrast to the considerable body of research on specific %data 
operations or %even 
entire business workflows (see, e.g., \cite{DHPV09,BGHLS07,2012Deutsch,BerardiCGHLM05}), previous research appears to have not
focused explicitly %either 
on the assembly process itself nor on
providing guarantees that the desired data-quality constraints will be
satisfied once the assembled workflow of procedures has been applied
to the %available 
data.  

We investigate %study the problem of 
constructing workflows from available procedures. 
That is, we consider a scenario in which an organization needs to meet a certain data-quality criterion or goal using 
available data-improvement procedures. %In this case, t
The problem is  to understand whether the procedures can be 
assembled into a %data-improvement 
workflow in a way that would guarantee that the data produced by the 
workflow will %effectively 
meet the desired quality goal. 

\medskip
\noindent
\textbf{Motivating example:} Suppose a medical analyst wishes to know the emergency rooms 
that are used by patients with a certain medical insurance. 
The data owned by the analyst reside in a relation \locvisits(\fid,\pid,\timestamp), with the attributes standing, respectively, 
%\fid, containing 
for the id of the facility where the emergency room is, 
%\pid, containing 
the social-security number 
of a patient, and a timestamp marking the date of the emergency visit. 

The analyst has also been given two procedures he can execute as-is but not modify: A procedure $P_\text{migrate}$, 
which is supposed to migrate data  into %the relation 
\locvisits from relation \evisits owned by another analysis company, 
and a procedure $P_\text{insur}$, which augments the relation $\locvisits$ with an attribute $\inid$ containing the 
insurance of patients, and whose data are drawn from relation $\patients(\pid,\inid)$ owned by the local authority. 

Given an insurance id $I$, the analyst can capture the desired information via query 
\texttt{SELECT facility FROM LocVisits WHERE insId = I}, posed over %a relation 
$\locvisits$ with an additional attribute $\inid$ containing the 
insurance of patients. It is natural for the analyst to ask: 
Can I use any or all of the above procedures to transform my data so that this query can be %effectively 
posed on my 
database? Or is there a way to apply these procedures so that I can guarantee that my database satisfies certain 
quality criteria?
\subsection{Contributions}

Our specific focus is on the problem of determining whether there exist sequences 
of data-transforming procedures that, when applied to  the given data, would yield data satisfying certain 
given conditions. 

%As our first contribution, w
We propose a formal framework in which data-processing tools are abstracted as 
black-box procedures, describing them %only 
by means of the following information:  
\par
\noindent
\begin{itemize}
\item A specification of which parts of the database the procedure is modifying; 
\item A set of conditions that need to be satisfied in order for the procedure to be applied; 
\item A set of conditions that are guaranteed to be satisfied once the procedure has been applied; and 
\item In some cases, additional guarantees that certain pieces of data will not be deleted or modified. 
\end{itemize}
\noindent 

We also define the notion of outcome of applying a procedure to an instance of data, and consider  
sets of such outcomes. %Since our abstraction of procedures 
%cannot describe them in complete detail, we may, in general, have an infinite number of possible outcomes for a given instance and procedure. 
%
%: 
As our goal is to reason about workflows of data-processing procedures, we also study the notion of outcome (and sets of 
outcomes) of a {\em sequence} of applications of procedures. 

These definitions naturally lead us to two fundamental decision problems in our framework. The first problem is applicability: Given an 
instance and a sequence of procedures, is one guaranteed to be able to apply successive procedures in the sequence? The second problem is non-emptiness: Is one guaranteed to obtain at least one outcome of applying a given sequence of procedures? 
We show that our definitions are too general to guarantee efficient algorithms for these problems, but also identify interesting and realistic 
classes of procedures that lead to the tractability of these basic problems. 

Next, for sequences of procedures belonging to the well-behaved classes we have identified, we focus on representing the sets of their outcomes. We show that these sets can be represented by a knowledge base in which the knowledge is given by tuple-generating dependencies, and %in which 
some of the relations are closed to adding more data. %Interestingly, 
We show that such knowledge bases form a strong representation system, in the sense of \cite{IL84}, for application of procedures. 
We also show how to reason about such knowledge bases, studying in particular the problems of query answering and constraint satisfaction. % of constraints as well as . 

Finally, we use our toolbox %into use 
to study what we call the data-readiness problem: Given an instance $I$, a set $\Pi$ of procedures, and 
a specification of a property over instances, is there a way to construct a workflow with procedures from $\Pi$ so that each instance in 
the outcome %set of this workflow 
satisfies this property? Once again, while undecidable in its general form, we show that 
this problem is decidable for some broad classes of procedures. 

%We remark that our framework, while studied here in a relational setting, might be general enough to allow for much more 
%expressive classes of procedures, including procedures calculating statistics of the data, transformation on semistructured or 
%even unstructured text data, and, why not, machine learning operations.

\medskip
\noindent
\textbf{Structure of the Exposition.}
To simplify the formal exposition, in this paper we 
restrict our attention to relational data.  The general methods, however, 
seem promising for application to other forms of data as well, including semistructured %data
and %unstructured 
text data. 

Within the scope of relational data, in most of the discussion in this paper 
we further restrict our attention to transformations of data that do not 
change the schema.  In this setting, one can formalize many types of 
data-adequacy conditions in terms of dependencies, and treat the above planning 
task in terms of chase.  We also briefly consider transformations that change both the schema and contents of the data, and sketch the use of such transformations in treating schema updates.

%too much I think
%Subsequent parts of the general framework, not treated in this paper,
%would provide means for formalizing tasks that serve as the purposes of
%analyses and derivation or construction of adequacy conditions on data
%from such tasks.  Theoretical treatment of these problems would bring in
%formal concepts and techniques from decision theory, deontic and
%temporal logics, and statistics.  Practical treatment of these methods
%would bring in the full armamentarium of automated planning and model-checking techniques.

\jd{Add some references to planning and model checking methods.}

\subsection{Related Work}

\jd{I don't know this literature, and this should be further revised
  to take the ICDT reviewer comments.}

\ryc{This is ok for the moment, I will bring in next week some relevant work by AnHai Doan, by Vardi, and by the data-quality community. Also SIGMOD-17 and PODS/ICDT-17, I suppose.}

\textbf{Data quality:}  Numerous works treat issues in the broad spectrum of data quality.
\cite{WangS96} provides a widely acknowledged study on
eliciting and defining specific dimensions of quality of the data; see
also \cite{KahnSW02,LeeSKW02}.  In this space, many works %treating data quality
view data-quality measures as objective properties unconnected
to specific uses of the data. %, rather than as largely determined by
%the needs of the anticipated uses of the data.  
The role of purpose in determining data
quality is more visible in \cite{LeePWF09,WangLPS98,ChengalurSP98},
where quality data are regarded as being fit for their intended use,
taking both context and use (i.e., tasks to be performed) into
account when evaluating and improving the quality of data.

% \cite{} report on measuring the impact of information quality on
% decision processes.

\textbf{Task dependence:}  Recent efforts have put an emphasis on
data-quality policies and strategies w.r.t. specific tasks to be performed on the data.  \cite{LeePWF09} presents
%a groundbreaking set of 
general information-quality policies that
structure decisions on information, and \cite{Wang98} presents an 
%quality-
improvement cycle %consisting of define, measure,
%analyze, and improve steps 
for data quality. 
 \cite{LeeS04} moves
toward integration of process measures with information-quality
measures.  Our work in the present paper differs from these lines of
research in that we assume that task-oriented data-quality
requirements are already given in the form of %task-adequacy
constraints on the data, and that procedures for improving data
quality are also specified and available.

\textbf{Data preparation:}  The work \cite{2012FanGeertsBook}
introduces a unified framework covering formalizations and approaches
for a range of problems in data extraction, cleaning, repair, and
integration, and also supplies an excellent survey of related work in
these areas. More recent work in data preparation includes
\cite{BergmanMNTsigmod15,BergmanMNT15,KrishnanWFGKM015,RazniewskiKNS15,NuttPS15}.

\textbf{Workflows:} Research on business processes \cite{2012Deutsch}
studies both the environment in which data are generated and
transformed, including processes, users of data, and goals of using
the data, and automatic composition of services into business
processes under the assumption that the assembly needs to follow a
predefined workflow of executions of actions or services
\cite{BerardiCGHLM05,BerardiCGHM05,BerardiCGLM05}.  Our work, in
contrast, begins with the data properties that the workflow should
ensure, rather than with the outlines of the workflow itself. In that sense, our work is in line with the efforts of, e.g., \cite{Chatterjee0V15}, while differing from those works in the nature of the specifications and of the components from which 
workflows are assembled.  The
work of \cite{DHPV09,BGHLS07} stands closer to reasoning about static
properties of business-process workflows, but does not pursue the goal
of constructing workflows.

Some recent systems work, e.g., \cite{KondaDCDABLPZNP16}, emphasizes the importance of data-transforming workflows assembled from individual procedures, while advocating for users to choose from sets of preassembled workflows. % the ones that are the best fit for the user's task. 
In this paper, we focus on providing tools for assembling individual  data-transforming workflows as needed, which complements nicely the efforts of the line of work of \cite{KondaDCDABLPZNP16}.

\section{Preliminaries}
\label{sec:prelim}

\subsection{Schemas and Instances}

Assume a countably infinite set of attribute names
$\mathcal{A} = \{ A_1, A_2, \ldots \}$ totally ordered by
$\leq_\mathcal{A}$, a countably infinite domain of values (or elements)
$D$ disjoint from $\mathcal{A}$, and a countably infinite set of
relation names $\mathcal{R} = \{ R_1, R_2, \ldots \}$ disjoint from
both $\mathcal{A}$ and $\mathcal{R}$.
A relational schema over $\mathcal{A}$ and $\mathcal{R}$ is a partial
function $\Sch: \mathcal{R} \to 2^{\mathcal{A}}$ with finite domain, 
which associates a finite set of attributes with a finite set of relation
symbols.  Abusing the notation, we say that $R$ is in $\Sch$ if $\Sch(R)$ is defined.

An instance $I$ of schema $\Sch$ %, as defined in the simplest way, %, in the simplest conception, 
assigns
a set $R^I$ of tuples to each relation $R$ in  $\Sch$, so that if
$\Sch(R) = \{A_1,\dots,A_n\}$, then $R^I \subseteq D^n$, with the set of 
tuples structured so that the elements of each tuple $(a_1, \dots, a_n)$
appear in the assumed attribute order, that is,
$A_1 <_{\mathcal{A}} \cdots <_{\mathcal{A}} A_n$.  
%We write
%$\schema(I)$ to denote the schema of an instance $I$.

Regarding instances as sets of tuples as above % in this way 
suffices when we consider 
%treating 
data transformations that do not change the schema.  When treating
transformations that change the schema of the data, we can no longer treat the functions
in $R^I$ as lists of values in $D$, and must replace this
\emph{unnamed} perspective with a \emph{named} perspective
that explicitly notes the attributes connected with each tuple element.
Following \cite{AHV95}, we regard $R^I$ as a set of functions
from $\Sch(R)$ to $D$, and each tuple $t$ in $R^I$ as a
sequence of functions $\bar t = t(A_1), \dots, t(A_n)$ that lists the values
in the attribute order, writing $t(A_i)$ to denote the element of 
$t$ corresponding to attribute $A_i$.  In the named
perspective, we denote a tuple $t:\{A_1,\dots,A_n\} \rightarrow D$
using an expression of the form $(A_1:d_1, \dots, A_n:d_n)$.

\subsection{Queries and Constraints across Schemas} 
\label{subsec-multiple-schemas}

Queries are usually 
%It is usually assumed that queries are 
defined with a particular schema
in mind, but in preparing and transforming data one sometimes has to deal
with queries that might be valid for several schemas.  
Consider, for instance, the relation  \locvisits introduced  in %the example in 
Section \ref{sec:intro}.  
In languages such as SQL, one can  retrieve 
the IDs of the facilities in % the relation
 \locvisits by issuing  the query 
\texttt{SELECT facility from locVisits}. This query can be applied over instances of multiple % several %relational 
schemas, as long as the schema has a relation 
\locvisits with attribute \fid. 

Since our goal is to model a framework where schemas may change depending on which 
procedures are applied to the data, we need the flexibility of being able to 
specify queries that may  be posed over multiple schemas. To formalize such queries,  
we assume the named perspective on schemas and data, as is explained next. 

%Example \ref{exa-intro-one}; to retrieve patient ages by patient IDs, we could pose on that relation the query \textit{LocVisists(age: x, patInsur: y).} Notice that to formulate this query, we do not need to be aware of the presence in the schema of \textit{LocVisists} of any other attributes of that relation.

%The named perspective on data provides the means for querying
%instances of different schemas.

\textbf{Named atoms:} 
A {\it named atom} is an expression of the form $R(A_1:x_1,\dots,A_k:x_k)$, where 
$R$ is a relation name, each $A_i$ is an attribute name, and each $x_i$ is a variable. We say that 
the variables mentioned by such an atom are $x_1,\dots,x_k$, and the attributes mentioned are 
$A_1,\dots,A_k$. 
A named atom $R(A_1:x_1,\dots,A_k:x_k)$ is \emph{compatible} with  
schema $\Sch$ if $\{A_1,\dots,A_k\} \subseteq \Sch(R)$. 

Given a named atom $R(A_1:x_1,\dots,A_k:x_k)$, an instance $I$ of schema $\Sch$ 
%that is compatible with the atom and an assignment $\tau: \{x_1,\dots,x_k\} \rightarrow D \cup \Nulls$, 
that is compatible with the atom, and an assignment $\tau:
\{x_1,\dots,x_k\} \rightarrow D$ assigning values to variables,  
we say that $(I,\tau)$ \emph{satisfies} $R(A_1:x_1,\dots,A_k:x_k)$ if there is a 
tuple $\bar a:\mathcal A \rightarrow D$
%\jd{``function'' was ``tuple'' in the original, but what is described
%  here is a function on an infinite domain, unlike the tuples
%  discussed earlier, and unlike many common uses of the word
%  ``tuple''.  Change it back if this usage is standard for the
%  audience.}
matching values in $\tau$ with attributes in $R$, in the sense that
$\bar a(A_i) = \tau(x_i)$ for each $1 \leq i \leq k$.  (Under the unnamed
perspective, we would require the presence of a tuple $a$ in $R^I$ such that its
projection $\pi_{A_1,\dots,A_k}\ \bar a$ over %the attributes
$A_1,\dots,A_k$ is precisely the tuple $\tau(x_1),\dots,\tau(x_k)$.)

%\noindent
\textbf{Conjunctive queries:} 
A \emph{conjunctive query} (CQ) is an expression of the form  
$\exists \bar z \phi(\bar z,\bar y)$, where $\bar z$ and $\bar y$ are tuples of variables, 
and $\phi(\bar z,\bar y)$ is a conjunction of named atoms that use the variables in $\bar z$ and $\bar y$. 
A CQ is compatible with $\Sch$ if all its named atoms are compatible. 

The usual semantics of conjunctive queries is obtained  from the semantics of named atoms in the usual
way.
Given a conjunctive query $Q$ that is compatible with $\Sch$, the result $Q(I)$ of evaluating $Q$ over $I$ 
is the set of all the tuples $\tau(x_1),\dots,\tau(x_k)$ such that $(I,\tau)$ satisfy $Q$. 

\textbf{Total queries:}  A \emph{total query}, which we define to be
an expression of the form $R$ for some relation name $R$, extracts all
the tuples stored in %the relation 
$R$, regardless of the schema and 
arity of $R$, as is done in SQL with  \texttt{SELECT *
  FROM R}.  A total query of this form is compatible with schema
$\Sch$ if $\Sch(R)$ is defined and the result of evaluating this
query over an instance $I$ over a compatible schema $\Sch$ is %simply
the set of all (unnamed) tuples in $R^I$.
%We use $CQ_*$ to denote the set of all conjunctive queries and all total queries. 

%We also extend CQs with an additional construct 
%used to differentiate constant values from null values. Formally, 
%let $C$ be a symbol not in $\mathcal{A}$, $\cal R$, $D$, or $\cal N$. A $C$-conjunctive query is a conjunctive query that may additionally use 
%the atoms $C(x)$, %and $\neg C(x)$, 
%for a variable $x$. The semantics is that an assignment $\tau$ satisfies $C(x)$ if $\tau(x)$ is a domain 
%value in $D$ (not a null in $\Nulls$). We use $CQ_C$ to denote the class of all $C$-conjunctive queries, and $CQ_{*,C}$ the class of all $C$-conjunctive queries and 
%all total queries.

%\noindent
\textbf{Data constraints:} 
We consider data constraints %considered in this paper take
that are (i) \emph{tuple-generating dependencies (tgds)}, i.e., expressions of the form 
$\forall \bar x \big(\exists \bar y\phi(\bar x,\bar y) \rightarrow \exists \bar z \psi(\bar x,\bar z)\big)$, for CQs %conjunctive queries
$\exists \bar y\phi(\bar x,\bar y)$ and $\exists \bar y\psi(\bar
x,\bar z)$, and (ii)  \emph{equality-generating dependencies (egds)}, i.e., expressions of the 
form $\forall \bar x \big(\exists \bar y\phi(\bar x,\bar y) \rightarrow x = x'\big)$, for a CQ %conjunctive query
$\exists \bar y\phi(\bar x,\bar y)$ and variables $x,x'$ in $\bar x$.  
As usual, for readability we sometimes omit the universal quantifiers of tgds and egds. 
%Furthermore, a $C$-tgd is a tgd built from $C$-conjunctive queries, and likewise for 
%$C$-egds. 

An instance $I$ \emph{satisfies} a set $\Sigma$ of tgds and egds, written $I \models \Sigma$, if 
(1) each CQ %conjunctive query 
in each dependency in $\Sigma$ is compatible with the schema of $I$, and (2) every assignment 
$\tau: \bar x \cup \bar y \rightarrow D$ 
such that $(I,\tau) \models \phi(\bar x,\bar y)$ 
can be extended into a %n assignment 
$\tau': \bar x \cup \bar y \cup \bar z \to D$ such that 
$(I,\tau') \models \psi(\bar x,\bar z)$.  

% An instance $I$ satisfies a set $\Sigma$ of ($C$)-tgds and
% ($C$)-tgds if (1) the schema of $I$ is compatible with each
% ($C$)-conjunctive query in each dependency in $\Sigma$, and (2) $I$
% satisfies $\Sigma$ in the usual logical way.

%Finally, we also treat \emph{total tgds}, which are constraints of the
%form $R \rightarrow S$, for relation names $R$ and $S$, that indicate
%that every tuple in $R$ must also be in $S$.
%Formally, an instance $I$ satisfies the constraint 
%$R \rightarrow S$ if the schema of $I$ has both $R$ and $S$, with the same number of attributes, and if the evaluation of the total query $R$ over $I$ is a subset of the evaluation of the total query $S$ over $I$. 

A tgd is \emph{full} if it does not use existentially 
quantified variables on the right-hand side, and {\em acyclic} if none of the relations on 
the right-hand side appear on the left-hand side.  %(that is, if $\bar z$ is empty). 
A set $\Sigma$ of tgds is {\em full} if each tgd in $\Sigma$ is full.
$\Sigma$ is {\em acyclic} if an acyclic graph is formed by
representing each relation mentioned in a tgd in $\Sigma$ as a node and
by adding an edge from node $R$ to $S$ if a tgd in $\Sigma$
mentions $R$ on the left-hand side and $S$ on the right-hand side.

%Moved to sec 3
%\noindent
%\textbf{Structure Constraints}:
%Structure constraints are used to specify that schemas need to contain a certain 
%relation or certain attributes. A structure constraint is a formula of the form 
%$R[\bar s]$ or $R[*]$, where $R$ is a relation symbol,  
%$\bar s$ is a tuple of attributes, and $*$ is a symbol not in 
%$\mathcal A$ or $\mathcal R$ intended to function as a 
%wildcard. 
%A schema $\Sch$ satisfies a structure constraint $R[\bar s]$, denoted 
%by $\Sch \models R[\bar s]$, if $\Sch(R)$ is defined, and each attribute in $\bar s$ belongs to $\Sch(R)$ 
%The schema satisfies the constraint $R[*]$ if $\Sch(R)$ is defined.
%For an instance $I$ over a schema $\Sch$ and a set $\Sigma$ of tgds, egds, and structure constraints, 
%we write $(I, \Sch) \models \Sigma$ if $I$ satisfies each data constraint in $\Sigma$ and $\Sch$ satisfies 
%each structure constraint in $\Sigma$. 

\section{Procedures under static schemas}\label{sec:procedures}

In this section we formalize the notion of procedures that transform
data.  We view procedures as black boxes, and assume no knowledge
of or control over their inner workings.  Our reasoning about 
procedures is based on the following information: The input conditions, or
\emph{preconditions,} on the state of the data that must hold for a procedure to be applicable; the output conditions, or
\emph{postconditions,} on the state of the data that must hold 
after an application of the procedure; and the set of relations affected by the application. 
To specify that some %pieces 
of the data will not be deleted, 
we also allow the inclusion of some queries whose answer needs to be preserved during the application 
of the procedure. 

\begin{example}
\label{exa-proc-1} 
Let us return to the procedure $P_\text{migrate}$ outlined in Section \ref{sec:intro}. The intent of $P_\text{migrate}$ is to define migration of 
 data from relation \textit{EVisits} into \textit{LocVisits}. %This procedure 
 $P_\text{migrate}$ can be described by the following information: 
 
\smallskip
\noindent
\underline{Scope}: Since $P_\text{migrate}$ %the procedure 
migrates tuples into $\textit{LocVisits}$, we specify that the 
%\emph{scope} of the 
procedure only changes %is just 
this relation.

\smallskip
\noindent
\underline{Precondition}: We specify that $P_\text{migrate}$ %this procedure 
requires a schema with relations 
$\locvisits$ and $\evisits$, both with attributes \fid, \pid and \timestamp. 

\smallskip
\noindent
\underline{Postcondition}: After $P_\text{migrate}$ is applied, it must be that each tuple in \evisits 
is in \locvisits. 

\smallskip
\noindent
\underline{Preserved queries}: The existing tuples in \locvisits are not deleted 
during the migration process. We specify this by stating that the answers to the query 
\texttt{SELECT facility, ssn, timestp FROM locVisits} are preserved in each application of $P_\text{migrate}$.  
\end{example}

In the following we present notation for formally defining these types of procedures. 
We start by introducing ``structure constraints,'' which we use to define the 
scopes of procedures. We will also use these in Section \ref{sec:dynamic}, when 
working with schema-altering procedures. % that alter schemas. % of databases. 

\subsection{Structure Constraints}

A structure constraint is a formula of the form 
$R[\bar s]$ or $R[*]$, where $R$ is a relation symbol,  
$\bar s$ is a tuple of attributes names from $\A$, and $*$ is a symbol not in 
$\mathcal A$ or $\mathcal R$ intended to function as a 
wildcard. 
A schema $\Sch$ satisfies a structure constraint $R[\bar s]$, denoted 
by $\Sch \models R[\bar s]$, if $\Sch(R)$ is defined
 and each attribute in $\bar s$ belongs to $\Sch(R)$. 
The schema satisfies the constraint $R[*]$ if $\Sch(R)$ is defined.

Given a set $\C$ of structure constraints and a  schema $\Sch$, we denote by 
$Q_{\Sch \setminus \C}$ the conjunctive query formed by the conjunction of the following atoms: 
\begin{itemize}
\item For each relation $R$ such that 
$\Sch(R) = \{ A_1,\dots,A_m \}$ but $R$ is not mentioned in $\C$, 
$Q_{\Sch \setminus \C}$ includes an atom $R(A_1:z_1,\dots,A_m:z_m)$, where  $z_1,\dots,z_m$ are fresh variables.
\item For each %relation 
$T$ %such that $T$ is 
mentioned in $\C$ but such that $T[*]$ is not in $\C$, 
$Q_{\Sch \setminus \C}$ includes an atom
$T(B_1:z_1,\dots,B_k:z_k)$, where $\{B_1,\dots,B_k\}$ is the set of
all the attributes in $\Sch(T)$ that are not mentioned in any constraint of the form $T[\bar s]$ in $\C$, 
and $z_1,\dots,z_k$ are fresh variables.
\end{itemize}

Intuitively, $Q_{\Sch \setminus \C}$ is intended to retrieve the projection of the entire database over all the 
relations and attributes not mentioned in $\C$.  Note that $Q_{\Sch \setminus \C}$ is unique up to the renaming of
 variables and order of conjuncts.  

As an example, let schema $\Sch$ have relations $R$, $S$, and $T$, such that 
 $R$ has attributes $A_1$ and $A_2$, $T$ has attributes $B_1$, $B_2$, and $B_3$, and 
 $S$ has $A_1$ and $B_1$. Let set $\C$ comprise constraints $R[*]$ and $S[B_1]$. 
 Then $Q_{\Sch \setminus C}$ is the query $T(B_1: z_1, B_2:  z_2, B_3: z_3) \wedge S(A_1: w_1)$. 

\subsection{Formal Definition of Procedures}

We define procedures w.r.t. %with respect to 
a class $\Const$ of FO constraints and a class $\Que$ of queries, but we mostly 
consider tgds, egds, structure constraints, and CQ %conjunctive
 queries.
\begin{definition}
A {\em procedure} $P$ over $\Const$ and $\Que$ is a tuple $(\Scope,\Cpre,\Cpost,\Q_\safe)$, 
where 
\begin{itemize}
\item $\Scope$ is a set of structure constraints that defines the scope (i.e., the relations and attributes) within which  
the procedure operates; 
\item $\Cpre$ and $\Cpost$ are constraints in 
$\Const$ that describe the pre- and postconditions of $P$, %the procedure,
respectively; and 
\item $\Q_\safe$ is a set of queries in $\Que$ that serve as a preservation %safety 
guarantee for the procedure. \end{itemize}
\end{definition}

\begin{example}[Example \ref{exa-proc-1} continued]
\label{exa-proc-2} 
The procedure $P_\text{migrate}$ is formally described as follows:  

 \smallskip
\noindent
\underline{$\Scope$}: The scope is the constraint $\textit{LocVisits}[*]$. 

\smallskip
\noindent
\underline{$\C_\pre$}: We use the structure constraints $\textit{EVisits}[\fid,$ $\pid,$ $\timestamp]$ and $\textit{LocVisits}$ $[\fid,$ $\pid,$  $\timestamp]$, to ensure that the data have the correct attributes.  

\smallskip
\noindent
\underline{$\C_\post$}: The postcondition comprises the tgd
\vspace{-5pt}
\begin{multline*}
\textit{EVisits}(\fid:x, \pid:y, \timestamp:z) \rightarrow \\ 
\textit{LocVisits}(\fid:x, \pid:y, \timestamp:z). 
\end{multline*}
 This says that, once $P_\text{migrate}$ has been applied, 
the projection of \textit{EVisits} over \fid, \pid, and \timestamp is a subset of the respective projection of \textit{LocVisits}.

%\smallskip
\noindent
\underline{$\Q_\safe$}: We use the query  
$\textit{LocVisits}(\fid:x,\pid:y,\timestamp:z)$, whose intent is to to state that all the answers to this query on \textit{LocVisits} that are present  before the application of $P_\text{migrate}$  
must be preserved. 
\end{example} 

\medskip
\noindent
\textbf{Semantics:} A procedure $P = (\Scope,\Cpre,\Cpost,\Q_\safe)$ is \emph{applicable} on an instance $I$ over schema $\Sch$ if 
%(1) The query  $Q_{\Sch \setminus \Scope}$ and each query in $\Q_\safe$ are compatible with $\Sch$, and 
(1) Each query in $\Q_\safe$ is compatible with $\Sch$; and 
(2) $I \models \C_\pre$. 
We can now proceed with the semantics of procedures. 
\begin{definition} 
\label{sec-procedure-semantics}
%Let $I,\Sch$ and $I',\Sch'$ be two instances with their corresponding schemas.  
Let $I$ be an instance over schema $\Sch$. 
An instance $I'$ over $\Sch$ is a {\em possible outcome} of applying 
procedure $P$ to $I$ if all of the following holds: 
\begin{enumerate}
\item $P$ is applicable on $I$; 
\item $I' \models \C_\post$; 
\item The answers of the query $Q_{\Sch \setminus \Scope}$ do not change: 
$Q_{\Sch \setminus \Scope}(I) = Q_{\Sch \setminus \Scope}(I')$; and 
\item The answers to each query $Q$ in $\Q_\safe$ over $I$ are preserved: 
$Q(I) \subseteq Q(I')$. 
\end{enumerate} 
\end{definition}

\begin{example}[Example \ref{exa-proc-2} continued]
\label{exa-proc-3}
Recall procedure $P_\text{migrate} = (\Scope,$ $\Cpre,$ $\Cpost,$ $\Q_\safe)$  defined in Example \ref{exa-proc-2}. %; we refer to $P_\text{migrate}$ as $P$ here. 
Consider instance $I$ over schema $\Sch$ with relations \evisits and \locvisits, each 
with attributes \fid, \pid, and \timestamp, as shown in Figure \ref{fig-exa-1} (a). 
Note first that $P_\text{migrate}$ is indeed applicable on $I$. 
When applying $P_\text{migrate}$ to $I$, we know from $\Scope$ that the only relation whose content can change is 
\locvisits, while \evisits is the same across all possible outcomes. 
Furthermore, we know from $\Cpost$ that in all possible outcomes the projection of 
\evisits over attributes \fid, \pid, and \timestamp must be the same as the projection of 
\locvisits over the same attributes. Finally, from $\Q_\safe$ we know that 
the projection of \locvisits over these three attributes must be preserved. 

Perhaps the most obvious possible outcome of applying $P_\text{migrate}$ to $I$ is that of the instance $J_1$ in Figure \ref{fig-exa-1} (b), 
corresponding to the outcome where the tuple in \evisits that is not 
yet in \locvisits is migrated into the latter relation. However, since we assume no control over the actions performed by  $P_\text{migrate}$, 
it may well be that it is also migrating data from a different relation that we are not aware of, 
producing an outcome whose relation \evisits is the same as in $I$ and $J_1$, but 
\locvisits has additional tuples, as depicted in Figure \ref{fig-exa-1} (c). 

Figure \ref{fig-exa-1} (d) depicts a situation where the postcondition is satisfied and, in addition, the schema is also altered due to the application of the procedure. We will revisit this situation in Section \ref{sec:dynamic}. 
%Moreover,
%and again since we do not control the procedure $P$, 
%it may also be the case that the procedure alters the schema of \locvisits, adding an extra attribute \age, importing the information from an unknown source, as shown in Figure \ref{fig-exa-1} (d). 
\end{example}  

\begin{figure*}[t]
\centering
{\small
\begin{tabular}{cc}
\fbox{\parbox{0.42\textwidth}{
\centering
\begin{tabular}{c}
\evisits  \\ 
\begin{tabular}{|ccc|}
\hline
\fid & \pid & \timestamp \\
\hline 
1234 & 33 & {070916 12:00} \\
2087 & 91 & {090916 03:10} \\
\hline
\end{tabular} \\
\locvisits \\ 
\begin{tabular}{|ccc|}
\hline
\fid & \pid & \timestamp \\
\hline
1234 & 33 & {070916 12:00} \\
1222 & 33 & {020715 07:50} \\
\hline
\end{tabular}
 \\ 
(a) Instance $I$\\
\end{tabular}}} & 

\fbox{\parbox{0.42\textwidth}{
\centering
\begin{tabular}{c}
\evisits  \\ 
\begin{tabular}{|ccc|}
\hline
\fid & \pid & \timestamp \\
\hline 
1234 & 33 & {070916 12:00} \\
2087 & 91 & {090916 03:10} \\
\hline
\end{tabular} \\
\locvisits \\ 
\begin{tabular}{|ccc|}
\hline
\fid & \pid & \timestamp \\
\hline
1234 & 33 & {070916 12:00} \\
1222 & 33 & {020715 07:50} \\
2087 & 91 & {090916 03:10} \\
\hline
\end{tabular}
 \\ 
(b) Possible outcome $J_1$ of applying $P_\text{migrate}$ to $I$\\
\end{tabular}}}
\end{tabular}

\smallskip

\begin{tabular}{cc}
\fbox{\parbox{0.42\textwidth}{
\centering
\begin{tabular}{c}
\locvisits \\ 
\begin{tabular}{|ccc|}
\hline
\fid & \pid & \timestamp \\
\hline 
1234 & 33 & {070916 12:00} \\
1222 & 33 & {020715 07:50} \\
2087 & 91 & {090916 03:10} \\
4561 & 54 & {080916 23:45} \\
\hline
\end{tabular} \\ 
(c) relation \locvisits in $J_2$
\end{tabular}}} & 
\fbox{\parbox{0.50\textwidth}{
\centering
\begin{tabular}{c}
\locvisits \\ 
\begin{tabular}{|cccc|}
\hline
\fid & \pid & \timestamp & \age \\
\hline 
1234 & 33 & {070916 12:00} & 21\\
1222 & 33 & {020715 07:50} & 45 \\
2087 & 91 & {090916 03:10} & 82 \\
\hline
\end{tabular} \\ 
(d) relation \locvisits in $J_3$
\end{tabular}}} 
\end{tabular}
}
\vspace{-5pt}
\caption{Instance $I$ of Example \ref{exa-proc-3} (a), a complete possible outcome of applying $P_\text{migrate}$ to $I$ (b), and 
the relation \locvisits of two other possible outcomes, one in which \locvisits contains additional tuples not mentioned in \evisits (c), and one where an extra attribute is added to \locvisits (d).}
\label{fig-exa-1}
\end{figure*}

As seen in Example \ref{exa-proc-3}, in general the number of possible outcomes that result from applying a procedure %is executed 
is infinite. %For this reason, 
Thus, we are generally more interested in properties shared by all possible outcomes, which motivates the following definition. 

\begin{definition}
The {\em outcome set} of applying a procedure $P$ to $I$ is defined as the set:
\begin{multline*}
\outcome_P(I) =\\
 \{I' \mid I' \text{ is a possible outcome of applying } P \text{ to } I \}.
 \end{multline*}
 The outcome of applying 
a procedure $P$ to a set of instances $\Inst$ is  the union of the outcome sets of applying $P$ to all the instances in $\Inst$:  
$$\outcome_P(\Inst) = \bigcup_{I \in \Inst} \outcome_P(I).$$ 
\end{definition}

Finally, since in general we are interested in (perhaps repeated) applications of multiple procedures, we extend the definitions  
to enable talking about the outcomes of sequences of procedures. 

\begin{definition}
The outcome of applying a sequence $\Pro_1,\dots,\Pro_n$ of procedures to instance 
$I$ is the set 
\begin{multline*}
\outcome_{\Pro_1,\dots,\Pro_n}(I) = \\ 
\outcome_{P_n}(\outcome_{P_{n-1}}(\cdots (\outcome_{P_1}(I)) \cdots )).
\end{multline*}
\end{definition}

\section{Fundamental Decision Problems}
\label{sec-fundamental-problems}

As promised, we begin our study with two decision problems on outcomes of sequences of procedures. 

\subsection{Applicability of Procedures}

In our proposed framework, the focus is on transformations of data sets given by sequences of procedures. 
Because we treat procedures as black boxes, the only description we have of the 
results of these transformations is that they ought to satisfy the output constraints of the procedures. 
In this situation, how can one guarantee that all the procedures in a given sequence will be applicable? 
Suppose that, for instance, we wish to apply procedures $P_1$ and $P_2$ 
to instance $I$ sequentially: First $P_1$, then $P_2$. 
The problem is that, since 
output constraints do not fully determine the outcome of $\Pro_1$ on $I$, we cannot 
immediately guarantee that this outcome %the outcome of applying $P_1$ over $I$ 
%is an instance that 
would satisfy  
the preconditions of $P_2$. 

Given that the set of outcomes is in general infinite, our focus is on  
guaranteeing that \emph{any} possible outcome of applying $P_1$ to $I$ will satisfy the preconditions 
of $P_2$. This gives rise to our first problem of interest:  
\vspace{-12pt}
\begin{center}
\fbox{
\begin{tabular}{ll}
\multicolumn{2}{l}{\textsc{Applicability:}}\\
\vspace{-8pt}
& \\
\textbf{Input}: & A sequence $\Pro_1,\dots,\Pro_n$ of procedures  \\
&  and a schema $\Sch$; \\
\textbf{Question}: & Can $P_n$ be applied to each instance \\
& in $\outcome_{\Pro_1,\dots,\Pro_{n-1}}(I)$, regardless \\ %no matter \\
& of the choice of instance $I$ of $\Sch$? % is chosen?
\end{tabular}
}
\end{center}
%It is not difficult to show that t
The \textsc{Applicability} problem is intimately related to the 
problem of implication of dependencies, defined as follows: Given a set $\Sigma$ of dependencies  
and an additional dependency $\lambda$, is it true that all the instances that satisfy $\Sigma$ also satisfy $\lambda$ ---  
that is, does $\Sigma$ imply $\lambda$?  
Indeed, consider a class $\mathcal L$ of constraints for which the implication problem is known 
to be undecidable. 
Then one can easily show that the applicability problem is also 
%The applicability problem is clearly 
undecidable for those procedures whose pre- and postconditions 
are in $\mathcal L$: Intuitively, if we let $\Pro_1$ be a procedure with a set $\Sigma$ 
of postconditions, and $\Pro_2$ a procedure with a dependency $\lambda$ as a precondition, then it is not difficult 
to come up with proper scopes and preservation queries so that the set 
$\outcome_{\Pro_1}(I)$ satisfies $\lambda$ for every instance $I$ over schema $\Sch$ if and only if 
$\lambda$ is true in all the instances that satisfy $\Sigma$. 

However, as the following result shows, the \textsc{Applicability} problem 
is undecidable already for very simple procedures, and even when we consider the \emph{data-complexity} 
view of the problem, that is, when we fix the procedure and take a particular input instance. 

\begin{proposition}
\label{prop-ap-und}
There are fixed procedures $\Pro_1$ and $\Pro_2$ that only use tgds for their constraints, and such that the following problem 
is undecidable: Given an instance $I$ over schema $\Sch$, is it true that all the instances in 
$\outcome_{\Pro_1}(I)$ satisfy the preconditions of $\Pro_2$? 
\end{proposition}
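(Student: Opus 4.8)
The plan is to reduce from the undecidability of Boolean conjunctive query answering under a \emph{fixed} set of tgds, viewed in \emph{data complexity}. Concretely, I would invoke the standard fact that one can fix a set $\Sigma$ of tgds simulating the run of a universal Turing machine $U$ on the configuration encoded by an input database, together with a fixed Boolean conjunctive query $q$ detecting an accepting configuration, so that for every database $D$ one has $\Sigma \cup D \models q$ if and only if $U$ accepts the input encoded by $D$. Since acceptance of a universal machine is undecidable, the entailment problem ``given $D$, does $\Sigma \cup D \models q$?'' is undecidable even though $\Sigma$ and $q$ are fixed. Establishing this fact --- the Turing-machine simulation by a fixed rule set and the equivalence between entailment and acceptance --- is the one genuinely technical ingredient; I would either cite it from the existential-rules / datalog$^\pm$ literature or sketch the encoding, and treat everything else as bookkeeping over the outcome semantics.

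Given such $\Sigma$ and $q$, I would build $\Pro_1$ by taking $\Cpost$ to be $\Sigma$, letting $\Scope$ contain the wildcard constraint $R[*]$ for every relation of the simulation schema (so that $Q_{\Sch \setminus \Scope}$ is the empty conjunction and condition~(3) of the outcome definition is vacuous), and letting $\Q_\safe$ contain the total query $R$ for each such relation (so that condition~(4) forces $R^I \subseteq R^{I'}$, i.e.\ $I' \supseteq I$). The precondition $\Cpre$ of $\Pro_1$ is set to a trivially satisfied tgd, so $\Pro_1$ is applicable on every input. With these choices the outcome set is exactly $\outcome_{\Pro_1}(I) = \{ I' \mid I' \supseteq I \text{ and } I' \models \Sigma \}$, that is, the set of all models of $\Sigma$ that extend the encoded input $I = D$.

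For $\Pro_2$ I only need to fix its precondition $\Cpre$ to be the single tgd $\lambda :\ \mathit{Start}(u) \to q$, where $\mathit{Start}$ is an atom the encoding always places in $D$. Because every outcome $I'$ contains $I$ and hence the $\mathit{Start}$ fact, the body of $\lambda$ always fires, so $I' \models \lambda$ iff $I' \models q$. Consequently ``every instance in $\outcome_{\Pro_1}(I)$ satisfies the precondition of $\Pro_2$'' holds iff every model of $\Sigma$ extending $D$ satisfies $q$, which is precisely $\Sigma \cup D \models q$. By the fixed-rule undecidability quoted above, this is undecidable as $D$ ranges over databases, while $\Pro_1$ and $\Pro_2$ are fixed and use only tgds as their pre- and postconditions. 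This yields the claim.

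The step I expect to be the main obstacle is exactly the appeal to the fixed-$\Sigma$ undecidability: the paragraph preceding the proposition only uses a class $\mathcal L$ whose implication problem is undecidable, which suffices when the constraints may vary, but here I need the sharper statement that a single fixed tgd program already has undecidable entailment in data complexity. I would also double-check the universality argument behind ``$\Sigma \cup D \models q$ iff the chase of $D$ satisfies $q$'' in case instances are required to be finite, since then one must argue that the simulation makes unrestricted and finite entailment coincide for the query $q$ --- a standard but necessary remark. By contrast, the manipulation of $\Scope$ and $\Q_\safe$ to pin $\outcome_{\Pro_1}(I)$ down to the supersets of $I$ that model $\Sigma$ is routine given the outcome semantics.
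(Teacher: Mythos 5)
Your construction of $\Pro_1$ and $\Pro_2$ is fine as bookkeeping: with wildcard scope and total preservation queries, $\outcome_{\Pro_1}(I)$ is indeed the set of all supersets of $I$ satisfying $\Cpost$, and checking the precondition of $\Pro_2$ on all outcomes becomes an entailment question. The structure of your reduction is in fact parallel to the paper's. The genuine gap is the source problem you reduce from. Instances in this framework (and hence all outcomes) are \emph{finite} databases, so the question ``do all instances in $\outcome_{\Pro_1}(I)$ satisfy $\lambda$?'' is \emph{finite} entailment of $q$ under $\Sigma$, not unrestricted entailment. The fixed-rule Turing-machine simulation you invoke establishes undecidability of unrestricted entailment only: when the machine does not halt, the witness that entailment fails is the infinite chase, and there need not exist any \emph{finite} model of $\Sigma$ extending $D$ that avoids $q$. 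In a finite model the simulation necessarily conflates elements, and the tgds can then force spurious facts, including a match for $q$; so finite entailment can hold even when the machine diverges. Arbitrary tgds are not finitely controllable, so the coincidence of finite and unrestricted entailment that you defer to ``a standard but necessary remark'' is precisely the point that fails --- it is the crux of the proof, not a detail.

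This is exactly why the paper does not use a Turing-machine simulation: its proof reduces from the \emph{embedding problem for finite semigroups} (undecidable by \cite{KPT06}, following the technique of \cite{APR13}), a problem that is natively about finite structures. There, a finite outcome violating the precondition of $\Pro_2$ directly yields a finite semigroup embedding (take one representative per equivalence class of the relation $E$), and conversely an embedding into a finite semigroup yields a finite outcome witnessing failure --- both directions live entirely in the finite. To repair your argument you would need to replace your ``standard fact'' with the undecidability of \emph{finite-model} CQ entailment under a fixed set of tgds in data complexity; the known proofs of that statement go through finite-structure problems such as semigroup embedding, i.e., essentially through the argument the paper gives, so the appeal to the TM simulation cannot be made to work as stated.
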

%\jr{Write proof, follows from results in \cite{APR13,KPT06}, see if something can be done with other combinations of dependencies.}

The proof of Proposition \ref{prop-ap-und} is by reduction from the embedding problem for finite semigroups, 
shown to be undecidable in \cite{KPT06}. %, and it is itself an adaptation of the proof of Theorem 7.2 in \cite{APR13}. 

There are several lines of work aiming to identify practical classes of constraints for which the implication problem 
is decidable, and we believe that the corresponding results can be applied in our framework. 
However, we  opt for a stronger restriction, %Since all of our examples so far use only structure constraints 
%as preconditions, for the remainder of the paper 
by focusing on procedures without preconditions. 
%whose preconditions comprise 
%structure constraints. 
In this setting, we have the following trivial result. 

%\begin{proposition}
%\label{ref-rep-decidable}
%\textsc{Applicability} is in polynomial time for sequences of relational procedures whose preconditions contain only structure constraints.
%\end{proposition}

\begin{fact}
\label{ref-ap-trivial}
If $P_1,\dots,P_n$ do not have preconditions, then {\sc Applicability} is always true, regardless of the schema. 
\end{fact}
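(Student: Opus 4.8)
The plan is to unfold the definition of \textsc{Applicability} and check that, once preconditions are removed, nothing can prevent $P_n$ from being applied. Concretely, I must show that for every schema $\Sch$, every instance $I$ over $\Sch$, and every instance $I'$ in $\outcome_{P_1,\dots,P_{n-1}}(I)$, the procedure $P_n$ is applicable on $I'$. By the definition of applicability this splits into two conditions: (1) every query in the preservation set $\Q_\safe$ of $P_n$ is compatible with the schema of $I'$; and (2) $I' \models \Cpre$, for the precondition $\Cpre$ of $P_n$.

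The hypothesis bites on condition (2), which is the only substantive one. Since $P_n$ carries no precondition, $\Cpre$ is (equivalent to) the empty set of constraints, and every instance trivially satisfies it; hence (2) holds for $I'$ no matter what $I'$ is. This is the heart of the matter: preconditions are the sole genuine barrier to applicability, and the hypothesis eliminates them.

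Condition (1) I would dispatch using the static-schema regime. By Definition~\ref{sec-procedure-semantics}, any possible outcome of applying a procedure to an instance over $\Sch$ is itself an instance over $\Sch$, so each map $\outcome_{P_i}$ sends a set of instances over $\Sch$ to a set of instances over $\Sch$. A straightforward induction on the length of the prefix $P_1,\dots,P_{n-1}$ then shows that every $I'$ in $\outcome_{P_1,\dots,P_{n-1}}(I)$ is again over $\Sch$. Consequently the schema of $I'$ is always $\Sch$, so compatibility of $P_n$'s preservation queries is determined by $\Sch$ alone and is independent of the particular instance $I'$: it holds for every reachable $I'$ precisely when it holds for $\Sch$, which is a standing requirement for $P_n$ to be invoked over $\Sch$ at all. (If $\outcome_{P_1,\dots,P_{n-1}}(I)$ happens to be empty, the ``for each instance'' requirement is vacuously met, so this case needs no separate treatment.)

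I expect no genuine obstacle here, consistent with the result being stated merely as a fact. The only point deserving care is the bookkeeping around condition (1): one must remember that applicability demands schema-compatibility of the preservation queries \emph{in addition} to satisfaction of the precondition, and observe that schema invariance makes that demand instance-independent, so it cannot be newly violated by any instance produced along the sequence. No property of the constraint language (tgds, egds, or structure constraints) is used.
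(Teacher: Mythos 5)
Your proof is correct and is exactly the argument the paper has in mind: the paper states this as a Fact with no proof at all (deeming it trivial), and the intended justification is precisely your unfolding of the definition --- empty preconditions make condition (2) vacuous, and the static-schema semantics makes every instance in $\outcome_{P_1,\dots,P_{n-1}}(I)$ again an instance over $\Sch$, so nothing produced along the sequence can newly violate applicability. Your explicit treatment of condition (1), noting that compatibility of $P_n$'s preservation queries with $\Sch$ is an instance-independent, schema-level standing requirement (and that an empty outcome set makes the claim vacuous), is if anything more careful than the paper's bare assertion.
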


When studying schema-altering procedures %that alter schemas 
in Section \ref{sec:dynamic}, 
we extend this class of procedures to include structure constraints as a means for 
specifying that certain procedures must be applied on schemas with certain properties. We do not do it here because 
structure constraints as preconditions do not make much sense under the static semantics of procedures. 

\subsection{Nonemptiness}
\label{ref-sec-nonempt}

%The other important problem at the time of considering the application of (sequences of) procedures is 
%to be able to gauge whether the outcome of such a sequence is nonempty. 
The other important problem is determination of whether
  the outcome of a sequence of procedures will be nonempty. 
We remark that even without preconditions, 
the outcome of a procedure may be empty if it is not possible to transform an instance in a way that would satisfy  
the postconditions of a procedure (and to ensure that the scope and preservation queries are respected). 
Perhaps surprisingly, we can show that this problem is undecidable even if we just have one fixed procedure. 

\begin{proposition}
\label{prop-rep-undec-weak}
There exists a procedure $\Pro$ that does not use preconditions and 
uses only tgds in its postconditions, such that the following is 
undecidable: Given an instance $I$, is the set 
$\outcome_{\Pro}(I)$ nonempty? 
\end{proposition}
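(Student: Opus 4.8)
The plan is to prove undecidability by reduction from the halting problem for deterministic Turing machines, encoding the machine in the input instance and letting a single fixed procedure $\Pro$ simulate its computation through its postconditions; by construction $\Pro$ has no preconditions and uses only tgds in $\Cpost$, as required. Concretely, I would fix the scope $\Scope$ so that a group of ``rule'' relations stays frozen (outside the scope) while the relations hosting the simulated run lie inside the scope, hence are modifiable. Given a machine $M$, the instance $I$ stores the transition table of $M$ in the frozen relations (a finite table giving, for each state/symbol pair, the next state, the written symbol, and the head move), together with the finite tape alphabet and state set. A preservation query in $\Q_\safe$ plants the initial configuration of $M$ in the modifiable relations and forces it to survive into every outcome. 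The postcondition tgds then do two jobs: \emph{generation} rules, whose right-hand sides are existential atoms over modifiable relations, create a successor configuration from every non-final configuration (fresh time points, tape cells, and head positions are introduced by the existential quantifiers), while \emph{consistency} rules forbid malformed configurations.

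The technical crux is that we may use only tgds, so we cannot write egds to force the simulation to be single-valued or to line up the space--time cells. The framework-specific observation that rescues the reduction is that a tgd whose right-hand side is an existential atom over a frozen relation that is \emph{empty} in $I$ behaves exactly like a denial constraint: since no tuple can ever be added to a frozen relation, such a tgd is satisfied by an outcome $I'$ only if its left-hand side is never matched in $I'$. Writing $\mathit{Empty}$ for such a frozen relation, a rule $\phi \rightarrow \mathit{Empty}$ forbids the pattern $\phi$. Using denials of this shape, together with the fact that symbols and states range over a \emph{finite} frozen alphabet (so inequality between two alphabet elements is itself a finite frozen relation), I can enforce that each cell carries a single symbol, that the head occupies a single place, and that every local space--time window agrees with the frozen transition table. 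Finally I add one denial forbidding the halting state to appear at all.

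With this in place the correspondence is: if $M$ does not halt, the chase of $\Cpost$ from the planted initial configuration is a model of all the postconditions in which the forbidden halting pattern never occurs, witnessing $\outcome_{\Pro}(I) \neq \emptyset$; if $M$ halts, determinism forces the halting state (a frozen alphabet \emph{constant}, hence preserved by every homomorphism) into the canonical chase and therefore, by universality of the chase, into every candidate outcome, triggering the halt-denial and leaving $\outcome_{\Pro}(I) = \emptyset$. Thus nonemptiness of $\outcome_{\Pro}(I)$ is equivalent to non-halting of $M$, which is undecidable, and the procedure $\Pro$ is fixed throughout.

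The step I expect to be the main obstacle is precisely the faithful simulation of the two-dimensional space--time structure using tgds alone: without egds the naive generation rules build a \emph{tree} of configurations rather than a grid, so the ``cell directly above'' is not identified across consecutive steps. I plan to handle this with the standard existential-rule gadget that generates, in one rule firing, a cell together with its right neighbor, its up neighbor, and the shared corner, so that each elementary square commutes by construction, and to use denial constraints to rule out the malformed gluings. To sidestep the distinction between finite and infinite outcomes, I would either permit outcomes to be infinite, in which case the chase itself is the witnessing model, or, if outcomes must be finite, reduce instead from the undecidable bounded-space problem, where a non-halting bounded-space run folds into a finite looping model. Checking that these local commuting squares assemble into a faithful global simulation, and that no spurious model short-circuits the reduction, is where the real work lies.
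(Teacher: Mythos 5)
Your key framework-specific observation---that a tgd whose head lands in a relation that is out of scope and empty in $I$ behaves as a denial constraint---is exactly the mechanism the paper itself uses: its procedure takes the semigroup tgds of Proposition~\ref{prop-ap-und} plus the tgd $F(x)\rightarrow R(x)$ with $R$ outside the scope, so that the ``violation-collecting'' rule can never fire in any outcome. The gap in your proposal lies in the choice of source problem. In this framework an outcome is an instance, i.e.\ a \emph{finite} structure, so ``$\outcome_{\Pro}(I)$ nonempty'' is a finite-satisfiability question. Your primary argument for the non-halting direction exhibits the chase as the witnessing model, but for a non-halting machine that chase is infinite and hence not a valid outcome; for a machine that runs forever in \emph{unbounded} space there is no finite faithful simulation at all, so the intended equivalence ``nonempty iff $M$ does not halt'' breaks. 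Your first fallback (allowing infinite outcomes) proves a different statement, not the proposition.

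Your second fallback (reducing from machines that loop in bounded space, which is indeed undecidable) is in principle viable, but it leaves unproven precisely the crux: that when $M$ uses unbounded space, \emph{no} finite model of the postconditions exists, and that for halting machines no spurious finite model escapes the halt-denial. In an arbitrary finite model, the elements introduced by existential heads (cells, time points, configurations) may be identified with one another in ways that are not foldings of the actual run; your denials can compare the frozen alphabet and state constants attached to such elements, but nothing lets you express inequality between the existentially created grid elements themselves, so excluding every spurious quotient is a substantial finite-model argument, not a routine verification---you flag it yourself as ``where the real work lies.'' This is exactly the difficulty the paper sidesteps by reducing from the embedding problem for finite semigroups \cite{KPT06}: that problem is by definition about the existence of a \emph{finite} extension satisfying total-and-associative tgds, so it matches the finite-instance semantics of outcomes directly, with no space--time grid and no spurious-model analysis. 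As written, your proposal is an outline whose central lemma is missing, whereas the paper's reduction (modulo minor typos in its write-up) is short and complete.
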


%The reason we view Proposition \ref{prop-rep-undec} as a negative result is because it rules out the 
%possibility of using any ``reasonable'' representation system. Indeed, one would expect that deciding non-emptiness 
%should be decidable in any reasonable way of representing infinite sets of instances. 
%Proposition \ref{prop-rep-undec} is probably not surprising, since reasoning about tgds in general is 
%known to be a hard problem. Perhaps more interestingly, in our case one can show that the above fact remains true even if 
%one allows only \emph{acyclic} tgds, which are arguably one of the most well-behaved classes of dependencies in the literature. 
%The idea behind the proof is that one can simulate cyclic tgds via procedures with only acyclic tgds and no scope. 
The proof of this result makes use of arbitrary tgds. What is even more striking is that we can 
reproduce the undecidability proof even when considering acyclic tgds, albeit this time we need two fixed procedures. 

\begin{proposition}
\label{prop-rep-undec}
There exist procedures $\Pro_1$ and $\Pro_2$ that do not use preconditions and 
only use acyclic tgds in their postconditions, such that the following problem is 
undecidable: Given an instance $I$, is the set 
$\outcome_{\Pro_1,\Pro_2}(I)$ nonempty? 
 \end{proposition}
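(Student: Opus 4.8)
The plan is to reduce from the (undecidable) nonemptiness problem of Proposition~\ref{prop-rep-undec-weak}, whose single procedure $\Pro = (\Scope,\Cpre,\Cpost,\Qsafe)$ has no precondition and carries a set $\Sigma$ of (possibly cyclic) tgds as its postcondition $\Cpost$. The key observation is that acyclicity is a condition on each \emph{individual} procedure's postcondition, so a cyclic set of dependencies can be broken across two procedures. Concretely, I would first decompose every tgd $\delta : \phi_\delta(\bar x,\bar y) \to \exists \bar z\, \psi_\delta(\bar x,\bar z)$ of $\Sigma$ through a fresh relation $A_\delta$ holding the frontier $\bar x$, replacing $\delta$ by the pair $\delta' : \phi_\delta \to A_\delta(\bar x)$ and $\delta'' : A_\delta(\bar x) \to \exists \bar z\, \psi_\delta(\bar x,\bar z)$. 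Collecting the ``input'' pieces into $\Sigma_1 = \{\delta'\}$ and the ``output'' pieces into $\Sigma_2 = \{\delta''\}$ yields two sets that are each acyclic by construction: in $\Sigma_1$ every right-hand side is a fresh auxiliary relation and every left-hand side is an original relation, and symmetrically in $\Sigma_2$, so no relation occurs on both sides of either set, making the associated dependency graph of each set edgeless in one direction and hence acyclic. Since $\delta'$ and $\delta''$ together entail $\delta$, any instance satisfying $\Sigma_1 \cup \Sigma_2$ satisfies $\Sigma$ after projecting away the auxiliary relations, and conversely any model of $\Sigma$ extends to a model of $\Sigma_1 \cup \Sigma_2$ by populating each $A_\delta$ with exactly the frontier tuples on which $\phi_\delta$ fires.

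I would then set $\Pro_1 = (\Scope',\Cpre,\Sigma_1,\Qsafe)$, where $\Scope'$ extends $\Scope$ by the wildcard constraints $A_\delta[*]$ so that the auxiliary relations may be populated, and $\Pro_2 = (\emptyset,\Cpre,\Sigma_2,\emptyset)$ with empty scope (both with empty precondition). The empty scope is crucial: by condition~3 of the outcome semantics, $Q_{\Sch \setminus \emptyset}$ retrieves the full content of every relation, so every outcome $K$ of $\Pro_2$ on an instance $J$ must agree with $J$ on all relations, i.e.\ $K = J$. Hence $\outcome_{\Pro_2}(J)$ is nonempty precisely when $J \models \Sigma_2$: the second procedure acts purely as a filter that cannot repair violations by adding tuples. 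Working over the schema $\Sch^+$ obtained by adding the initially empty auxiliary relations, and mapping an input $I$ to the instance $I^+$ equal to $I$ with empty $A_\delta$'s, one then argues
$$\outcome_{\Pro_1,\Pro_2}(I^+) \neq \emptyset \iff \exists J\,\big(J \models \Sigma_1 \cup \Sigma_2 \text{ respecting } \Scope' \text{ and } \Qsafe\big) \iff \outcome_{\Pro}(I) \neq \emptyset,$$
the last equivalence using the projection correspondence above together with the fact that the auxiliary relations lie in $\Scope'$ and therefore do not interfere with the preserved query $Q_{\Sch \setminus \Scope}$ or with $\Qsafe$. Both $\Pro_1$ and $\Pro_2$ are fixed, use no preconditions, and use only acyclic tgds, so undecidability transfers.

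The main obstacle I expect is the faithfulness of the decomposition under the \emph{outcome} semantics rather than under plain logical entailment. One direction is immediate ($\delta',\delta'' \models \delta$), but the other requires showing that a single intermediate instance $J$ can simultaneously be a genuine $\Pro_1$-outcome satisfying the ``forward'' set $\Sigma_1$ (respecting $\Scope'$ and $\Qsafe$) and satisfy the ``backward'' set $\Sigma_2$ forced by $\Pro_2$; the delicate point is that the two sets must agree on the content of each $A_\delta$, which I handle by choosing $A_\delta$ minimally as the set of fired frontiers and invoking $I' \models \delta$ to conclude $J \models \delta''$. Care is also needed to confirm that the schema extension and the wildcard scopes $A_\delta[*]$ leave the behavior of $Q_{\Sch \setminus \Scope}$ and of the preservation queries $\Qsafe$ exactly as in the weak construction, so that the reduction mirrors $\outcome_{\Pro}(I)$ and not a strictly weaker or stronger condition.
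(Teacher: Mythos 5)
Your proof is correct, and it rests on the same underlying trick as the paper's: breaking a cyclic set of tgds across two procedures via fresh auxiliary relations, with the second procedure given empty scope so that it acts as a pure filter on the intermediate outcomes (exactly the mechanism the paper advertises in Example~\ref{exa-scope}). However, the realization is genuinely different. The paper proves Proposition~\ref{prop-rep-undec} by re-running its concrete semigroup-embedding construction: it takes the proof of Proposition~\ref{prop-rep-undec-nothingworks}, replaces the full tgd $D(x)\wedge D(y)\rightarrow G^{\text{binary}}(x,y)$ by its existential version, breaks cycles by renaming head relations into dummy copies ($E^d$, $G^d$, $G^{\text{binary}}$) inside $\Pro_1$, and lets $\Pro_2$ carry the full copy-back tgds $E^d(x,y)\rightarrow E(x,y)$, etc. You instead give a black-box reduction from Proposition~\ref{prop-rep-undec-weak}: every tgd $\phi_\delta\rightarrow\exists\bar z\,\psi_\delta$ is cut at its frontier through a fresh relation $A_\delta$, the frontier rules $\phi_\delta\rightarrow A_\delta(\bar x)$ go to $\Pro_1$ (whose scope is extended by $A_\delta[*]$), and the head rules $A_\delta(\bar x)\rightarrow\exists\bar z\,\psi_\delta$ go to the empty-scope filter $\Pro_2$. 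This buys modularity and generality: you effectively prove that a single procedure with arbitrary tgd postconditions can be simulated, with respect to nonemptiness, by two procedures with acyclic postconditions, so the semigroup encoding never has to be reopened; in particular, the delicate rule $F(x)\rightarrow R(x)$ of Proposition~\ref{prop-rep-undec-weak}, whose head lies outside the scope, is handled automatically, since $\Sigma_1$ forces $F\subseteq A_\delta$ while the filter forces $A_\delta\subseteq R$ with $R$ frozen by $\Pro_1$'s scope. What the paper's concrete route buys in exchange is that the same construction simultaneously serves the dynamic-semantics result (Proposition~\ref{prop-rep-undec-nothingworks}). One caveat you share with the paper rather than introduce: both arguments read the empty-scope semantics as ``$\outcome_{\Pro_2}(J)=\{J\}$ if $J\models\Sigma_2$ and $\emptyset$ otherwise,'' which is only literally guaranteed by the product query $Q_{\Sch\setminus\emptyset}$ when no relation of $J$ is empty; since the paper's own proof invokes the identical reading, this is a formalization artifact of the framework, not a gap in your argument.
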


The idea of the proof is to manipulate the scope of procedures to create cases when we force some procedures to 
take care of transformations specified in subsequent procedures. We illustrate this idea with the following exaple. 
\begin{example}
\label{exa-scope}
Consider two procedures $\Pro_1$ and $\Pro_2$, where $P_1 = (\Scope^1, \Cpre^1,\Cpost^1, \Q_\safe^1)$, with 
$\Scope^1 = \{R[*],T[*]\}$, $\Cpre^1 = \emptyset$, $\Cpost^1 = \{R(A:x) \rightarrow T(A:x)\}$, and  $\Q_\safe^1 = R(A:x) \wedge T(A:x)$;  
$\Pro_2$ has empty scope, preconditions, and preservation queries, and has the postcondition set $\{T(A:x) \rightarrow R(A:x) \}$. 
Let $I$ be an instance over schema with relations $R$ and $T$, each with attribute $A$. By definition, 
the possible outcomes of applying $\Pro_1$ to $I$ are all the instances $J$ that extend $I$ and satisfy 
$R(A:x) \rightarrow T(A:x)$. Now the set $\outcome_{P_1,P_2}(I)$ corresponds to all the instances 
$I'$ that extend $I$ and satisfy both 
%all instances in the set $\outcome_{P_1,P_2}(I)$ actually satisfy both 
$R(A:x) \rightarrow T(A:x)$ and $T(A:x) \rightarrow R(A:x)$. (In other words, 
we can use $\Pro_2$  to \emph{filter out} all those instances $J$ where $T^J  \not\subseteq R^J$.) Intuitively, this happens 
because the outcome set of applying $\Pro_2$ to any instance not satisfying $T(A:x) \rightarrow R(A:x)$ is empty, and 
we define $\outcome_{P_1,P_2}(I)$ as the union of all the sets $\outcome_{P_2}(K)$ for each %instance 
$K \in \outcome_{P_1}(I)$. 
\end{example} 
%By applying the idea of this example to the proof of Proposition \ref{prop-rep-undec}, we show: 

\medskip
\noindent
\textbf{Procedures with safe scope:}
We could continue restricting the types of constraints we allow in procedures (for example, 
nonemptiness is decidable if we allow procedures made just from full, acyclic constraints). 
However, we choose to adopt a different strategy: 
We restrict the interplay between the postconditions  of procedures, their scope, and their preservation queries. This will 
allow us to rule out the undesirable behaviours illustrated in Example \ref{exa-scope}, and will become 
crucial at the time of reasoning about sequences of procedures.

%Since restricting the classes of dependencies allowed in procedures 
%may not be enough to guarantee outcomes that can be represented by reasonable systems, we now adopt a different strategy: 
%We restrict the interplay between the postconditions  of procedures, their scope, and their preservation queries. This will 
%allow us to rule out the undesirable behaviours illustrated in Example \ref{exa-scope}. 
%
We say that procedure $P = (\Scope,\Cpre,\Cpost,\Q_\safe)$ has \emph{safe scope} if the following 
holds: 
\begin{itemize}
\item $\Cpre$ is empty; 
\item $\Cpost$ is an acyclic set of tgds; 
\item The set $\Scope$ contains exactly one constraint $R[*]$ for each relation $R$ that appears 
on the right-hand side of a tgd in $\Cpost$; and 
\item The set $Q_\safe$ contains one total query $R$ for each constraint $R[*]$ in $\Scope$. 
That is, it binds precisely all the 
relations in the scope of $P$. 
\end{itemize}
%(For instance, procedure $P$ in Example 
%\ref{exa-proc-1} is essentially a procedure with safe scope, as it can 
%easily be transformed into one by slightly altering the safety query.) 

%We also define a class of procedures that ensure that certain attributes or relations  
%be present in the schema. Formally, we say that a 
%procedure $P = (\Scope,\Cpre,\Cpost,\Q_\safe)$ is an \emph{alter-schema procedure} if the following holds: 
%\begin{itemize}
%\item Both $\Scope$ and $\Q_\safe$ are empty; and 
%\item $\Cpost$ is a set of structure constraints. 
%\end{itemize} 

%Let $\P^{\textit{safe},\textit{alter}}$ be the class of all the procedures that are either safe scope or alter-schema procedures. 
%The class $\P^{\textit{safe},\textit{alter}}$ allows for practically-oriented interplay between migration and schema-alteration tasks  
%and, as we will see in this section, is more manageable from the point of view of reasoning tasks, in terms of complexity. 
%To begin with, deciding the non-emptiness of a sequence of procedures is essentially tractable for $\P^{\textit{safe},\textit{alter}}$: 

Note that the procedure $P_\text{migrate}$ of Example \ref{exa-proc-2} is not a procedure with safe scope, but can 
easily be transformed into a procedure with safe scope. 
Once again we have an easy result that makes a case for the good behaviour of procedures with safe scope. 
%The proof 
 %is based on the idea of chasing instances with the dependencies in the procedures, in order to create 
% a minimal instance. We revisit this idea in the following sections, when considering querying the outcomes of procedures.  

\begin{proposition}
\label{prop-rep-safe-scope}
For every instance $I$ and sequence $\Pro_1,\dots,\Pro_n$ of procedures with safe scope, the set 
$\outcome_{\Pro_1,\dots,\Pro_n}(I)$, is not empty. 
\end{proposition}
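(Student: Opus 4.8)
The plan is to exhibit, for each stage of the sequence, one explicit possible outcome produced by chasing with the postconditions; nonemptiness then follows by a short induction on the length of the sequence. The whole argument rests on the observation that the safe-scope conditions are engineered precisely so that the chase result satisfies all four clauses of Definition~\ref{sec-procedure-semantics}.

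First I would treat a single procedure $P = (\Scope, \Cpre, \Cpost, \Q_\safe)$ with safe scope, applied to an arbitrary instance $J$ over the (fixed, since these are static-schema) schema $\Sch$ on which $P$ is applicable. Because $\Cpost$ is an acyclic set of tgds, its relation-dependency graph is a DAG, so chasing $J$ with $\Cpost$ terminates: processing the relations in topological order, each relation appearing on a right-hand side sits strictly above all relations on the matching left-hand sides, so every tgd fires only finitely often and the result $\mathrm{chase}(J,\Cpost)$ is a finite instance over $\Sch$. I claim $\mathrm{chase}(J,\Cpost) \in \outcome_P(J)$, and I would check the four conditions in turn. (1) Applicability holds because $\Cpre$ is empty, so $J \models \Cpre$ trivially, while the queries in $\Q_\safe$ are compatible with $\Sch$ by hypothesis. (2) By construction the chase halts at an instance satisfying $\Cpost$. (3) Here I use the alignment built into safe scope: $\Scope$ contains exactly one wildcard constraint $R[*]$ for every relation $R$ occurring on a right-hand side of $\Cpost$, and the chase only ever adds tuples to right-hand-side relations; hence every relation the chase modifies lies in $\Scope$, so all relations outside $\Scope$ are untouched. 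Since for a safe-scope procedure every mentioned relation carries $R[*]$, the second clause of the $Q_{\Sch \setminus \C}$ construction is vacuous and $Q_{\Sch \setminus \Scope}$ is exactly the query copying all non-scope relations; therefore $Q_{\Sch \setminus \Scope}(J) = Q_{\Sch \setminus \Scope}(\mathrm{chase}(J,\Cpost))$. (4) The chase never deletes or alters existing tuples, so $R^{J} \subseteq R^{\mathrm{chase}(J,\Cpost)}$ for every $R$; in particular each total query $R \in \Q_\safe$ is preserved.

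I would then conclude by induction on $n$. For $n = 1$ we have $\outcome_{\Pro_1}(I) \ni \mathrm{chase}(I,\Cpost^1) \neq \emptyset$. For the inductive step, $\outcome_{\Pro_1,\dots,\Pro_n}(I) = \bigcup_{J \in \outcome_{\Pro_1,\dots,\Pro_{n-1}}(I)} \outcome_{\Pro_n}(J)$; by the inductive hypothesis this union ranges over a nonempty index set, and for any such $J$ — an instance over the same schema $\Sch$ on which $\Pro_n$ is applicable, since $\Cpre^n$ is empty — the single-procedure argument gives $\mathrm{chase}(J,\Cpost^n) \in \outcome_{\Pro_n}(J)$, so the union is nonempty.

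I expect the only delicate point to be clause (3): one must verify that the safe-scope conditions make the scope coincide exactly with the set of relations the chase can modify, and that $Q_{\Sch \setminus \Scope}$ genuinely ignores those relations. This is precisely where the pathology of Example~\ref{exa-scope} — a tgd whose right-hand-side relation is not guarded by the scope, allowing a later procedure to filter the outcome set down to $\emptyset$ — is ruled out. The remaining points to state carefully are termination of the chase under the graph-based acyclicity notion, and the implicit assumption that $\Sch$ contains every relation and attribute mentioned in each $\Cpost^i$, so that $\Cpost^i$ is compatible with $\Sch$ and thus satisfiable.
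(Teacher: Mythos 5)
Your proposal is correct and takes essentially the same route as the paper: the paper's proof also chases $I$ successively with the postcondition sets $\Sigma_1,\dots,\Sigma_n$, noting that each chase result $I_{i+1}$ is a possible outcome of applying $P_{i+1}$ to $I_i$. You merely spell out the details the paper leaves implicit (chase termination via acyclicity, verification of the four outcome conditions, and the induction), which is a faithful elaboration rather than a different argument.
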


\section{Analyzing the outcomes of procedures} 
\label{analyze-sec}

We have seen that deciding properties of outcomes of sequences of procedures (or even of a single procedure)  
can be a nontrivial task. One of the reasons is that procedures do not completely define their outcomes: We 
do not really know the outcome of applying a sequence $\Pro_1,\dots,\Pro_n$ of procedures to an 
instance $I$, we just know that it will be an instance from the collection $\outcome_{\Pro_1,\dots,\Pro_n}(I)$. 
This collection may well be of infinite size, but can it still be represented finitely? The backdrop to this question is the variety of formalisms that have been %successfully 
developed by the community  for representing sets of instances, from tables with incomplete information \cite{IL84} to 
knowledge bases (see, e.g., \cite{BO15}). 
%database-theory 
%has  

We now focus on developing a 
representation for outcomes of (sequences of) procedures, so that the usual data-oriented tasks could be performed over these outcomes. 
In pursuing this objective, we also connect our framework with the important topics of %in the literature:
 knowledge bases and incomplete 
information. We also show how our %proposed 
framework can be used to formalize and study new natural problems related to these areas.

%%need to improve this explanation
Due to the results in the previous sections, we shall focus mostly on procedures with safe scope. 

\subsection{Representing Outcomes and Evolution of Knowledge Bases}

The first question we ask is related to the representation of the outcome of a sequence of procedures: 
Is it possible to represent the set $\outcome_{P_1,\dots,P_n}(I)$ for an instance $I$ and  
sequence $P_1,\dots,P_n$ of procedures? 

In representing instances, we use the notion of representation system, understood as finite representation of an 
infinite set of instances. Following \cite{IL84}, a \emph{representation system} is a set $\Reps$ of representatives and a 
function $\rep$ that assigns a set of instances to each representative in $\Reps$. For now, we will assume that the 
function $\rep$ is uniform, in the sense that for each $W \in \Reps$ the function $\rep(W)$ maps $W$ to instances of the same schema. 
We then say that a set $\mathcal I$ of instances can be represented by a representation system $(\Reps,\rep)$ 
if there is a representative $W \in \Reps$ such that $\rep(W) = \mathcal I$. 

%In general terms, a \emph{knowledge base} is a tuple $(I,\Gamma)$, where $I$ is a database instance and 
%$\Gamma$ is a set of constraints over the schema of $I$. This knowledge base naturally represents 
%a set $\rep(I)$ of instances, given by $\rep(I) = \{J \mid I \subseteq J$ and $J \models \Gamma\}$. 

To represent outcomes of procedures, we extend the usual notion of knowledge base, so that we can define a relational scope 
over knowledge bases. That is, we use the following representation system: 

\begin{definition}
A \emph{scoped knowledge base} (SKB) over a schema $\Sch$ is a tuple $\skb = (I,\Gamma, \Scope)$, where $I$ is  an 
instance over $\Sch$, $\Gamma$ is a set of constraints, and $\Scope$ is a set of relation names in $\Sch$. 
%, that is, such that $S(R)$ is defined for each $R \in \Scope$. 
%Scoped knowledge bases can be seen as a representation system. For an SKB $\skb = (I,\Gamma, \Scope)$, the 
Intuitively, %an SKB 
$\skb$ represents all possible extensions of $I$ that satisfy $\Gamma$ 
and does not modify relations not in the scope. That is, 
\begin{multline*}
\rep(\skb) = \{J \mid I \subseteq J, J \models \Gamma \\ \text{ and } R^I = R^J \text{ for each relation }R \notin \Scope\}.
\end{multline*}
\end{definition}
%For now just doing static schemas, this goes afterwards when we generalise supersets
%Note that, as usually, we do not impose a particular schema for $J$: it could incorporate more relations than those of $I$. 

The first observation is that SKBs are an appropriate representation system for capturing the outcomes of procedures with safe scope.

\begin{proposition}
\label{prop-outcome-fulltgds-justone}
Let $P = (\Scope, \emptyset,\Gamma, \Qsafe)$ be a procedure with safe scope, and $I$ an instance such that 
$P$ can be applied over $I$. Then $\outcome_P(I) = \rep(\skb)$, where $\skb$ is the scoped knowledge base $(I,\Gamma,\Scope)$. 
\end{proposition}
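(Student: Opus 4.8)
The plan is to prove the set equality $\outcome_P(I) = \rep(\skb)$ by establishing the two inclusions separately, in each direction matching the four conditions defining a ``possible outcome'' in Definition~\ref{sec-procedure-semantics} against the three conditions defining membership in $\rep(\skb)$. Throughout I would lean on the two structural features of safe scope that make the definitions align: that $\Scope$ consists of exactly one constraint $R[*]$ for each relation on the right-hand side of $\Gamma$, so that $Q_{\Sch\setminus\Scope}$ mentions precisely the relations \emph{outside} the scope (each with all of its attributes and fresh variables); and that $\Qsafe$ binds \emph{precisely} the in-scope relations, one total query $R$ apiece.

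For the inclusion $\outcome_P(I) \subseteq \rep(\skb)$, I would take a possible outcome $I'$ and verify the three defining properties of $\rep(\skb)$. First, $I' \models \Gamma$ is immediate, since $\Gamma = \Cpost$ and condition~(2) requires $I' \models \Cpost$. Second, for each relation $R \notin \Scope$ I would argue $R^I = R^{I'}$ from condition~(3): such an $R$ occurs in $Q_{\Sch\setminus\Scope}$ with all of its attributes, so preserving this query freezes these relations. Third, for each $R \in \Scope$, condition~(4) applied to the total query $R \in \Qsafe$ yields $R^I \subseteq R^{I'}$. Combining the two cases gives $I \subseteq I'$, which together with the frozen non-scope relations and $I' \models \Gamma$ is exactly what $\rep(\skb)$ demands.

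Conversely, for $\rep(\skb) \subseteq \outcome_P(I)$, I would take $J$ with $I \subseteq J$, $J \models \Gamma$, and $R^I = R^J$ for every $R \notin \Scope$, and verify the four conditions of Definition~\ref{sec-procedure-semantics}. Condition~(1) holds by hypothesis, since $P$ applies to $I$ and $\Cpre$ is empty. Condition~(2) is just $J \models \Gamma = \Cpost$. For condition~(3), $I$ and $J$ agree on every relation occurring in $Q_{\Sch\setminus\Scope}$, so the query returns the same answer on both. For condition~(4) I would use that $\Qsafe$ binds \emph{precisely} the scope relations: every query in $\Qsafe$ is a total query $R$ with $R \in \Scope$, and $I \subseteq J$ gives $R^I \subseteq R^J$, so each answer is preserved. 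Hence $J$ is a possible outcome. (Note that this direction is where the ``precisely'' is essential, since any additional preservation query could exclude some $J \in \rep(\skb)$.)

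The main obstacle is the faithful translation, in condition~(3), between the single conjunctive query $Q_{\Sch\setminus\Scope}$ and the individual non-scope relations. Because that query is a \emph{conjunction} of atoms whose variables are pairwise-disjoint and fresh, its answer on an instance is the Cartesian product of the non-scope relations. The backward reading is harmless, as exact agreement of the factors forces agreement of the product; but the forward step, ``$Q_{\Sch\setminus\Scope}(I) = Q_{\Sch\setminus\Scope}(I')$ implies $R^I = R^{I'}$ for each non-scope $R$,'' is clean precisely when those relations are nonempty, since one then recovers each factor by projecting the product onto its coordinates. I would isolate and dispatch the degenerate case of an empty non-scope relation here, either under the mild assumption that the instances are non-degenerate or by reading condition~(3) relation-by-relation, as its intended meaning (``freeze everything outside the scope'') dictates.
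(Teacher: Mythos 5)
Your proof is correct, and it takes a genuinely different route from the paper's. The paper dispatches this proposition in one line, as a corollary of Theorem~\ref{theo-outcome-fulltgds}, which in turn is derived from the closure result (Theorem~\ref{theo-skb-closed}) and its \removerels{} construction with chase-based soundness and completeness arguments. You instead verify the two inclusions directly against Definition~\ref{sec-procedure-semantics}, isolating exactly the two features of safe scope that make the definitions coincide (the scope is precisely the set of right-hand-side relations of $\Gamma$, and $\Qsafe$ consists of the total queries over the scope). Your route buys two things: it is self-contained, and it is in fact more general --- the proposition allows any procedure with safe scope, whose postconditions are acyclic but not necessarily \emph{full} tgds, whereas Theorems~\ref{theo-outcome-fulltgds} and \ref{theo-skb-closed} are stated for full tgds only, so the paper's one-line derivation covers, strictly speaking, only that case. (For a single application no chase is needed, which is why the direct argument escapes the restriction.) What the paper's route buys is economy: once the closure theorem is proved, the single-application statement costs nothing extra.

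Two further remarks. First, your last paragraph flags a real subtlety that the paper glosses over: condition~(3) of Definition~\ref{sec-procedure-semantics} equates the answers of the \emph{single} conjunctive query $Q_{\Sch \setminus \Scope}$, whose answer is the Cartesian product of the out-of-scope relations; when some out-of-scope relation of $I$ is empty, that product is empty on both sides and the equality no longer freezes the remaining out-of-scope relations, so the inclusion $\outcome_P(I) \subseteq \rep(\skb)$ can literally fail under the product reading. Reading condition~(3) relation-by-relation, as the paper's stated intuition (``retrieve the projection of the entire database over all the relations and attributes not mentioned'') dictates, is the right repair, and the paper's own proofs silently use that relation-wise reading as well. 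Second, a nitpick: your parenthetical claim that any \emph{additional} preservation query could exclude some $J \in \rep(\skb)$ is not accurate for the query class at hand, since conjunctive queries are monotone and $I \subseteq J$ already guarantees $Q(I) \subseteq Q(J)$ for every CQ $Q$. What safe scope must guarantee for your argument is that $\Qsafe$ contains \emph{at least} the total queries over the scope relations (used in your forward direction to get $I \subseteq I'$), not that it contains nothing else; this does not affect the validity of your proof, only of that aside.
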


However, we can do much more with SKBs, as this formalism is also appropriate for representing the outcomes of sequences of procedures with safe scope. 
We show this next for the case of procedures with safe scope whose %we also require procedures to have 
constraints are given by full tgds. 

\begin{theorem}
\label{theo-outcome-fulltgds}
Let $P_1,\dots,P_n$ be a sequence of procedures with safe scope, and such that each of $P_1,\dots,P_n$  uses only full tgds. 
Then for every instance $I$, the set $\outcome_{P_1,\dots,P_n}(I)$ can always be represented by an SKB. Furthermore, this 
SKB can be computed in exponential time from $P_1,\dots,P_n$ and $I$.  
\end{theorem}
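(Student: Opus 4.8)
The plan is to proceed by induction on $n$, using Proposition~\ref{prop-outcome-fulltgds-justone} as the base case and, for the inductive step, to prove a \emph{composition lemma}: if a set $\Inst$ of instances is represented by an SKB $\skb=(I,\Gamma,\Scope)$ whose $\Gamma$ consists of full tgds, then for any safe-scope procedure $P=(S,\emptyset,\Delta,\Qsafe)$ with $\Delta$ a set of full tgds, the set $\outcome_P(\Inst)$ is again representable by an SKB $\skb'=(I',\Gamma',\Scope')$ computable in exponential time. Applying this lemma $n$ times gives the theorem. Throughout I maintain the invariant that every tgd in $\Gamma$ has all of its right-hand-side relations inside $\Scope$; this holds in the base case and, as noted below, is preserved by the construction.

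For the composition step I set $\Scope'=\Scope\cup S$ and take $I'$ to be the least element of $\outcome_P(\Inst)$, namely the result of first chasing $I$ with $\Gamma$ and then chasing with $\Delta$; since all dependencies are full both chases terminate and $I'$ is computable (the only potential blow-up, when arities are not fixed, being the $|\mathrm{adom}|^{\text{arity}}$ bound on a chased instance). The delicate part is $\Gamma'$. The naive choice $\Gamma'=\Gamma\cup\Delta$ fails: applying $P$ may add tuples to relations in $S$, and these can create new unsatisfied triggers for tgds of $\Gamma$ whose left-hand side mentions a relation of $S$; such outcomes legitimately belong to $\outcome_P(\Inst)$ yet violate $\Gamma$. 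I therefore set $\Gamma'=\Delta\cup\Gamma^\ast$, where $\Gamma^\ast$ is obtained by \emph{resolving} $\Gamma$ against $S$ and the fixed instance $I$: tgds of $\Gamma$ whose left-hand side avoids $S$ are kept verbatim, while tgds whose left-hand side touches $S$ are unfolded along the chase, instantiating their $S$-atoms using only the fixed finite content of $I$ restricted to $S$. Fully instantiated consequences become ground facts that I absorb into $I'$, and the surviving rules have left-hand sides over relations outside $S$ only.

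To prove $\rep(\skb')=\outcome_P(\Inst)$ I use the following characterization, the technical heart of the argument: $K\in\outcome_P(\Inst)$ iff $K\models\Delta$, $I'\subseteq K$, $K$ agrees with $I$ outside $\Scope'$, and there is an intermediate $J\in\rep(\skb)$ with $J\subseteq K$ agreeing with $K$ off $S$. Since $\Gamma$ is full, the $\Gamma$-models are closed under intersection, so among all candidate $J$ (which are forced everywhere except on $\Scope\cap S$) there is a least one, obtained by chasing the forced part $I|_S\cup K|_{\overline S}$ with $\Gamma$; hence the existence of a suitable $J$ reduces to the containment $\mathrm{chase}_\Gamma(I|_S\cup K|_{\overline S})\subseteq K$. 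This is exactly the condition encoded by $\Gamma^\ast$ (the unfolding pre-evaluates the $S$-atoms against $I|_S$ precisely because those coordinates of $J$ are pinned to $I$), so the base, scope, and $K\models\Delta\cup\Gamma^\ast$ conditions together are equivalent to membership in $\outcome_P(\Inst)$. The forward inclusion uses that triggers of rules with $S$-free left-hand side survive passage from $J$ to $K$; the backward one exhibits the least $J$.

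For the complexity bound, each composition performs a constant number of full-tgd chases (each at most exponential in the arities) and one unfolding of $\Gamma$; acyclicity of the postconditions bounds the chase depth by the number of relations, so the unfolding runs for boundedly many rounds and yields at most exponentially many rules. The point needing care is that this cost must not compound across the $n$ steps into a tower of exponentials, and for this I check that the construction preserves both fullness and acyclicity: the unfolded rules read outside $S$ and write inside $\Scope$, while $\Delta$ writes inside $S$, so no cycle can cross the $S/\overline S$ boundary and no new cycle is created. Each step then incurs only a single exponential blow-up and the total cost stays exponential. Controlling exactly this interaction between the left-hand sides of the accumulated constraints and the scope of each new procedure --- the phenomenon that already defeats $\Gamma\cup\Delta$ --- is the main obstacle; once the resolution $\Gamma^\ast$ and the least-intermediate-instance characterization are in place, the inclusions and the complexity accounting are routine.
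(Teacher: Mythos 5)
Your overall skeleton is the same as the paper's: the theorem is obtained by iterating a closure lemma (the paper's Theorem~\ref{theo-skb-closed}), with $\Scope' = \Scope \cup S$, a chased base instance, the correct diagnosis of why $\Gamma \cup \Delta$ fails, and a rewriting of $\Gamma$ into rules whose premises avoid $S$. Your least-intermediate-instance characterization (a valid intermediate $J$ exists iff $\chase_\Gamma(I|_S \cup K|_{\overline{S}}) \subseteq K$) is also correct. The gap is in the claimed encoding of that condition: your $\Gamma^\ast$ instantiates the $S$-atoms of premises \emph{only against the fixed content $I|_S$}, justified by the statement that ``those coordinates of $J$ are pinned to $I$.'' That statement is false whenever $S \cap \Scope \neq \emptyset$, which the setting allows. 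On $S \cap \Scope$ the intermediate $J$ is not pinned to $I$: the chase $\chase_\Gamma(I|_S \cup K|_{\overline{S}})$ can itself derive new tuples in relations of $S \cap \Scope$ (the write-set of $\Gamma$ may meet $S$), and these derived tuples can then fire premises of $\Gamma$ that mention $S$. Grounding against $I|_S$ misses exactly these derived triggers, so $\Gamma^\ast$ is too weak and your SKB strictly over-approximates the outcome set.

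Concretely, take $\skb = (I,\Gamma,\Scope)$ with $I$ empty, $\Gamma = \{A(x) \to B(x),\ B(x) \to C(x)\}$ and $\Scope = \{A,B,C\}$ (this satisfies your invariant and arises from a single safe-scope procedure with postconditions $E(x)\to A(x)$, $A(x)\to B(x)$, $B(x)\to C(x)$ applied to the empty instance), and let $P$ have scope $S = \{B\}$ and postconditions $\Delta = \{D(x) \to B(x)\}$. Unwinding the definitions, $K \in \outcome_P(\rep(\skb))$ iff the relations outside $\{A,B,C\}$ are empty, $A^K \subseteq B^K$, and $A^K \subseteq C^K$ (choose the intermediate $J$ with $B^J = A^K$). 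Hence $K_0$ with $A^{K_0} = B^{K_0} = \{1\}$ and $C^{K_0} = \emptyset$ is \emph{not} an outcome: any intermediate $J$ would need $A^J = \{1\}$, $C^J = \emptyset$ and $J \models \Gamma$, which is impossible. Yet your construction keeps $A(x)\to B(x)$ verbatim and makes $B(x)\to C(x)$ vanish (its grounding over $I|_B = \emptyset$ is empty), so $\Gamma' = \{D(x)\to B(x),\ A(x)\to B(x)\}$, $I' = \emptyset$, and $K_0 \in \rep(\skb')$. What is missing is precisely the rule $A(x) \to C(x)$, which $K_0$ violates; this is what the paper's $\removerels(\Gamma,S)$ produces by resolving the $S$-atom $B(x)$ in the premise of $B(x)\to C(x)$ against the rule $A(x)\to B(x)$ that produces $B$. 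So the rewriting must unfold $S$-atoms in premises against the \emph{rules} of $\Gamma$ (iterating; termination and the exponential bound follow from acyclicity), with the facts of $I|_S$ handled in addition, not instead. With that replacement your argument becomes essentially the paper's proof.
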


A natural question at this point is whether scoped knowledge bases are \emph{closed} under the operations 
expressed by procedures: If we start with an SKB $\skb$ and apply a procedure $P$ to each instance represented 
by $\skb$, is it possible to represent $\outcome_P(\rep(\skb))$? Surprisingly, it turns out that the answer %to this question 
is positive, as long as the procedures and the scoped knowledge base 
satisfy the requirements of Proposition \ref{theo-outcome-fulltgds}. 
An SKB $(I,\Gamma,\Scope)$ is \emph{full} if $\Gamma$ is a set of full tgds, \emph{acyclic} if 
$\Gamma$ is acyclic, and \emph{safe} if $\Scope$ contains at least all the relations on the right-hand side of the tgds in $\Gamma$. 
The next result shows that full acyclic safe SKBs are a \emph{strong representation system} \cite{IL84} for the outcomes of such procedures. 

%%%check whether we need acyclicity here

\begin{theorem}
\label{theo-skb-closed}
Let $P$ be a procedure with safe scope using only full tgds. 
The for every full acyclic safe SKB $\skb$, the set $\outcome_{P}(\rep(\skb))$ can always be represented by a full acyclic safe SKB. 
\end{theorem}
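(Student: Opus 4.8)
The plan is to exhibit an explicit SKB $\skb' = (I', \Gamma', \Scope')$ and prove $\rep(\skb') = \outcome_P(\rep(\skb))$. Write $P = (\Scope_P, \emptyset, \Gamma_P, \Qsafe)$ and $\skb = (I_0, \Gamma_0, \Scope_0)$. Since $P$ has no preconditions it is applicable on every instance, so by Proposition \ref{prop-outcome-fulltgds-justone} we have $\outcome_P(\rep(\skb)) = \bigcup_{J \in \rep(\skb)} \rep((J, \Gamma_P, \Scope_P))$. First I would unfold membership: $K$ lies in this set iff there is an \emph{intermediate} instance $J$ with $I_0 \subseteq J \subseteq K$, with $J \models \Gamma_0$, with $J$ agreeing with $I_0$ off $\Scope_0$, and with $K$ agreeing with $J$ off $\Scope_P$. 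The key structural observation is that these conditions force $J$ to coincide with $K$ on every relation outside $\Scope_P$ and with $I_0$ on every relation in $\Scope_P \setminus \Scope_0$; only the relations in $\Scope_0 \cap \Scope_P$ leave genuine freedom, and because $\Gamma_0$ consists of full tgds the minimal (chased) choice there suffices.

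Guided by this I would set $\Scope' = \Scope_0 \cup \Scope_P$ and take $I'$ to be the unique minimal element of the set, namely the chase of $I_0$ first under $\Gamma_0$ and then under $\Gamma_P$ (using that full acyclic tgds chase deterministically and only populate right-hand-side relations, all of which lie in $\Scope'$ by safety). The delicate ingredient is $\Gamma'$. It cannot simply be $\Gamma_0 \cup \Gamma_P$: since $\Gamma_0$ need only hold at $J$, the additions $P$ makes to $\Scope_P$-relations may legitimately violate $\Gamma_0$ in $K$ (already $\Gamma_0 = R(x)\to S(x)$ with $\Gamma_P = U(x)\to R(x)$ and $R \in \Scope_P$ breaks the naive union). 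Instead I would take $\Gamma' = \Gamma_P \cup \Gamma_0^{\downarrow}$, where $\Gamma_0^{\downarrow}$ is the \emph{partial grounding} of $\Gamma_0$ that mirrors the intermediate instance: in each tgd every body atom over a relation in $\Scope_P$ is instantiated by the finitely many matching tuples of $I_0$ (such relations hold exactly their $I_0$-values in $J$), while body atoms over relations outside $\Scope_P$ are kept as relational atoms (such relations hold their $K$-values in $J$). This produces dependencies that may carry constants drawn from $I_0$.

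I would then prove $\rep(\skb') = \outcome_P(\rep(\skb))$ by double inclusion. For $\supseteq$, given $K$ with a witness $J$, each tgd of $\Gamma_0$ that holds in $J$ translates under the above substitution into the corresponding grounded tgd holding in $K$, and $K \models \Gamma_P$ because $K$ is an outcome of $P$. For $\subseteq$, given $K \supseteq I'$ frozen off $\Scope'$ with $K \models \Gamma'$, I would reconstruct the witness by rolling the $\Scope_P$-relations of $K$ back to their $I_0$-values and chasing with $\Gamma_0$ on the free relations of $\Scope_0 \cap \Scope_P$; the grounded dependencies $\Gamma_0^{\downarrow}$ are exactly what guarantees this candidate satisfies $\Gamma_0$ and stays sandwiched between $I_0$ and $K$ without disturbing any relation outside $\Scope_P$. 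Finally I would verify the output is a full acyclic safe SKB: grounding introduces no existentials (fullness), only deletes body atoms rather than adding edges, so the dependency graph stays a subgraph of the acyclic graph of $\Gamma_0 \cup \Gamma_P$ (acyclicity), and $\Scope' \supseteq \mathrm{RHS}(\Gamma_0)\cup\mathrm{RHS}(\Gamma_P) \supseteq \mathrm{RHS}(\Gamma')$ (safety); the grounding yields at most exponentially many tgds, consistent with the bound of Theorem \ref{theo-outcome-fulltgds}, of which this closure is essentially the inductive step.

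The main obstacle is defining $\Gamma_0^{\downarrow}$ correctly and proving soundness and completeness together. The subtlety is that a single tgd body can mix $\Scope_P$-relations (rolled back to $I_0$ in the witness) with non-$\Scope_P$-relations (equal to $K$), so neither dropping nor keeping such a tgd is right: it must be partially instantiated, and one must check that the resulting conditional constraints are neither too strong (excluding outcomes where $P$ rightfully broke $\Gamma_0$) nor too weak (admitting a $K$ for which no $\Gamma_0$-satisfying intermediate exists). Handling right-hand-side relations lying in $\Scope_0 \cap \Scope_P$, where the witness retains freedom bounded above by $K$, and confirming that the single rolled-back-and-chased candidate is a valid witness whenever \emph{any} witness exists, is where the real care is required.
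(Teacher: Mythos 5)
Your construction of $\Gamma_0^{\downarrow}$ is where the argument breaks. It rests on the claim, stated in your definition, that in the intermediate instance $J$ every relation of $\Scope_P$ holds exactly its $I_0$-value; but as your own ``key structural observation'' admits, this is only true for relations in $\Scope_P \setminus \Scope_0$. For a relation in $\Scope_0 \cap \Scope_P$ (and by safety of $P$, every right-hand-side relation of $\Gamma_P$ that $\Gamma_0$ mentions is of this kind), $J$ may contain tuples far beyond $I_0$: everything that $\Gamma_0$ forces from the \emph{non-frozen} relations of $K$. Those extra tuples have downstream consequences, via $\Gamma_0$, on relations outside $\Scope_P$, which are frozen between $J$ and $K$ --- and grounding by $I_0$ erases exactly these constraints, so your $\subseteq$ direction fails. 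Concretely: let $\Gamma_0 = \{S(x) \to R(x),\; R(x) \to T(x)\}$ with $\Scope_0 = \{S,R,T\}$ (full, acyclic, safe), let $P$ have $\Gamma_P = \{U(x) \to R(x)\}$ with $\Scope_P = \{R\}$, and let $I_0$ have $U^{I_0} = \{b\}$ and all other relations empty. Since $R^{I_0} = \emptyset$, your grounding makes $R(x) \to T(x)$ vacuous, so $\Gamma' = \{U(x)\to R(x),\; S(x)\to R(x)\}$ and $I' = \{U(b), R(b)\}$. The instance $K$ with $S^K = \{c\}$, $R^K = \{b,c\}$, $T^K = \emptyset$, $U^K = \{b\}$ then lies in your $\rep(\skb')$, yet it is not an outcome: any witness $J$ must have $S^J = S^K = \{c\}$ and $T^J = T^K = \emptyset$ (both relations are outside $\Scope_P$), and $J \models \Gamma_0$ forces $c \in R^J \subseteq T^J = \emptyset$, a contradiction. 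Note that your own rolled-back-and-chased candidate detects this (it produces $T^J = \{c\} \neq T^K$), so the claim that $\Gamma_0^{\downarrow}$ guarantees the reconstruction succeeds is false. The lost constraint is $S(x) \to T(x)$: the composition of the two rules of $\Gamma_0$ through the reset relation $R$, and no grounding by a fixed finite instance (neither $I_0$ nor your chased $I'$) can express it, since $S$ is in scope and hence unbounded.

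This composite rule is precisely what the paper's proof manufactures. Its operator $\removerels(\Gamma_0,\Scope_P)$ does not ground premises over $\Scope_P$-relations but \emph{resolves} (unfolds) them against every rule of $\Gamma_0$ whose conclusion can produce the corresponding atom, iterating along the acyclic order until no premise mentions a relation of $\Scope_P$; the resulting SKB carries $\Gamma_P \cup \removerels(\Gamma_0,\Scope_P)$, the $\Gamma_P$-chase of $I_0$, and scope $\Scope_0 \cup \Scope_P$. In the example above this produces $S(x) \to T(x)$ and correctly excludes $K$. The parts of your proposal that do not involve $\Gamma_0^{\downarrow}$ are essentially sound: your $\Scope'$ agrees with the paper's, chasing $I_0$ with $\Gamma_0$ before $\Gamma_P$ is harmless, your $\supseteq$ direction goes through, and grounding \emph{is} legitimate for body atoms over relations in $\Scope_P \setminus \Scope_0$. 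But for relations in $\Scope_0 \cap \Scope_P$ --- the only ones that matter --- grounding must be replaced by unfolding against $\Gamma_0$ itself. (A secondary defect: your grounded dependencies carry constants, which already takes you outside the paper's tgd syntax and hence outside the class of full acyclic safe SKBs that the theorem promises as output.)
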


The proof relies on the fact that each application of a procedure not only changes the scope, but also forces us to update 
or drop some of the knowledge that we had before. The following example illustrates this issue: 

\begin{example}
Consider two procedures with safe scope, $P_T$ with scope $T[*]$ and 
postcondition $\Gamma_T = \{R(x) \rightarrow T(x), S(x) \rightarrow T(x)\}$, and  
$P_R$ with scope $R[*]$ and postcondition $\Gamma_R = \{S(x) \rightarrow R(x)\}$. 

Consider instance $I$ with $T^I = S^I = \emptyset$ and $R^I = \{1\}$.  
We can show that all the instances in $\outcome_{P_T}(I)$ must satisfy $\Gamma_T$. One possible outcome of applying $P_T$ to $I$ is instance 
$J$ with $R^J = T^J = \{1\}$ and $S^J = \emptyset$. 

Consider now $\outcome_{P_T, P_R}(I)$; it may include instances that do not satisfy the tgd $R(x) \rightarrow T(x)$. For example, 
one can verify that the instance $K$ given by $R^K = \{1,2\}$, $T^K = \{1\}$, $S^K = \emptyset$ is a possible outcome 
of applying to $I$ the procedure $P_T$ followed by $P_R$. 

What happens here is that, in a sense, we lose the information given by $R(x) \rightarrow T(x)$ when applying $P_R$, because $R$ is in the scope of 
$P_R$. %In fact, t
The set  $\outcome_{P_T, P_R}(I)$ is represented by the SKB $\skb = (J,\Gamma,\{R,T\})$, where 
$\Gamma$ contains both $S(x) \rightarrow T(x)$ and $S(x) \rightarrow R(x)$.
\end{example}

To the best of our knowledge, this interaction between knowledge bases, scopes, and relational procedures has not been studied before. 
We do not know whether there are other languages with the above property, or  to what degree Theorem 
\ref{theo-skb-closed} can  be extended to more expressive languages. We believe these are interesting questions 
to study in future work. We finish with a remark about the unavoidable exponential blowup when representing outcomes of 
procedures with SKBs. 

\begin{proposition}
\label{prop-size-representation}
There is an instance $I$, a procedure $P'$, and a family $\big(P_i \big)_{i \geq 1}$ of procedures, all with safe scope, such that 
every SKB $(J,\Gamma,\Scope)$ representing $\outcome_{P_i,P'}(I)$ is such that $J$ has at least $2^i$ tuples. 
\end{proposition}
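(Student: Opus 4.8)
The plan is to exhibit a family whose composition forces $2^i$ tuples into the \emph{instance part} of any representing SKB, even though each procedure and the input are small, using only full acyclic tgds (so all procedures trivially have safe scope). \textbf{Construction.} Work over a schema with unary relations $V$ and $E$ and one relation $C$ of arity $i$. Let $I$ have $V^I=\{0,1\}$ and $C^I=E^I=\emptyset$. Let $P_i$ have empty precondition, scope $\{C[*]\}$, preservation query $C$, and the single full acyclic tgd
$$V(x_1)\wedge\cdots\wedge V(x_i)\rightarrow C(x_1,\dots,x_i).$$
Let $P'$ have empty precondition, scope $\{V[*]\}$, preservation query $V$, and the tgd $E(x)\rightarrow V(x)$; the relation $E$ is present only so that $V$ may legally lie in $P'$'s scope while keeping $\Cpost$ acyclic.

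\textbf{Shape of the outcome set.} First I would pin down $S:=\outcome_{P_i,P'}(I)$. By Proposition~\ref{prop-outcome-fulltgds-justone}, applying $P_i$ yields $\rep((I,\Gamma_i,\{C\}))$, i.e.\ all instances that freeze $V$ at $\{0,1\}$ and extend $C$ to an arbitrary superset of the product $\{0,1\}^i$. Applying $P'$ then freezes $C$ and frees $V$. The net effect is that $M\in S$ iff $C^M\supseteq\{0,1\}^i$, $V^M\supseteq\{0,1\}$, $E^M=\emptyset$, where $C^M$ and $V^M$ may be enlarged \emph{independently} by arbitrary tuples and values. Hence $S$ has a unique minimum $M_0$ with $C^{M_0}=\{0,1\}^i$, containing $2^i$ tuples. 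The essential feature created by the two interacting scopes is this independence: $V$ can grow while $C$ stays equal to $\{0,1\}^i$, so the generating tgd $\Gamma_i$ is itself \emph{violated} somewhere in $S$ and can no longer be used to keep the product implicit.

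\textbf{Lower bound.} Let $\skb=(J,\Gamma,\Scope)$ be any SKB with $\rep(\skb)=S$; I must show $|J|\geq 2^i$, for which it suffices to prove $\{0,1\}^i\subseteq C^J$. Since $C^M$ genuinely varies over $S$, $C$ must lie in $\Scope$ (otherwise $C$ would be frozen at $C^J$ throughout $\rep(\skb)$). Now fix a product tuple $\bar c$ and suppose $\bar c\notin C^J$. Then $J\subseteq M_0\setminus\{\bar c\}$, and because $C\in\Scope$ this instance agrees with $M_0$ on every relation outside $\Scope$; as $M_0\setminus\{\bar c\}\notin S$ while $J\subseteq M_0\setminus\{\bar c\}$, the only way for it to fall outside $\rep(\skb)$ is $M_0\setminus\{\bar c\}\not\models\Gamma$. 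Egds cannot be the violated constraint, since they pass to sub-instances of the model $M_0$; so some tgd $\lambda\in\Gamma$ holds in $M_0$ but fails in $M_0\setminus\{\bar c\}$, meaning its body matches inside $M_0\setminus\{\bar c\}$ while its head demands exactly the deleted fact $C(\bar c)$.

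\textbf{Main obstacle.} The crux is ruling this out using only that $\Gamma$ is consistent with all of $S$, and this is where I expect to spend the most care. For full tgds I would use value-freshness: pick $a,b$ outside the active domain, let $\rho$ send $0\mapsto a,1\mapsto b$, and form $M'\in S$ by adding to $C^{M_0}$ the $\rho$-images of the body's $C$-facts (and $a,b$ to $V$); closure of $S$ under adding arbitrary tuples/values gives $M'\in S$, hence $M'\models\lambda$. But $\rho$ carries the body match into $M'$, so $\lambda$ would force the fact $C(\rho(\bar c))$, whose values lie in $\{a,b\}$; the only such facts in $C^{M'}$ are the renamed body facts, forcing $\bar c$ to equal one of the body's $C$-tuples — impossible, since that tuple survives in $M_0\setminus\{\bar c\}$ whereas $\bar c$ does not. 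This contradiction yields $\bar c\in C^J$ for every product tuple, so $|J|\geq 2^i$. The genuinely delicate point is making this robust for an \emph{arbitrary} $\Gamma$, in particular tgds with existential heads where the witness for $\bar c$ is not syntactically pinned; I would handle these by observing that existential tgds together with egds cannot single out any specific \emph{constant} tuple of $\{0,1\}^i$ (the $0\leftrightarrow1$ symmetry of $M_0$ and the freedom to inject fresh values prevent any constant from being distinguished), so the burden of producing the $2^i$ product facts must fall entirely on $J$.
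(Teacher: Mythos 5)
Your construction is essentially the one in the paper: the paper uses a unary relation $R$ and an $i$-ary relation $T$, with $P_i$ having scope $T$ and postcondition $R(x_1)\wedge\cdots\wedge R(x_i)\rightarrow T(x_1,\dots,x_i)$, and $P'$ having scope $R$ (its postcondition is the projection tgd $T(x_1,\dots,x_i)\rightarrow R(x_1)$ where yours uses the dummy relation $E$); in both cases the point is that the second procedure frees the relation feeding the product tgd, so that tgd can no longer be kept as knowledge. Your description of the outcome set and your lower-bound argument for \emph{full} tgds (the fresh-renaming map $\rho$) are correct, and indeed more careful than the paper's own proof, which simply asserts, citing the construction in the proof of Theorem \ref{theo-skb-closed}, that no representing SKB can contain a tgd with $R$ in its premise, and concludes from there.

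However, there is a genuine gap exactly at the step you flag: tgds with existential heads. Your bijective collapse $\sigma$ ($a\mapsto 0$, $b\mapsto 1$, identity on $\{0,1\}$) is a homomorphism from $M'$ onto $M_0$, \emph{not} onto $M_1:=M_0\setminus\{C(\bar c)\}$, so an all-existential head atom can always be witnessed by the fact $C(\bar c)$ itself, which is present in every instance of the outcome set; your case analysis isolates this case but does not resolve it. The appeal to the $0\leftrightarrow 1$ symmetry does not close it: symmetry only shows that pinned witnesses come in complementary pairs, and it does not exclude a head whose atoms jointly cover the whole cube of assignments, e.g.\ $\exists z\,\exists z'\,\bigl(C(x_0,z)\wedge C(z',x_1)\wedge C(z,z')\bigr)$ with $x_0,x_1$ frontier variables sent to $0,1$; such a head is unsatisfiable in $M_1$ although no single atom is forced to $\bar c$, so any argument that looks only at the head and at symmetry is inconclusive (ruling it out requires using the body as well). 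The gap is fixable by a small change to your own construction: pick the collapse target $t\in\{0,1\}$ with $(t,\dots,t)\neq\bar c$ and use the \emph{non-injective} map $\sigma'$ defined by $\sigma'(a)=0$, $\sigma'(b)=1$, $\sigma'(0)=\sigma'(1)=t$. Then $\sigma'$ is a homomorphism from $M'$ to $M_1$ (every fact of $C^{M_0}$ collapses to $C(t,\dots,t)\neq C(\bar c)$, and the $\rho$-renamed body facts map back to facts of $M_1$), and it satisfies $\sigma'\circ\rho\circ h=h$. Pushing the head witness of $\lambda$ in $M'$ forward along $\sigma'$ yields a witness in $M_1$ for the original match $h$, contradicting the violation --- with no case analysis at all, and subsuming your full-tgd argument. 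With that replacement your proof is complete.
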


%%if time, try to extend this to non-ful tgds! 
%\comment{Juan}{What about  querying and/or deciding stuff about scoped KBs? next subsection!}

\subsection{Reasoning About Incomplete Instances With Open and Closed Relations}

Another important problem is to reason about properties satisfied by the outcomes of (sequences of) 
procedures. With Theorem \ref{theo-outcome-fulltgds} 
at hand, we focus  on scoped knowledge bases, as these are sufficient for capturing the outcomes of a wide range of 
procedures. However, our results are of independent interest as they go beyond full acyclic and safe SKBs. We study the following problem: 
\begin{center}
\fbox{
\begin{tabular}{ll}
\multicolumn{2}{l}{\textsc{Satisfaction of Constraints:}}\\
\vspace{-12pt}
& \\
\textbf{Input}: & Acyclic scoped knowledge base $\skb $ and \\
& constraint $\sigma$, both over a schema  $\Sch$; \\
\textbf{Question}:\!\!\!\!\! & Is $\sigma$ satisfied on each instance in $\rep(\skb)$?
\end{tabular}
}
\end{center} 

In this paper we focus on the cases of the above problem where $\sigma$ is an egd or a tgd. %equality-generating dependency (or a tuple-generating dependency. 
Just as with \textsc{Applicability,} we know that we need some restrictions on $\Gamma$ in SKBs, 
because allowing arbitrary tgds would make the problem undecidable. For the next sections, we choose to go with acyclic sets of tgds. %, 
%and work with them in the following sections. 
However, note that our results do not need tgds to be full or SKBs to be safe.

\medskip
\noindent
\textbf{Satisfaction of egds}. 
Our first result is positive, stating that SKBs are as well behaved as other representation systems as far as reasoning about 
equality-generating dependencies is concerned. 

\begin{proposition} 
\label{egd-prop} 
The problem \textsc{Satisfaction of Egds} is $\conp$-complete when restricted to acyclic SKBs.
\end{proposition}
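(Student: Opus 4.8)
The plan is to establish the two directions separately: membership in \conp\ and \conp-hardness.

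For membership I would show that the complementary question --- is there an instance $J \in \rep(\skb)$ with $J \not\models \sigma$? --- lies in \np. Write the egd as $\sigma = \forall \bar x\,(\exists \bar y\, \phi(\bar x,\bar y) \to x=x')$. The key observation is that a violation of $\sigma$ is witnessed by the image of $\phi$ under a single assignment, so it involves only $|\phi|$ tuples; moreover, since the new elements populating scoped relations may be chosen freely, I may assume the witnessing values are drawn from $\mathrm{adom}(I)$ together with at most $|\bar x|+|\bar y|$ fresh distinct nulls. The \np\ procedure then guesses an assignment $\tau$ of the variables of $\sigma$ to such values (polynomially many), forms the witness set $W=\phi[\tau]$, and verifies: (a) $\tau(x)\neq\tau(x')$; (b) every tuple of $W$ lying in a relation $R\notin\Scope$ already belongs to $R^I$, since closed relations cannot change; and (c) the instance $I_0 = I\cup W$ can be completed to some $J\in\rep(\skb)$. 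Steps (a) and (b) are polynomial. For (c) I would use that $\Gamma$ is acyclic, so chasing $I_0$ with $\Gamma$ terminates and the completion exists exactly when this chase never forces a tuple into a closed relation that cannot be absorbed by $I$; processing the relations of $\Sch$ in topological order, I would certify either such a completion or a blocking requirement by a polynomial-size witness, keeping the whole test in \np.

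For \conp-hardness I would reduce from Boolean conjunctive-query evaluation, which is \np-complete in combined complexity. Given a Boolean CQ $Q$ with body $\phi_Q$ and a database $D$, I build the acyclic SKB $\skb=(I,\emptyset,\emptyset)$, where $I$ is $D$ extended by one fresh unary relation $N$ with $N^I=\{a,b\}$ for two distinct constants $a,b$, all relations closed and $\Gamma$ empty (hence trivially acyclic); then $\rep(\skb)=\{I\}$. I take the egd $\sigma = \phi_Q \wedge N(x)\wedge N(x') \to x=x'$ with $x,x'$ fresh. A violation of $\sigma$ on the unique instance $I$ requires both a match of $\phi_Q$ in $I$ and two distinct values for $x,x'$ in $N^I$; since $N^I$ always supplies the latter, $\sigma$ fails on $\rep(\skb)$ iff $D\models Q$. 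Thus $\sigma$ holds on every instance of $\rep(\skb)$ iff $D\not\models Q$, so \textsc{Satisfaction of Egds} is already \conp-hard for empty scope and empty $\Gamma$.

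The main obstacle is step (c) of the membership argument. For non-safe SKBs a tgd may place tuples in closed relations, and the acyclic chase of $I_0$ can have exponential size in combined complexity, so completability cannot be decided by materializing the chase. The crux is therefore to certify completability --- equivalently, that $\rep(\skb)$ contains an extension of $I_0$ (in particular detecting $\rep(\skb)=\emptyset$, in which case $\sigma$ holds vacuously) --- by a polynomial witness. I expect to obtain this from acyclicity: the chase stratifies into at most $|\Sch|$ levels, and a failure of completability is caused by a single tgd whose right-hand side is a closed relation and whose body is matched while its head cannot be realized in $I$; guessing such a match, or certifying its absence level by level, yields the required witness. When the SKB is additionally safe this obstacle vanishes, since the chase of $I_0$ stays within the scope and is itself a member of $\rep(\skb)$.
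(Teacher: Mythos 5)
Your \conp-hardness argument is correct, and it is in essence the paper's own reduction: the paper uses the same gadget egd (a CQ body conjoined with atoms over two singleton closed relations, with a head equating their elements), instantiated from non-3-colorability with the color triangle stored in a closed relation, whereas you reduce from Boolean CQ evaluation with $\Scope=\emptyset$ and $\Gamma=\emptyset$ so that $\rep(\skb)=\{I\}$. Your version is, if anything, cleaner. Likewise, the first two steps of your membership argument (guess $\tau$ over $\mathrm{adom}(I)$ plus fresh nulls, check $\tau(x)\neq\tau(x')$ and that the closed-relation atoms of $\phi[\tau]$ land inside $I$) coincide with the paper's construction of its candidate counterexamples $D_i$.

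The genuine gap is your step (c), and the fix you sketch does not close it. First, the equivalence you assert --- ``the completion exists exactly when this chase never forces a tuple into a closed relation that cannot be absorbed by $I$'' --- fails in one direction: when tgds have closed heads, a completion may have to interpret head existentials by elements already in $I$ rather than by fresh nulls (e.g.\ $\Gamma=\{A(x)\rightarrow \exists z\, B(x,z),\ B(x,z)\rightarrow C(z)\}$ with $B$ open, $C$ closed and nonempty, is completable, yet the fresh-null chase produces a $C$-tuple outside $I$), so one is forced into a disjunctive, branching chase. Second, and more fundamentally, completability has the form ``for every body match arising at any stratum there exists an absorption into $I$,'' where the matches range over instances that can be super-polynomially large in combined complexity; whether a blocking match arises depends on the absorption choices made at lower strata, not on a single tgd, and ``certifying its absence level by level'' is a universal condition over such an object, which an \np\ machine cannot verify by guessing. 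So the missing lemma is precisely that consistency/completability of an acyclic \emph{non-safe} SKB is \np-certifiable, and your proposal does not prove it. For what it is worth, this is exactly the point at which the paper's own proof is thin: it asserts that the chase of $D_i$ with $\Gamma$ ``belongs to $\rep(\skb)$,'' which is immediate for safe SKBs but false in general --- for $\Gamma=\{S(x)\rightarrow C(x)\}$ with $C$ closed and empty and $S^I\neq\emptyset$ we have $\rep(\skb)=\emptyset$, so every egd holds vacuously, yet both the paper's test and yours would report a violation. Your attempt thus isolates the right missing ingredient but leaves the membership direction (and hence the proposition for non-safe acyclic SKBs) unestablished; as you note, under the additional assumption of safety the obstacle disappears and your argument goes through.
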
  

As with the problem of implication of egds over other forms of incomplete databases, the proof consists on 
showing that $\sigma$ is not valid on every instance in $\rep(\skb)$ if an only if there is a small instance constructed from 
$I$ and the frozen body of $\sigma$ where $\sigma$ does not hold. The difference is that we need to be more careful 
when constructing the frozen body of $\sigma$, to take into account the scope of the SKB. 

%%% complexity!!!!!
%In the formulation of Proposition \ref{egd-prop}, any set of dependencies $\Gamma$ is acceptable as long as it belongs to a class of dependencies that gives a sufficient condition for chase termination with $\Gamma$.  The intuition for the proof is 
%in the small-witness property: It is possible to explicitly generate a finite set of instances, such that either (i) the satisfaction of the input egd $\sigma$ is disproved directly on one of these instances, or (ii) the satisfaction of $\sigma$ on all the instances represented by $SKB$ follows from its satisfaction on all the generated instances. Likewise, it is not diffi

\medskip
\noindent
\textbf{Satisfaction of tgds}.
For the case of tgds we cannot immediately adapt previous results, as the scope of SKBs prevents us 
from applying the chase in a direct way. On the other hand, the problem becomes much easier once we disallow 
unsafe knowledge bases. 

%However, this time we cannot make the statement general enough to include all possible sets of dependencies, since these dependencies may interact with $\sigma$ and 
%make the overall process undecidable. Instead, we choose to focus on the case of  
%sets of weakly-acyclic \cite{FaginKMP05} tgds and egds. %  for a precise definition).

%This is out of context in the current writing (we have defined SKBs so that every relation outside scope is actually ground)
%We now assume that each relation outside $\Scope$ in $(I,\Gamma, \Scope)$ is a ground relation, i.e., no nulls are permitted in such relations. (This restriction could be reasonable in, e.g., those settings where all the available procedures  would be applicable to the same relation. The relation in question would therefore be the only relation in the scope of all the procedures.) 
%\begin{proposition} 
%\label{tgd-entailment-decidability-prop} 
%Given a scoped knowledge base $SKB$ $=$ $(I,\Gamma, \Scope)$ over schema  $\Sch$ such that each relation outside $\Scope$ in $(I,\Gamma, \Scope)$ is a ground relation (i.e., no nulls permitted), and such that $\Gamma$ is a weakly acyclic \cite{FaginKMP05} set of egds and tgds. Then, for a tgd  $\sigma$ over schema  $\Sch$,  it is decidable to check whether $\sigma$ holds on each instance in $(I,\Gamma, \Scope)$. 
%\end{proposition} 

\begin{proposition} 
\label{tgd-entailment-decidability-prop} 
The problem \textsc{Satisfaction of Tgds} is in $\Pi^2_p$ when restricted to acyclic SKBs. 
%SKBs $(I,\Gamma, \Scope)$ such that 
%$\Gamma$ is a weakly acyclic set of tgds. 
It is \np-complete if further restricted to safe SKBs. 
\end{proposition}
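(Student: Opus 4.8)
The plan is to reduce both bounds to reasoning about a single \emph{canonical completion} obtained by freezing the body of $\sigma$ and chasing it with $\Gamma$, and then to read off the complexity from the quantifier structure of that reasoning. Write $\sigma = \exists\bar y\,\phi(\bar x,\bar y)\to\exists\bar z\,\psi(\bar x,\bar z)$. A completion $J\in\rep(\skb)$ violates $\sigma$ iff there is a match $\mu$ of the body $\phi$ with no extension to the head $\psi$. The decisive observation is that $\psi$ is a conjunctive query and hence monotone: adding tuples can only \emph{create} head matches, so among all completions that contain a fixed body match, the \emph{least} one --- the chase of $I$ together with the frozen body $B_\mu$ under $\Gamma$ --- is the one most likely to falsify the head. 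Thus a violation exists iff there is a match $\mu$ such that $\psi$ fails in $\mathit{chase}_\Gamma(I\cup B_\mu)$. The scope must be woven in carefully here, exactly as flagged for the egd case in Proposition~\ref{egd-prop}: the non-scope atoms of $\phi$ under $\mu$ must already occur in $I$ (those relations are frozen), and, since the SKB need not be safe, one must also check that the chase does not force a tuple into a relation outside $\Scope$ --- if it does, no member of $\rep(\skb)$ contains $B_\mu$, so $\mu$ yields no counterexample. Acyclicity of $\Gamma$ guarantees chase termination, and, treating the schema arity as fixed, bounds the chase polynomially, so that both CQ-entailment of $\psi$ under $\Gamma$ and the ``no forced non-scope tuple'' condition are in \np\ (guess a homomorphism, respectively a chase-derivation witness, of polynomial size).

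For the $\Pi^2_p$ bound on acyclic SKBs I would show the complement is in $\Sigma^2_p$. The $\Sigma^2_p$ procedure existentially guesses a body match $\mu$ (mapping the non-scope atoms of $\phi$ into $I$ and the scope atoms into the active domain of $I$ plus polynomially many fresh nulls), forms $B_\mu$, and then invokes a \conp\ verification to confirm that (a)~the $\Gamma$-chase of $I\cup B_\mu$ adds no tuple to any relation outside $\Scope$, so that $\mu$ is realized by an actual $J\in\rep(\skb)$, and (b)~$\psi$ is not entailed from $I\cup B_\mu$ under $\Gamma$. By the bounded-chase observation the complements of (a) and (b) are in \np, so the verification is in \conp\ and the whole counterexample search is $\exists\cdot\conp=\Sigma^2_p$; hence \textsc{Satisfaction of Tgds} is in $\Pi^2_p$.

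For the \np\ bound on safe SKBs the key is that safety ($\Scope$ contains every right-hand-side relation of $\Gamma$) makes the chase of any $I\cup B_\mu$ stay entirely within the scope. Consequently condition~(a) above is vacuous and \emph{every} match is realizable, which removes the extra alternation responsible for the second level of the hierarchy. The plan is then to establish a canonical-witness lemma showing that, in the safe case, satisfaction of $\sigma$ on all completions is decided by a single chase/entailment question about the frozen body of $\sigma$ against $I$ under $\Gamma$, a question witnessed by one polynomial-size chase-derivation (a homomorphism into the polynomially bounded chase); guessing and verifying that witness places the problem in \np. For \np-hardness I would reduce from Boolean conjunctive-query evaluation (equivalently, CQ containment), encoding the query and database into $\sigma$, $\Gamma$, and $I$ so that the head is forced on all completions precisely when the query evaluates to true.

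The main obstacle I anticipate is getting the interaction between scope and chase exactly right, and this is where the two complexity levels are actually decided. First, the canonical-witness lemma is subtle because freezing a body match forces the non-scope atoms to land in $I$, which can ``carry in'' further fixed $I$-tuples that $\Gamma$ subsequently uses to derive the head --- so one cannot simply freeze the body with all-fresh elements and chase in isolation; the full instance $I$ must participate, and the completeness direction of the lemma must account for this. Second, the step that collapses the safe case from $\Pi^2_p$ down to \np\ relies precisely on safety eliminating the ``no forced non-scope tuple'' obligation and thereby the universal quantifier over realizability; making this collapse rigorous, together with the polynomial chase bound that keeps every subcheck in \np/\conp\ under acyclicity, is the technical heart of the argument.
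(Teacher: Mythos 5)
Your overall strategy (a counterexample search over frozen body matches, followed by a chase with $\Gamma$) is in the spirit of the paper's proof, but the step where you handle the closed (non-scope) relations is wrong, and this is precisely the point where the paper has to bring in heavier machinery. You claim that if the chase of $I\cup B_\mu$ forces a tuple into a relation outside $\Scope$, then no member of $\rep(\skb)$ realizes $B_\mu$, so $\mu$ can be discarded. This fails as soon as $\Gamma$ has existential quantifiers: the forced tuple may contain chase-invented nulls, and in an actual completion those nulls may be instantiated so that the forced tuple coincides with a tuple \emph{already present} in the closed relation. Concretely, take $\Scope=\{S,M,T\}$, a closed relation $R$, $\Gamma=\{S(x)\rightarrow\exists y\,M(x,y),\; M(x,y)\rightarrow R(x,y)\}$ (acyclic, not safe), $I$ with $R^I=\{(a,b)\}$ and $S^I=M^I=T^I=\emptyset$, and $\sigma: S(x)\rightarrow T(x)$. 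The instance $J$ with $S^J=\{a\}$, $M^J=\{(a,b)\}$, $T^J=\emptyset$, $R^J=R^I$ belongs to $\rep(\skb)$ and violates $\sigma$, so the correct answer is ``not satisfied.'' But for every guess $\mu$ (in particular $\mu(x)=a$), the chase of $I\cup\{S(a)\}$ adds $M(a,n_1)$ and then $R(a,n_1)$, a new tuple in the closed relation $R$; your rule therefore discards \emph{every} $\mu$, and your algorithm wrongly reports that $\sigma$ is satisfied. The repair requires branching, upon deriving a tuple into a closed relation, over all ways of equating it with the tuples already stored there (here $n_1=b$, after which the head $T(a)$ indeed has no match and the violation is detected).

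That disjunctive branching is exactly what the paper's proof does and what a plain chase cannot simulate: following \cite{ChirkovaY14}, the paper encodes each closed relation $R_i$ by a ``generalized negd'' $R_i(\bar X)\rightarrow\bigvee_j(\bar X=\bar t_j)$ (or an implication constraint $R_i(\bar X)\rightarrow false$ when $R_i^I=\emptyset$), chases the body of $\sigma$ conjoined with $I$ using these together with a disjunctive transformation of $\Gamma$, obtains a $UCQ^{\neq}$ query, and reduces \textsc{Satisfaction of Tgds} to containment of that $UCQ^{\neq}$ in the CQ given by the head of $\sigma$; the $\Pi^2_p$ bound is the complexity of that containment test, not of a guess-and-chase search. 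A secondary weakness in your complexity accounting: you bound the chase polynomially by ``treating the schema arity as fixed,'' an assumption not present in the statement; acyclic full tgds already subsume non-recursive Datalog, whose combined complexity is not polynomial, so the claim that conditions (a) and (b) are verifiable in \conp\ is unjustified as stated. For the safe case your plan does essentially coincide with the paper's (safety keeps the chase inside $\Scope$, and the problem reduces to the standard frozen-body implication test, with \np-hardness from CQ evaluation), but in the general acyclic case the disjunctive treatment of closed relations is the missing idea, and without it the argument is not merely incomplete but gives wrong answers.
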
 

The proof follows from  the results of \cite{ChirkovaY14}, which explores the decidability and complexity of determining containment of CQ queries in presence of materialized views and of dependencies on the instance that generates the materialized views. The results of \cite{ChirkovaY14} build on \cite{ZhangM05} and are obtained for the closed-world case, i.e., all the given materialized views are exact. In the proof of Proposition \ref{tgd-entailment-decidability-prop} in this  paper, 
we explore a setting that is partially closed world, as we use a particularly simple case of exact materialized views to model relations that are not in the scope of an SKB. At the same time, our results differ from, and extend in part, those of \cite{ChirkovaY14}, as our setting is also partially open world, as expressed by the relations that are in scope in the given SKB. 

We remark that both Propositions \ref{egd-prop} and \ref{tgd-entailment-decidability-prop} continue to hold if we allow the knowledge 
set $\Gamma$ in SKBs to be given by more expressive formalisms such as weakly acyclic \cite{FKMP05} sets of tgds and even 
including egds, or virtually any other formalism that guarantees presence of polynomial-length witnesses of 
chase sequences.

\medskip
\noindent
\textbf{Query Answering}.
The final task we consider is query answering: Given an instance
$I$, a sequence $P_1,\dots,P_n$ of procedures, and a query $Q$, we
would like to find the answers to $Q$ over the outcomes of applying
$P_1,\dots,P_n$ to $I$.  To simplify the presentation, we consider
only boolean CQ queries. %, and we only consider conjunctive queries. 
As usual in cases that deal
with infinite sets of instances, we are interested in the
\emph{certain} (or unambiguous) answers, which in our framework
translate into the answers that hold over any possible outcome of
applying $P_1,\dots,P_n$ to $I$.

More formally, let $\certain_{P_1,\dots,P_n}(Q,I)$ denote the
intersection $$\bigcap_{J \in \outcome_{P_1,\dots,P_n}(I)} Q(J).$$

\begin{center}
\fbox{
\begin{tabular}{ll}
\multicolumn{2}{l}{\textsc{Query answering}}\\
\vspace{-12pt}
& \\
\textbf{Input}: & Instance $I$, boolean CQ $Q$, and \\
& sequence $P_1,\dots,P_n$ of procedures; \\
\textbf{Question}: & Is $\certain_{P_1,\dots,P_n}(Q,I)$ nonempty?
\end{tabular}
}
\end{center} 

The complexity of query answering is not high when compared to similar reasoning tasks in other 
similar scenarios such as data exchange or ontology-based query answering. 
The problem is even in polynomial time if we consider %instead 
the \emph{data complexity} of the problem. 

\begin{proposition}
\label{prop-query-answering}
\textsc{Query answering} is in \nexptime when restricted to  
sequences of procedures with safe scope. It is in $\exptime$ if 
the procedures involved contain only full tgds, and in polynomial time if 
$P_1,\dots,P_n$ and $Q$ are fixed.\end{proposition}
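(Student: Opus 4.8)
The plan is to reduce, in all three cases, the certain-answering problem to evaluating $Q$ over a single \emph{universal} outcome obtained by an iterated chase, and then to bound the resources needed to inspect this universal outcome.

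First I would establish the key structural lemma: for procedures with safe scope, the set $\outcome_{P_1,\dots,P_n}(I)$ has a universal model $U_n = \mathrm{chase}_{\Gamma_n}(\cdots\mathrm{chase}_{\Gamma_1}(I)\cdots)$, where $\Gamma_i$ is the postcondition set of $P_i$. This rests on Proposition \ref{prop-outcome-fulltgds-justone}: since each $P_i$ has safe scope, $\outcome_{P_i}(K)=\rep(K,\Gamma_i,\Scope_i)$ consists of exactly those ground instances that contain $K$, satisfy $\Gamma_i$, and leave the out-of-scope relations untouched; because safe scope forces every right-hand-side relation of $\Gamma_i$ to lie in $\Scope_i$, the chase writes only inside the scope, so $\mathrm{chase}_{\Gamma_i}(K)$ is a universal model of $\rep(K,\Gamma_i,\Scope_i)$. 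Iterating this and using that the chase commutes with homomorphisms, I would prove by induction on $n$ that (i) $U_n$ maps homomorphically into every instance of $\outcome_{P_1,\dots,P_n}(I)$, and (ii) freezing the nulls of $U_n$ to fresh distinct constants yields an instance that is itself a member of $\outcome_{P_1,\dots,P_n}(I)$. For a boolean CQ $Q$, (i) and (ii) together give that $\certain_{P_1,\dots,P_n}(Q,I)$ is nonempty if and only if there is a homomorphism from $Q$ into $U_n$.

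With this reduction, two of the three bounds follow from size bounds on $U_n$. For the full-tgd case the chase introduces no new elements, so the active domain of $U_n$ equals that of $I$ together with the constants of the $\Gamma_i$; hence every $U_i$ has at most $|\mathcal R|\cdot|\mathrm{adom}|^{r}$ tuples ($r$ the maximum arity), which is singly exponential and, crucially, does not grow with $n$. The iterated chase can therefore be computed deterministically in exponential time and $Q$ evaluated over it in exponential time, giving the \exptime bound; alternatively one may compute the SKB of Theorem \ref{theo-outcome-fulltgds} and chase its (full) instance. Fixing $P_1,\dots,P_n$ and $Q$ makes the arity, the number of relations, and the chase depth constant, so $U_n$ has size polynomial in $|I|$ and evaluating the fixed $Q$ over it is polynomial, giving the \ptime data-complexity bound.

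The delicate case is the general safe-scope (acyclic, possibly non-full) bound of \nexptime, and this is where the main obstacle lies. Here $U_n$ may be doubly exponential: a single acyclic chase can blow the instance up singly-exponentially in the size of the dependencies, and iterating this $n$ times (with $n$ part of the input) compounds to a double exponential, so $U_n$ cannot be materialized within \nexptime. The way around this is to observe that, although $U_n$ is wide, it is shallow: since each $\Gamma_i$ is acyclic, the derivation depth contributed by $P_i$ is bounded by the number of relations, so every tuple of $U_n$ admits a chase derivation of depth only $O(n\cdot|\mathcal R|)$, i.e. polynomial. Consequently the proof tree witnessing membership of any single tuple has bounded branching (the maximum number of body atoms of a tgd) and polynomial depth, hence only singly-exponential size. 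The algorithm therefore guesses a homomorphism $h\colon Q\to U_n$ together with a singly-exponential-size proof forest certifying that each of the at most $|Q|$ image atoms is generated by the iterated chase from facts of $I$, and verifies all derivation steps deterministically in exponential time. The principal technical point to get right is the consistent naming of existentially introduced nulls shared across the proof trees of different atoms of $Q$, so that the guessed forest really is the image of a single homomorphism into $U_n$; this is handled by presenting the certificate as a proof DAG with a global canonical naming of nulls by the steps that create them.
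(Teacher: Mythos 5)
Your proposal is correct and takes essentially the same route as the paper: the \exptime and \ptime bounds are obtained exactly as there, by materializing the chased instance (the SKB of Theorem \ref{theo-outcome-fulltgds}, whose domain does not grow for full tgds) and enumerating the at most $d^{|Q|}$ homomorphisms from $Q$, while the \nexptime bound is obtained by guessing a homomorphism together with an exponential-size (bounded by $|P|^n$) chase-derivation certificate for its image atoms and verifying it. The only presentational difference is that the paper routes the general safe-scope case through positive conditional instances and their naive tables (Propositions \ref{prop-minimal} and \ref{prop-dyn-query-answering}, machinery built for the dynamic semantics), whereas you argue directly on the universal chase model with a globally named proof DAG; the underlying certificate idea, including the consistent naming of invented nulls, is the same.
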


Interestingly, this result extends the settings we can represent with Theorem \ref{theo-outcome-fulltgds}, as we allow 
any sequence of procedures with safe scope. The reason we can work with more expressive outcomes is that for computing certain 
answers to conjunctive queries, we only need to keep the set of minimal instances in the set of outcomes, instead 
of representing the complete outcome space of a sequence of procedures. We show that this set of minimal instances 
can be represented with an extension of conditional tables, and for each subsequent application of a procedure 
we update this set by chasing the conditional table, as is done in, e.g., \cite{APR13}. 
Although we can easily get \pspace-hardness from a reduction from the query answering problem for non-recursive 
datalog \cite{VV98}, we do not know if the bounds presented above are tight. 

\section{The data-readiness problem}
\label{readiness-sec}

We now address the problem of assessing achievability of data-quality constraints, which we describe informally as follows. 
We start with an instance $I$ and have a set $\Pi$ of procedures. We are also given 
a property $\alpha$ over instances (specified, for example, as a boolean query or a set of constraints) that 
does not hold in $I$. 
The question we ask in this setting 
is whether we can apply  to $I$ some or all the procedures in $\Pi$ so that 
all the instances in the %resulting 
outcome satisfy $\alpha$:  
%In this is the case, we say that $I$ can be readied for $\alpha$ using $\Pi$. 
%The formal statement of the problem follows. 

\begin{center}
\fbox{
\begin{tabular}{l}
{\textsc{data readiness:}} \\
\begin{tabular}{ll}
\textbf{Input}: & An instance $I$, a set $\Pi$ of procedures,  \\ 
&and a property $\alpha$; \\
\textbf{Question}:\!\!\!\! & Is there a sequence $P_1,\dots,P_n$ of pro- \\
& cedures in $\Pi$ such that all the instan-  \\  
& ces in $\outcome_{P_1,\dots,P_n}(I)$ satisfy $\alpha$?
\end{tabular} \\
\end{tabular}
}
\end{center}

The length of $n$ of the assembled workflow is not part of the input, but 
instead needs to be derived from the system as well. This implies a striking difference 
between readiness and the problems we study in Section \ref{analyze-sec}, and will lead us 
to the undecidability of problems that would instead be decidable if $n$ was considered to 
be fixed. In those cases when we obtain decidability it is by proving small-model properties about the length of the sequences 
that need to be assembled. 

We study two specific versions of the problem, arising from how we  
specify $\alpha$. We begin with the case where $\alpha$ is specified as either a tgd or an egd, 
which we denote as \textsc{constraint readiness}. We then study the case 
where $\alpha$ is a boolean CQ $Q$, denoted as \textsc{query readiness}.

\subsection{Readying Data with Respect to Constraints}

In the previous sections we have seen that fundamental problems in our proposed framework can be solved if we 
restrict ourselves to procedures with safe scope. Unfortunately, as the following proposition shows, 
this is not the case for the data-readiness problem. 

\begin{proposition}
\label{prop-readi-undecidable-constraints}
The problem \textsc{constraint readiness} is undecidable, even if $\Pi$ is a set of procedures with safe scope and 
$\Sigma$ contains a single acyclic tgd. 
\end{proposition}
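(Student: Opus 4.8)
The plan is to reduce from the halting problem for deterministic Turing machines on empty input, which is undecidable. The intuition guiding the reduction is that the difficulty of \textsc{constraint readiness} lies entirely in the unbounded, existentially quantified length $n$ of the assembled workflow: for any \emph{fixed} candidate sequence one is merely asking whether all instances of its outcome set satisfy $\sigma$, a finitary question of the flavour of \textsc{Satisfaction of Constraints}, whereas searching over all $n$ is an inherently r.e. process, matching the r.e. character of halting. Concretely, from a machine $M$ I build an instance $I_M$, a finite set $\Pi_M$ of procedures with safe scope, and a single acyclic tgd $\sigma_M$ such that $M$ halts on the empty input if and only if some sequence drawn from $\Pi_M$ drives every outcome into satisfying $\sigma_M$.

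The encoding represents a computation history of $M$ in the instance: relations for the current state, head position, and tape contents indexed by a position/time element, together with a successor relation and a few control markers; $I_M$ stores the initial configuration and a single frozen fact $\text{Start}$. Each procedure in $\Pi_M$ corresponds to a transition of $M$ and, when applied, appends the successor configuration. Two points require care. First, the history must grow without bound, so fresh position/time elements are minted by existential atoms on the right-hand sides of the (acyclic) postcondition tgds. Second, since a single $\Cpost$ must be acyclic, no procedure can read and rewrite the same ``frontier'' relation; I therefore split the frontier into two copies and alternate two families of procedures that read one copy and write the other, exactly the scope-manipulation device used in Example~\ref{exa-scope} and Proposition~\ref{prop-rep-undec}. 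The safe-scope discipline is what makes the simulation robust: because each procedure carries a total preservation query for every relation in its scope, the already-written portion of the history can only grow and is never erased, and relations outside the current scope are frozen verbatim, so the honest prefix of the run persists unchanged across the whole sequence.

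The target is the single acyclic tgd $\sigma_M = \text{Start}(x) \rightarrow \exists y\,\text{Halt}(y)$, where $\text{Halt}$ is populated by a postcondition tgd fired precisely when a halting state appears in the history. Because $\text{Start}$ is a frozen fact, ``every outcome satisfies $\sigma_M$'' amounts to ``every outcome contains a $\text{Halt}$ tuple.'' If $M$ halts, the sequence simulating its run produces a genuine halting-state fact that, being preserved, is present in every outcome and therefore forces a $\text{Halt}$ witness everywhere, so $\sigma_M$ holds universally. If $M$ does not halt, then for every candidate sequence the honest minimal outcome carries the real (non-halting) prefix and no $\text{Halt}$ tuple, so that single outcome falsifies $\sigma_M$; note that by Proposition~\ref{prop-rep-safe-scope} this minimal outcome indeed exists, so, unlike the earlier undecidability proofs, we cannot and do not rely on filtering by empty outcome sets.

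I expect the main obstacle to be establishing faithfulness of the simulation against the universal quantification over the infinite, nondeterministic outcome set. Since postconditions are tgds only (safe scope forbids egds and preconditions), a procedure cannot force equalities among the elements it mints, and every outcome may freely add further tuples and choose arbitrary witnesses for the existentials. The argument must therefore show that (i) an adversarially enlarged outcome can never manufacture a \emph{spurious} halting-state fact that would let $\sigma_M$ hold without $M$ actually halting, and (ii) once a real halting state is reached, the $\text{Halt}$ witness is forced in \emph{all} outcomes regardless of how the existential elements were chosen. Both are obtained by keeping $\sigma_M$ and the halt-detecting tgd over the preserved, monotonically growing history relations, so that the intended run prefix is the common core of every outcome; verifying this invariant step by step is the technical heart of the proof, while the remaining bookkeeping of the reduction is routine.
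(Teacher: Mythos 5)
Your proposal is correct and takes essentially the same route as the paper's proof: a reduction from the halting problem in which safe-scope procedures (one per machine transition, plus copying/translation steps over duplicated relations to keep each postcondition acyclic) grow the configuration history monotonically, and the target is a single acyclic tgd from a frozen initial fact to the existence of a halting fact, with the honest chase outcome falsifying it when $M$ diverges and chase universality forcing it in every outcome when $M$ halts. The only slight misstep is your concern (i): adversarially enlarged outcomes \emph{can} manufacture spurious Halt facts, but this is harmless because readiness quantifies over all outcomes and the minimal outcome already falsifies $\sigma_M$, exactly as you argue in the preceding paragraph.
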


The proof is by reduction from the universal halting problem for Turing machines, along the lines of the proof 
used in \cite{DNR08} to show that the termination of chase is undecidable. The proof uses $\Pi$ to simulate each of the constraints being 
chased in the proof in \cite{DNR08}. %by Deutsch et al. 

On the other hand, the results of Section \ref{analyze-sec} suggest a way to solve the problem for sets of procedures 
with safe scope that use full tgds only. Intuitively, it suffices to guess a sequence of procedures, compute the 
representation of their outcome, and then apply Proposition \ref{tgd-entailment-decidability-prop}. 
All that is left to do is to prove a small-model property for the size of the sequence of procedures that we need to guess, as specified in the next  
result. 

\begin{theorem}
\label{theo-readi-decidable-constraints}
The problem {\textsc{constraint readiness}} is decidable in \twonexptime, for the cases where $\Pi$ is a set of procedures with safe scope 
with output constraints comprising full tgds only. 
\end{theorem}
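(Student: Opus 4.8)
The plan is to reduce \textsc{constraint readiness} (for safe-scope procedures with full tgds) to a reachability problem over a finite space of scoped knowledge bases, and then to bound the size of that space. The starting point is that, by Proposition \ref{prop-outcome-fulltgds-justone} and Theorem \ref{theo-skb-closed}, the outcome set of a sequence of such procedures can be tracked \emph{incrementally} as a full acyclic safe SKB: writing $\skb_0 = (I,\emptyset,\emptyset)$ for the SKB with $\rep(\skb_0)=\{I\}$, each procedure $P_k$ in a candidate sequence transforms $\skb_{k-1}$ into an SKB $\skb_k$ with $\rep(\skb_k)=\outcome_{P_1,\dots,P_k}(I)$, and this $\skb_k$ is computable from $\skb_{k-1}$ and $P_k$ alone. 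In particular the map $(\skb,P)\mapsto \skb'$ is a well-defined, deterministic transition. Since whether the target constraint $\alpha$ holds on every instance of $\outcome_{P_1,\dots,P_n}(I)$ is exactly the \textsc{Satisfaction of Constraints} question on $\skb_n$ --- decidable by Proposition \ref{tgd-entailment-decidability-prop} for a tgd and by Proposition \ref{egd-prop} for an egd --- the whole problem becomes: \emph{is there a reachable SKB $\skb$ (via transitions labelled by procedures of $\Pi$) on which $\alpha$ is satisfied?} Note that the example following Theorem \ref{theo-skb-closed} shows that $\alpha$ may hold at an intermediate SKB and fail later, so the search must check $\alpha$ at the guessed endpoint rather than assume any monotonicity.

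The heart of the argument is a small-model property: I will show that only finitely many SKBs are reachable, with the number bounded doubly-exponentially in the input. Write each reachable SKB as $(J,\Gamma,\Scope)$. First, $\Scope$ is a set of relation names of the (fixed) schema, so it ranges over at most $2^{|\Sch|}$ values. Second --- and this is the crucial point --- the constraint set $\Gamma$ never grows beyond the finite pool $\bigcup_{P\in\Pi}\Cpost^P$: inspecting the construction behind Theorem \ref{theo-skb-closed}, each step obtains the new $\Gamma$ from $\Gamma \cup \Cpost^{P_k}$ by \emph{deleting} the tgds that are invalidated by the relations newly opened in the scope, and never by composing or otherwise synthesizing new tgds. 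Hence $\Gamma$ ranges over subsets of a fixed finite set, at most $2^{|\bigcup_{P}\Cpost^{P}|}$ values. Third, because the tgds are full they create no fresh values, so every reachable $J$ has active domain contained in $\mathrm{adom}(I)$; the number of candidate tuples is then bounded by $|\Sch|\cdot|\mathrm{adom}(I)|^{a}$ (with $a$ the maximal arity), which is exponential, and so $J$ ranges over at most doubly-exponentially many instances. Multiplying the three bounds gives a doubly-exponential bound $L$ on the number of reachable SKBs.

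With the state space finite and the transition functional, loop-cutting applies: if any reachable SKB satisfying $\alpha$ exists, one is reachable by a path in which no SKB repeats, hence by a sequence of at most $L$ procedures. The decision procedure therefore nondeterministically guesses a sequence $P_1,\dots,P_n$ with $n\le L$, maintaining the current SKB and updating it one procedure at a time via the closure operation of Theorem \ref{theo-skb-closed} (each update is polynomial in the current SKB, whose size is at most exponential), and finally checks $\alpha$ on $\skb_n$ using Proposition \ref{tgd-entailment-decidability-prop} or \ref{egd-prop}. The guessing phase performs at most doubly-exponentially many exponential-time steps, and the final \textsc{Satisfaction of Constraints} check runs deterministically in time exponential in the (exponential-size) SKB, i.e.\ doubly-exponential in the original input; altogether the procedure runs in nondeterministic double-exponential time, placing \textsc{constraint readiness} in \twonexptime.

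The step I expect to be the main obstacle is the second one: rigorously establishing that the accumulated constraint set $\Gamma$ stays within the finite pool $\bigcup_{P\in\Pi}\Cpost^P$ (i.e.\ that the closure operation only deletes tgds and never manufactures new ones), together with the claim that acyclicity and safety are preserved at every step so that Theorem \ref{theo-skb-closed} keeps applying. If instead the closure could synthesize composed tgds, the pool of possible $\Gamma$ would no longer be finite, and a far more delicate bound on the length and shape of the accumulated dependencies would be needed; so this ``deletion-only'' property is precisely what makes the finite-state reduction go through.
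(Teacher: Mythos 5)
Your overall architecture is the same as the paper's --- guess a sequence of doubly-exponentially bounded length, maintain the SKB representation of the outcome via Theorem \ref{theo-skb-closed}, and check the constraint at the end via Proposition \ref{tgd-entailment-decidability-prop} --- but the step you yourself flagged as the main obstacle is exactly where the argument breaks: the ``deletion-only'' property is false. The closure operation of Theorem \ref{theo-skb-closed} sets the new constraint set to $\Sigma \cup \removerels(\Gamma,\Scope_P)$, and $\removerels$ does not merely discard tgds whose premises mention newly opened relations. For each tgd $\lambda = \phi(\bar x) \rightarrow \psi(\bar z)$ with a premise atom $R_i(\bar y)$, $R_i \in \Scope_P$, and each tgd $\theta(\bar u) \rightarrow \eta(\bar v)$ in $\Gamma$ with $R_i$ in its conclusion, it constructs the resolved dependency $\lambda' = \phi'(\hat h(\bar x)) \wedge \theta(\hat \pi(\bar u)) \rightarrow \psi(\hat h(\bar z))$, i.e.\ it unfolds the $R_i$-atom against every tgd that can produce $R_i$-facts. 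These composed tgds lie outside $\bigcup_{P \in \Pi} \Cpost^P$, and under iterated applications of procedures the premises of the accumulated tgds can keep growing (compositions of compositions), so the reachable constraint sets do not range over subsets of any fixed finite pool. Consequently the state space is not finite as you argued, and the loop-cutting step has no basis.

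The repair is the paper's small-model argument, which exploits fullness on the \emph{instance} side rather than a closure property on the \emph{constraint} side. Since full tgds create no fresh values, every reachable SKB instance $K_i$ has active domain inside that of $I$ and size at most $|D|^{|\Sch|}$; and for the purposes of the satisfaction test of Proposition \ref{tgd-entailment-decidability-prop} (chasing $K_i$ extended by the frozen body of $\sigma$), any tgd of size exceeding $(|D|+|Q|)^{|\Sch|}$ is equivalent to one of at most that size. Hence the accumulated constraint sets need only be distinguished up to logical equivalence within a family of at most $2^{(|D|+|Q|)^{|\Sch|}}$ sets of bounded-size tgds, and a candidate sequence can be pruned whenever two positions $r_1 < r_2$ share the same instance $K_{r_1}=K_{r_2}$ and logically equivalent $\Gamma_{r_1},\Gamma_{r_2}$. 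This yields the doubly-exponential bound on the length of the witnessing sequence, after which your guess-and-check procedure and the \twonexptime\ accounting go through essentially as written. Two smaller points: the paper disposes of egds up front --- a safe-scope procedure can never ready an instance for a violated egd, since violations are preserved under extensions --- so only tgds require the search; and the per-step update is not polynomial in the current SKB, since $\removerels$ runs in time $O(2^{|\Gamma|})$, though this does not change the final complexity class.
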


%%% add data complexity!!!! 

\subsection{Readying Data with Respect to Queries}

We now study data readiness with respect to boolean CQs. 
Again, it is not enough to restrict our consideration to procedures with safe scope, as we can modify 
Proposition \ref{prop-readi-undecidable-constraints} to obtain undecidability for this case. 

%, even if $\Pi$ is a set of procedures with safe scope and $\Sigma$ contains only tgds. 

\begin{proposition}
\label{prop-readi-undecidable-queries}
The problem {\textsc{query readiness}} is undecidable, even if $\Pi$ is a set of procedures with safe scope and $Q$ 
is a query of one atom. 
\end{proposition}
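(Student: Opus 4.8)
\emph{Proof plan.} The plan is to re-run the reduction from the universal halting problem that underlies Proposition~\ref{prop-readi-undecidable-constraints}, and to convert its target constraint into a one-atom query by \emph{materializing} a halting witness into a fresh relation and then asking only whether that relation is nonempty. Recall that the construction behind Proposition~\ref{prop-readi-undecidable-constraints} produces, from a Turing machine $M$, an instance $I_M$ and a set $\Pi_M$ of procedures with safe scope whose application sequences simulate the DNR08-style chase that encodes the computation of $M$, together with a single acyclic target tgd $\sigma$, so that the existence of a sequence all of whose outcomes satisfy $\sigma$ corresponds to the (universal) halting of $M$. The key point I exploit is that halting is recorded \emph{positively} in such encodings, by the appearance of a designated ``halt'' configuration in the simulated tableau.

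Concretely, I would keep $I_M$ and the simulation procedures of $\Pi_M$ and add one further procedure $P_{\text{detect}}$ together with a brand-new relation $\textit{Halt}$ (say unary, with a single attribute $A$) not used anywhere else. I let $P_{\text{detect}}$ have $\Scope = \{\textit{Halt}[*]\}$, empty precondition, the total query $\textit{Halt}$ as its single preservation query, and postcondition the full acyclic tgd $\beta(\bar x) \rightarrow \textit{Halt}(A:x_1)$, where $\beta$ is the conjunctive query that recognises in the tableau relations exactly the halting configuration detected by the original reduction. Since the tableau relations lie outside the scope of $P_{\text{detect}}$ and $\textit{Halt}$ is fresh and full (no right-hand-side relation recurs on the left), $P_{\text{detect}}$ has safe scope. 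I then set $Q$ to be the one-atom boolean query $\exists x\,\textit{Halt}(A:x)$, i.e.\ ``$\textit{Halt}$ is nonempty.''

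For correctness I argue both directions through the monotonicity of conjunctive queries under addition of tuples. If $M$ halts, some finite chase sequence reaches a halting configuration; simulating it with procedures of $\Pi_M$ and then applying $P_{\text{detect}}$ yields a sequence in which the body $\beta$ is satisfied (by ground tableau tuples preserved by the preservation queries) in every outcome of the last step, so every such outcome must contain a ground $\textit{Halt}$ tuple and hence satisfies $Q$. Conversely, if $M$ does not halt, then for \emph{any} sequence the fully minimal run --- taking the minimal (chase) outcome at every step --- never produces a halting configuration, so the body of $P_{\text{detect}}$ is unsatisfied, $\textit{Halt}$ stays empty, and this one outcome already falsifies $Q$; thus no sequence makes \emph{all} outcomes satisfy $Q$. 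This matches the polarity of Proposition~\ref{prop-readi-undecidable-constraints}, giving undecidability with $\Pi = \Pi_M \cup \{P_{\text{detect}}\}$ (all safe scope) and $Q$ a single atom.

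The main obstacle, and the reason one cannot simply reuse $\sigma$, is a polarity mismatch: a one-atom query is monotone, whereas satisfaction of a tgd is not, so detecting a \emph{violation} of $\sigma$ positively is impossible. The reduction must therefore detect the genuinely positive halting event from the tableau rather than the satisfaction of $\sigma$, and I must check that the black-box nature of outcomes cannot subvert this --- spuriously added tuples can only help $Q$, so the binding case is the fully minimal run, which faithfully tracks $M$'s computation. Verifying that the minimal run neither suppresses a genuine $\textit{Halt}$ witness nor fabricates a false one (because $\textit{Halt}$ is fresh and only $P_{\text{detect}}$ writes to it, via a full tgd with ground body) is the delicate step on which the argument rests.
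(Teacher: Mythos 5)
Your proposal is correct, and it rests on the same underlying reduction as the paper: reuse the Turing-machine simulation of Proposition~\ref{prop-readi-undecidable-constraints}, argue the forward direction by monotonicity of CQs over all outcomes (via universality of the chase and the preservation guarantees of safe-scope procedures), and argue the backward direction by exhibiting the fully minimal chase run as a single falsifying outcome. The difference is in the last mile. The paper does not introduce any new machinery: it simply replaces the target tgd by its \emph{head}, i.e.\ the one-atom query $H(x',q_h,y')$. This works because the body of the target tgd, $H(x,q_0,y)$, is witnessed by the initial configuration $H(c_1,q_0,c_2)$, which lies in $I$ and is preserved in every outcome of every sequence of safe-scope procedures; hence on all relevant instances the tgd is equivalent to its head, and readiness for the tgd coincides with readiness for that query. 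Your stated obstacle (``one cannot simply reuse $\sigma$ because of a polarity mismatch'') is therefore a misdiagnosis: one cannot use $\sigma$ verbatim as a query, but one can and should use its head, since halting is already recorded positively in $H$. Your workaround --- adding a fresh relation $\textit{Halt}$, a detector procedure $P_{\text{detect}}$ with safe scope whose full acyclic tgd copies witnesses of the halting configuration into $\textit{Halt}$, and querying $\exists x\,\textit{Halt}(A:x)$ --- is sound (the scope/preservation checks you do are the right ones, and the minimal run indeed never populates $\textit{Halt}$ when $M$ diverges), and it has the mild virtue of being a reusable pattern that would compress \emph{any} CQ-detectable event into a one-atom query; but here it is an unnecessary extra layer. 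One small imprecision to fix if you keep your version: the witnesses of $\beta$ in an arbitrary outcome are not ``ground tableau tuples preserved by the preservation queries'' --- they are homomorphic images of chase-generated tuples (which involve existentially invented values); the correct justification is chase universality together with the fact that outcomes contain their inputs, exactly as in the forward direction of the constraints proof.
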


Again, we can manage this problem if we only consider procedures with safe scope given by full tgds.

\begin{theorem}
\label{theo-readi-decidable-queries}
The problem {\textsc{query-readiness}} is in \nexptime
for the cases where $\Pi$ is a set of procedures with safe scope 
with output constraints comprising full tgds only. 
\end{theorem}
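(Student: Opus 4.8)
The plan is to recast \textsc{query readiness} as a reachability question over the space of scoped knowledge bases and then to prove a small-model property bounding the length of a witnessing workflow. By Theorem~\ref{theo-outcome-fulltgds} the outcome of any sequence of safe-scope full-tgd procedures is representable by an SKB, and by Theorem~\ref{theo-skb-closed} applying one more such procedure to a full acyclic safe SKB again yields a full acyclic safe SKB that is effectively computable. Fixing, for each SKB, the canonical representative whose instance is the unique minimal element of its $\rep$-set, each procedure in $\Pi$ thus acts as a \emph{deterministic}, effectively computable transition $\skb \mapsto \skb'$. A workflow $\Pro_1,\dots,\Pro_n$ then induces a chain $\skb_0,\skb_1,\dots,\skb_n$ with $\skb_0 = (I,\emptyset,\emptyset)$, and since $Q$ is a boolean CQ and $\rep(\skb_n)$ consists exactly of the extensions of the minimal instance $M_n$ of $\skb_n$ that satisfy its constraints, $\certain_{\Pro_1,\dots,\Pro_n}(Q,I)$ is nonempty iff $Q$ maps homomorphically into $M_n$. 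So the goal is reached precisely when $Q$ holds on the current minimal instance.

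The core of the argument is the small-model property, which I would derive from three features specific to full tgds with safe scope. First, full tgds introduce no fresh values, so every reachable $M_j$ ranges over the fixed active domain of $I$; hence each $M_j$ has at most exponential size, and the minimal instances grow \emph{monotonically}, $M_0 \subseteq M_1 \subseteq \cdots$, because safe-scope procedures preserve all scoped relations through their total preservation queries and freeze the rest. Second, the constraint set of any reachable SKB stays inside the fixed, finite pool $\bigcup_{\Pro \in \Pi}\Cpost(\Pro)$ and the scope inside the schema, so the pair $(\Gamma_j,\Scope_j)$ ranges over an at most exponentially large set. Third, monotonicity of CQ evaluation means that once $Q$ holds on some $M_j$ it holds on every later one. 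Combining these, along a shortest witnessing workflow $M$ strictly grows at most exponentially often (it can acquire at most all tuples over the active domain), and between two consecutive strict growths the pair $(\Gamma,\Scope)$ cannot repeat, since otherwise the full canonical SKB state would repeat and the intervening procedures could be excised without altering the final, goal-satisfying SKB. This bounds the length of a shortest witnessing workflow by an exponential in $|I|$ and $|\Pi|$.

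The algorithm then nondeterministically guesses a workflow of this exponentially bounded length, computes the induced SKBs step by step via the effective constructions of Theorems~\ref{theo-outcome-fulltgds} and~\ref{theo-skb-closed}, and finally checks whether $Q$ holds on the minimal instance of the last SKB (equivalently, invokes the query-answering procedure of Proposition~\ref{prop-query-answering}). Each SKB has at most exponential size and each transition chases full tgds over it in time polynomial in that size, i.e. exponential in the input; multiplying the exponentially many steps by the exponential cost per step still yields exponential-time deterministic verification of a guessed, exponential-size certificate. Hence the problem lies in \nexptime.

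I expect the main obstacle to be the second paragraph. The delicate points are (i) establishing precisely that the constraint set of a reachable SKB stays within the fixed pool of postcondition tgds, so that no genuinely new dependencies are synthesized by composition, and (ii) arguing soundness of loop elimination, namely that the transition is a well-defined deterministic function on canonical minimal-instance representatives and that cutting a repeated state leaves the final SKB, and thus the satisfaction of $Q$, unchanged. It is exactly the monotonicity of $Q$ that lets us avoid tracking the (doubly-exponentially many) full SKB states and so separates this \nexptime bound from the \twonexptime bound of Theorem~\ref{theo-readi-decidable-constraints}; the subtlety is to exploit monotonicity as a progress measure while still reasoning correctly about how scope changes can discard previously guaranteed dependencies.
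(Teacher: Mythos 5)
Your overall skeleton (SKB representation, reducing certainty of $Q$ to a homomorphism into the minimal instance, a small-model bound on workflow length, then guess-and-check) matches the paper's, but the small-model argument in your second paragraph rests on a claim that is false — and it is exactly the point you flagged as delicate. The constraint set of a reachable SKB does \emph{not} stay inside the pool $\bigcup_{\Pro\in\Pi}\Cpost(\Pro)$: the construction behind Theorem~\ref{theo-skb-closed} replaces $\Gamma$ by $\Cpost \cup \removerels(\Gamma,\Scope_P)$, and $\removerels$ synthesizes genuinely new tgds by resolving atoms in the premises of old tgds against the conclusions of others; these composed dependencies have longer premises and are not postconditions of any procedure in $\Pi$. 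Consequently the pairs $(\Gamma_j,\Scope_j)$ do not range over an exponentially bounded set; the best available bound — the one used in the proof of Theorem~\ref{theo-readi-decidable-constraints}, via tgds of size at most $(|D|+|Q|)^{|\Sch|}$ — is doubly exponential. With that correction, your loop-elimination argument (excising segments between repeats of the \emph{full} SKB state) only bounds shortest witnesses doubly exponentially, which yields \twonexptime, not \nexptime; the separation from Theorem~\ref{theo-readi-decidable-constraints} that you aim for collapses.

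The missing idea, which is how the paper proceeds, is that for query readiness you can ignore $\Gamma$ altogether rather than merely use monotonicity of $Q$ as a progress measure. In the construction of Theorem~\ref{theo-skb-closed}, the instance component of the successor SKB is simply the chase of the current instance with the new procedure's full tgds; it depends only on the current instance and the procedure applied, not on $\Gamma$ or $\Scope$. Since the goal test is ``$Q$ maps homomorphically into the current instance,'' the only relevant state is the instance itself, which grows monotonically within the exponentially bounded space of tuples over the fixed active domain of $I$. So one prunes every step whose chase leaves the instance unchanged — soundness is immediate from the determinism of the instance transition alone, with no need for a repeat of the full SKB state — and a shortest witness therefore has at most exponentially many steps. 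This observation also repairs your verification phase: you should not recompute the full SKBs along the guessed sequence (iterating $\removerels$ can blow the constraint sets past exponential size); it suffices to chase the instances only, or, as the paper does, to guess the homomorphism from $Q$ into the final instance together with the chase steps producing its image.
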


In the following section we will extend this result in a significant way, by showing that  
\textsc{query readiness} continues to be decidable even if we add to $\Pi$ procedures that alter the schema of instances.

\section{Procedures over dynamic schemas} \label{sec:dynamic}

So far we have assumed that databases maintain their schemas through the history of procedures applied to them. 
One could argue that this is not a reasonable assumption, because schemas may and will eventually change along 
with the needs of the data. 

\begin{example}
\label{exa-motivate-dynamic}
In the motivating example in Section \ref{sec:intro} we sketched $P_\text{insur}$, a procedure that 
would augment relation $\locvisits$ with an attribute $\inid$ containing the 
insurance information of patients, with data drawn from  relation $\patients(\pid,\inid)$. 
For this case, the schema of the outcomes must be different from the schema of the input relation. Thus, to be able to capture this scenario,    
we need a different notion of outcome. 
\end{example}

In this section, we revisit the previously introduced notions, to show that our framework can be modified 
to include sequences of transformations under dynamic schemas. We will see that, perhaps surprisingly, most of the results 
of the previous sections still hold under the new, more general definitions.

In this direction of the work, 
using the named perspective on queries and databases will prove critical, as it allows 
us to define queries that hold under several schemas (recall Section \ref{subsec-multiple-schemas}). 
We will then be able to define procedures without a particular 
schema in mind,  requiring only that the conditions and queries therein remain compatible with the input schema. 

So how can we redefine procedures to be applicable to more than one schema? Following 
the same path that we took when defining static outcomes, we now define a dynamic version of the outcome set, 
wherein we treat schemas as other open propositions. We continue to treat procedures as black boxes, and now permit them to update schemas 
as well. We thus allow the outcomes to have any schema, as long as they satisfy certain compatibility restrictions that we now 
introduce. 

\subsection{Initial Definitions}

The first thing we need to do is to extend some of our notation to accommodate the new scenario. 
Somewhat abusing the notation, we will use $\schema(I)$ to denote the schema of an instance $I$. 

Let $\Sigma$ be a set of structure constraints and data constraints. We say that 
$I$ satisfies $\Sigma$, and write again $I \models \Sigma$, if (1) $\schema(I)$ satisfies the structure constraints in $\Sigma$,  
and (2) $I$ satisfies the data constraints in $\Sigma$. We recall that (2) holds only when $\schema(I)$ is compatible 
with all the data constraints in $\Sigma$. 

A schema $\Sch'$ extends a schema $\Sch$ if for each relation $R$ such that $\Sch(R)$ is defined, 
we have $\Sch(R) \subseteq \Sch'(R)$.
That is, $\Sch'$ extends $\Sch$ if $\Sch'$ assigns at least the same attributes to all the relations in $\Sch$.
An instance $J$ \emph{extends} an instance $I$ if (1) $\schema(J)$ extends $\schema(I)$, and (2)  
for each relation $R$ in $\schema(I)$ with assigned attributes $\{A_1,\dots,A_n\}$ and for each tuple $t$ in $R^I$, 
there is a tuple $t'$ in $R^{J}$ such that $t(A_i) = t'(A_i)$ for each $1 \leq i \leq n$.  Intuitively, $J$ 
extends $I$ if the projection of $J$ over the schema of $I$ is contained in $I$.

We now define the {\em dynamic semantics} for procedures. The definition is the same as that of Definition \ref{sec-procedure-semantics}, except that we now allow instances of different schemas to be in the set of a procedure's outcomes. 
\begin{definition}
%Let $I,\Sch$ and $I',\Sch'$ be two instances with their corresponding schemas.  
Let $I$ be an instance over a schema $\Sch$. 
An instance $I'$ over schema $\Sch'$ is a {\em possible outcome} of applying 
$P$ to $I$ \emph{under the dynamic semantics} if the following conditions hold: 
\begin{enumerate}
\item $P$ is applicable on $I$; 
\item $I' \models \C_\post$; 
\item The answers of the query $Q_{\Sch \setminus \Scope}$ do not change: 
$Q_{\Sch \setminus \Scope}(I) = Q_{\Sch \setminus \Scope}(I')$; and 
\item The answers to each query $Q$ in $\Q_\safe$ over $I$ are preserved: 
$Q(I) \subseteq Q(I')$. 
\end{enumerate} 
\end{definition}

In the definition, we state the schemas of instances $I$ and $I'$ explicitly, to reinforce the 
fact that schemas may change during the application of procedures. 
However, most of the time the schema can be understood from the instance, so we normally 
just say that an instance $I'$ is a possible outcome of $I$ under the dynamic semantics (even if the schemas of $I$ and $I'$ are different). 
%Let us also recall that we use $\schema(I)$ to denote the 
%schema of an instance $I$. 

The set of outcomes for a procedure $P$ when applied on an instance $I$ under the dynamic semantics can be thus stated as 
\begin{multline*}
\dynoutcome_P(I) = \{I' \mid I' \text{ is a possible outcome} \\ \text{ of applying } P \text{ under the dynamic semantics to } I \}, 
 \end{multline*}
 
 \noindent 
 and we likewise extend the notion of dynamic outcomes to a set of instances (now over possibly different schemas) and for a 
 sequence of procedures, just as we did in the previous sections. % of the paper. 

\begin{example}[Example \ref{exa-proc-3} continued]
\label{exa-proc-4} 
Recall the definition of procedure $P_\text{migrate}$ in Example \ref{exa-proc-2}. We explained that the instances $J_1$ and $J_2$ in Figure 
 \ref{fig-exa-1} (b) and (c) belong to the set of outcomes of the instance depicted in Figure \ref{fig-exa-1} (a). 
Both $J_1$ and $J_2$ also belong to the set of outcomes under the dynamic semantics. However, this set also contains instances 
over schemas that extend the schema of $I$, such as the instance $J_3$ in Figure  \ref{fig-exa-1} (d). Note that 
the relation \locvisits in $J_3$ has an additional attribute (\age). 
\end{example}

The  notion of dynamic semantics allows us to model, for instance, database procedures that add new 
attributes to relations. 

\begin{example}[Example \ref{exa-motivate-dynamic} continued]
\label{p-insur-complete}
Procedure $P_\text{insur}$ from the example in Section \ref{sec:intro} is formally described as follows:  

 \smallskip
\noindent
\underline{$\Scope$}: The scope is the constraint $\textit{LocVisits}[*]$; 

\smallskip
\noindent
\underline{$\C_\pre$}: The precondition is empty; 

\smallskip
\noindent
\underline{$\C_\post$}: The postcondition comprises the egd
\vspace{-5pt}
\begin{multline*}
\patients(\pid:x, \inid:y)\ \wedge \\ 
\locvisits(\pid: x, \inid: z) \rightarrow y = z ;  
\end{multline*}
(Note that instances compatible with this egd must each have an attribute \inid in relation 
\locvisits. This constraint also says that the insurance ID of patients in \locvisits must correspond 
to that in relation \patients.)  

\smallskip
\noindent
\underline{$\Q_\safe$}: We use again the query  
$\textit{LocVisits}(\fid:x,\pid:y,\timestamp:z)$; regardless of any newly added attributes, we need to maintain the 
projection of \locvisits onto these three attributes. 
\end{example}

\subsection{Fundamental Reasoning Tasks}

Just as with the static semantics, we begin by studying the problems of applicability 
and non-emptiness under the dynamic semantics. This is not just a trivial extension from 
what we had in Section \ref{sec-fundamental-problems}. Indeed, the use of dynamic semantics 
allows for defining procedures for which each instance in the set of outcomes has a schema that is different from the schema of the original 
instance. (See, e.g.,  procedure $P_\text{insur}$ of Example \ref{p-insur-complete}.)

\medskip
\noindent
\textbf{Applicability}. We know that applicability is undecidable under reasonable expectations, unless we restrict ourselves to procedures without preconditions. 
But what if these preconditions were defined using only structure constraints? Such procedures make sense under the dynamic semantics --- 
for instance, a procedure $P$ may require a certain attribute to be added to the schema before its application. 
As it turns out, such preconditions can be added with a very low cost to the applicability problem. 

\begin{proposition}
\label{ref-rep-decidable}
Under dynamic semantics, \textsc{Applicability}  is in polynomial time for sequences of relational procedures whose preconditions contain only structure constraints.
\end{proposition}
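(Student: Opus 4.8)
The plan is to reduce \textsc{Applicability} under the dynamic semantics to a purely schema-level computation. The crucial observation is that, once a precondition $\Cpre$ consists only of structure constraints, whether a procedure $P=(\Scope,\Cpre,\Cpost,\Qsafe)$ is applicable to an instance $J$ depends only on $\schema(J)$: applicability asks that each query in $\Qsafe$ be compatible with $\schema(J)$ and that $\schema(J)\models\Cpre$. Both requirements are \emph{monotone} under schema extension, since structure constraints and query compatibility only demand the \emph{presence} of certain relations and attributes. Consequently $P_n$ is applicable to every instance of $\dynoutcome_{P_1,\dots,P_{n-1}}(I)$, for all instances $I$ of $\Sch$, if and only if it is applicable to whatever outcome realizes the \emph{smallest} schema occurring among those outcomes. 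The whole problem therefore boils down to identifying that minimal schema.

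I would next describe how the outcome schemas of a single procedure evolve and track one minimal schema throughout the sequence. Applying $P_i$ to an instance over a schema $\Sch'$ is possible only if $\Sch'$ contains the relations and attributes $\mathrm{req}(P_i)$ demanded by $\Cpre_i$ and by compatibility of $\Q_{\safe,i}$; and any resulting outcome schema must (a) retain the relations and attributes of $\Sch'$ lying outside $\Scope_i$, forced by condition (3); (b) contain every relation and attribute mentioned in $\Cpost_i$, needed for $I'\models\Cpost_i$ to be well defined; and (c) contain the attributes queried by $\Q_{\safe,i}$, needed for condition (4). In-scope attributes not forced by (b) or (c) may be dropped. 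All three sets are read off syntactically in polynomial time. I then set $\Sch^{(0)}=\Sch$ and, at step $i$, put $\widehat\Sch=\Sch^{(i-1)}\cup\mathrm{req}(P_i)$ and let $\Sch^{(i)}$ be the union of the sets (a)--(c) computed for input $\widehat\Sch$. Because structure constraints are positive, $\widehat\Sch$ always exists, and a short induction shows that every outcome schema of $P_1,\dots,P_i$ over any $I$ of $\Sch$ contains $\Sch^{(i)}$, so $\Sch^{(i)}$ is a genuine lower bound.

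The algorithm is then immediate. If $P_1$ is already not applicable at the fixed schema $\Sch$, then $\dynoutcome_{P_1,\dots,P_{n-1}}(I)$ is empty for every $I$ and the answer is \emph{yes} vacuously; otherwise I compute $\Sch^{(1)},\dots,\Sch^{(n-1)}$ and answer \emph{yes} precisely when $P_n$ is applicable at $\Sch^{(n-1)}$. Soundness of a \emph{yes} answer follows from monotonicity: every nonempty outcome has a schema extending the lower bound $\Sch^{(n-1)}$, so applicability at $\Sch^{(n-1)}$ propagates to all of them, while empty outcomes are vacuous. For completeness I must show that when $P_n$ fails at $\Sch^{(n-1)}$ there really is an outcome whose schema is $\Sch^{(n-1)}$, hence missing the relation or attribute that $P_n$ requires.

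The step I expect to be the main obstacle, and would isolate as a lemma, is precisely this realizability: showing the minimal schema is attained rather than merely a lower bound. The key device is to start from the \emph{empty} instance $I_0$ over $\Sch$, because on an empty instance every tgd and egd of a postcondition holds vacuously, and the scope and preservation conditions (3)--(4) collapse to mere schema compatibility, as all projected answers are empty. I would prove by induction on $i$ that for every schema $\Sch'\supseteq\Sch^{(i)}$ the empty instance over $\Sch'$ belongs to $\dynoutcome_{P_1,\dots,P_i}(I_0)$: the inductive step applies $P_i$ to the empty instance over $\widehat\Sch$ (reachable by the hypothesis, and where $P_i$ is applicable thanks to the $\mathrm{req}(P_i)$ padding), then pads the output schema up to $\Sch'$. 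Instantiated at $i=n-1$ with $\Sch'=\Sch^{(n-1)}$ this yields the required counterexample. The remaining bookkeeping is routine: every iteration manipulates schemas of size bounded by $\Sch$ together with the attributes syntactically occurring in $P_1,\dots,P_n$, so the entire procedure runs in polynomial time.
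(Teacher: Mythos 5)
Your proposal is correct and follows essentially the same route as the paper's proof: both replace reasoning about instances by an iterative, purely syntactic computation of a minimal schema (your $\Sch^{(i)}$; the paper's algorithm $A(P_i,\cdot)$ producing a schema $\Sch_{\mathrm{min}}$), assembled from the out-of-scope part of the current schema, the relations and attributes mentioned in $\Cpost$, and those required for compatibility of $\Q_\safe$, and both then invoke monotonicity of structure constraints under schema extension in order to test $P_n$ at that single schema, which yields the polynomial bound. Two differences are worth recording. First, the paper proves only the lower-bound half (its lemma states that every outcome schema extends the computed schema) and never shows that this schema is actually \emph{attained} by some outcome, which is exactly what soundness of a negative answer requires; your empty-instance realizability lemma --- tgds and egds hold vacuously on the empty instance, and the scope and preservation conditions degenerate to mere compatibility --- supplies this missing direction, so your write-up is the more complete of the two. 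Second, the edge cases diverge: the paper's algorithm outputs failure (and then answers negatively) when an intermediate procedure's structural preconditions cannot be met or when a relation preserved by a total query is forced by $\Cpost$ to acquire new attributes, whereas you answer vacuously ``yes'' in the first situation and, in the second, simply let the extra attributes enlarge the minimal schema. Under the literal statement of \textsc{Applicability} --- is $P_n$ applicable to every instance of a possibly empty outcome set --- your treatment is the faithful one: in the arity-conflict case the empty instance still has outcomes (so the outcome set is not empty for \emph{every} input, contrary to what the paper's failure lemma suggests), and an empty outcome set makes the question vacuously true. Neither difference is a gap in your argument.
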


The proof of this proposition shows that in this case one can construct a minimal schema such that the schema of 
all instances in the outcomes of sequences of procedures extend the minimal schema. 

\medskip
\noindent
\textbf{Nonemptiness}. Unfortunately, the inclusion of procedures that may alter the schema of instance has 
an immediate impact on the complexity of nonemptiness. In Section \ref{sec-fundamental-problems} we commented that 
nonemptiness was decidable for procedures made of full-tgds. Unfortunately, the following negative result 
shows that nonemptiness is undecidable even in this case, if we also allow procedures whose only 
goal is to alter the schema of instances.  

\begin{proposition}
\label{prop-rep-undec-nothingworks}
There exists a sequence $\Pro_1,\Pro_2,\Pro_3$ of procedures, with
$\Pro_1$ and $\Pro_3$ having only full tgds and $\Pro_2$ having
postconditions built using only structure constraints, such that the
following problem is undecidable: Given an instance $I$, is the set
$\dynoutcome_{\Pro_1,\Pro_2,\Pro_3}(I)$ nonempty?.
\end{proposition}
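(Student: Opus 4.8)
The plan is to obtain undecidability by a reduction that reuses the two-procedure construction behind Proposition~\ref{prop-rep-undec}, whose acyclic postconditions are allowed to carry existential quantifiers, and to compensate for the loss of those existentials---now forbidden, since $\Pro_1$ and $\Pro_3$ may use only \emph{full} tgds---by delegating the creation of existential witnesses to the schema-altering procedure $\Pro_2$. Concretely, I would start from a set $\Sigma$ of (possibly cyclic, possibly existential) tgds for which testing nonemptiness of the outcome is already known to be undecidable by Proposition~\ref{prop-rep-undec-weak}, together with the scope-and-preservation encoding used there and in Example~\ref{exa-scope} to turn postconditions into \emph{filters}: a tgd whose right-hand relation lies outside the scope of a procedure forces the input to already satisfy that tgd, on pain of an empty outcome. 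This filtering is precisely what makes nonemptiness nontrivial, since a full-tgd procedure whose scope covers the right-hand sides of its dependencies always has a nonempty outcome (as in Proposition~\ref{prop-rep-safe-scope}).

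The heart of the construction is the simulation of existentials. For each existential variable $z$ occurring in a postcondition $\phi(\bar x)\to\exists z\,\psi(\bar x,z)$, I would first use a full tgd in $\Pro_1$ to materialize an auxiliary relation collecting the tuples $\bar x$ that satisfy the body $\phi$, then let $\Pro_2$ add a fresh attribute $A_z$ to that relation, its postcondition being the single structure constraint requiring $A_z$ to be present. Because the values in a newly added column are entirely unconstrained, ranging over all of $D$ across the (infinite) outcome set, the choice of an $A_z$-value for each tuple plays exactly the role of the existential witness for that firing of the dependency; one application of $\Pro_2$ suffices because the procedures are fixed and hence mention only a bounded number of existential variables, one new attribute per variable. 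Finally $\Pro_3$, using full tgds that read the witness off $A_z$, enforces the head, so that $\phi\wedge(A_z\text{ holds }z)\to\psi$ becomes a genuine full tgd. The remaining full, possibly cyclic, dependencies of $\Sigma$ are distributed between $\Pro_1$ and $\Pro_3$ as in Proposition~\ref{prop-rep-undec}, with scopes and preservation queries set so that $\Pro_3$ additionally filters out every witness assignment violating the global consistency conditions of $\Sigma$.

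It then remains to verify the two directions. For the forward direction, any finite model witnessing a positive source instance yields a compatible choice of attribute values surviving all filters, so the corresponding branch of $\dynoutcome_{\Pro_1,\Pro_2,\Pro_3}(I)$ is nonempty; as the outcome of a sequence is the union over all branches, the whole set is nonempty. For the converse, I would show that every instance in the outcome projects back, after forgetting the auxiliary relations and the witness attributes, to a finite model of $\Sigma$ extending $I$, so a nonempty outcome forces a positive source instance.

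The step I expect to be the main obstacle is exactly this correctness verification, and within it the converse direction: I must argue that the single, one-shot schema extension of $\Pro_2$ is expressive enough to supply all witnesses demanded by both surrounding procedures, and---more delicately---that emptiness is forced for negative instances even though each full-tgd procedure with permissive scope is individually never empty. All of the emptiness must therefore be engineered through the interaction of scope and preservation across the composition, now bookkept over a dynamic, growing schema; checking that the applicability conditions (compatibility of the preservation queries with the extended schema produced by $\Pro_2$) and the filtering behaviour of $\Pro_3$ compose correctly is where the real work lies.
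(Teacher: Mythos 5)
Your proposal is correct and matches the paper's own proof in all essential respects: the paper likewise starts from the semigroup-embedding encoding behind Propositions~\ref{prop-ap-und} and~\ref{prop-rep-undec-weak}, uses full tgds in $\Pro_1$ to materialize the bodies needing witnesses into an auxiliary relation ($D(x)\wedge D(y)\rightarrow G^{\text{binary}}(x,y)$), lets $\Pro_2$'s structure constraint $G^{\text{binary}}[A,B,C]$ add an unconstrained column that plays the role of the existential witnesses, and has $\Pro_3$ (with empty scope) both read the witnesses back via $G^{\text{binary}}(x,y,z)\rightarrow G(x,y,z)$ and filter out inconsistent guesses through containment tgds into out-of-scope relations such as $F^{\text{check}}$. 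The correctness verification you flag as the main obstacle is handled in the paper exactly as you anticipate, by projecting outcomes back onto the original schema and arguing both directions of the reduction.
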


\medskip
\noindent
\textbf{Safe schema alteration}. 
We define a notion that is analogous to that of safe scope, for the case of 
procedures that alter the schema of databases. We say that a 
procedure $P = (\Scope,\Cpre,\Cpost,\Q_\safe)$ has \emph{safe schema-alterations} if both the following conditions hold: 
\begin{itemize}
\item Both $\Scope$ and $\Q_\safe$ are empty; and 
\item $\Cpost$ is a set of structure constraints. 
\end{itemize} 

Note that the procedure $P_\text{insur}$ of Example \ref{p-insur-complete} can be represented as two different procedures, 
one that alters the schema and another that migrates the needed data. The next result delivers on our promise that 
procedures with safe scope together with procedures with safe schema-alterations form a well-behaved class. 
For readability, we denote by $\P^{\textit{safe},\textit{alter}}$ the class that comprises procedures with safe scope and 
procedures with safe schema-alterations. 

\begin{proposition}
\label{prop-rep-safe-scope-dyn}
Given an instance $I$ and a sequence $\Pro_1,\dots,\Pro_n$ of procedures in $\P^{\textit{safe},\textit{alter}}$, the problem of deciding  
whether $\dynoutcome_{\Pro_1,\dots,\Pro_n}(I)$ is in polynomial time. 
\end{proposition}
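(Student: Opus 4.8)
The plan is to reduce nonemptiness to a syntactic applicability test by exploiting two nonemptiness guarantees specific to the two kinds of procedures in $\P^{\textit{safe},\textit{alter}}$, together with the freedom the dynamic semantics gives to enrich an outcome with fresh schema.

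First I would establish that each kind of procedure, once applicable, yields a nonempty outcome under the dynamic semantics. For a procedure $P$ with safe scope this is the dynamic analogue of Proposition \ref{prop-rep-safe-scope}: since $\Cpost$ is an acyclic set of tgds whose right-hand-side relations are exactly its scope, one can chase these tgds --- extending the schema by whatever attributes the tgds mention on scope relations, and introducing any missing left-hand-side relation as empty so that its tgd is vacuous --- producing an instance that satisfies $\Cpost$, leaves every out-of-scope relation untouched, and only grows the content of scope relations, so conditions (2)--(4) hold. For a procedure with safe schema-alterations the guarantee is immediate: its precondition and $\Q_\safe$ are empty, so it is always applicable, and taking the output to be the input with its schema extended to satisfy the structure constraints in $\Cpost$ (new attributes filled arbitrarily, new relations left empty) satisfies all four conditions.

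Next I would note that, for both classes, applicability depends only on the schema: a safe-schema-alteration procedure is always applicable, while a safe-scope procedure is applicable on an instance over $\Sch$ exactly when every relation appearing on a right-hand side of its tgds is present in $\Sch$, since its $\Q_\safe$ consists precisely of the total queries for those relations. I would then prove an enrichment lemma: whenever $P$ is applicable on $I$, its outcome set contains instances whose schema additionally carries any prescribed finite set of fresh relations (left empty) and attributes (filled arbitrarily), because such additions affect neither $\Cpost$, nor $Q_{\Sch\setminus\Scope}$, nor $\Q_\safe$; moreover, such additions persist under every later safe procedure, since out-of-scope content is fixed by condition (3) and preserved content only grows under condition (4). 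Combining these, I would show that $\dynoutcome_{\Pro_1,\dots,\Pro_n}(I)$ is nonempty if and only if $\Pro_1$ is applicable on $I$: necessity is immediate, since an inapplicable $\Pro_1$ makes $\dynoutcome_{\Pro_1}(I)$ empty and collapses the whole composite; for sufficiency I would choose $J_1\in\dynoutcome_{\Pro_1}(I)$, via the enrichment lemma, whose schema already contains every relation and attribute mentioned in $\Pro_2,\dots,\Pro_n$, after which every subsequent procedure remains applicable (its scope relations are present and persist) and the nonemptiness guarantees supply a full witness chain $I=J_0,J_1,\dots,J_n$. Deciding whether $\Pro_1$ is applicable is a syntactic comparison of its scope against $\schema(I)$, hence polynomial; and even a more conservative algorithm that tracks the minimal schema $\Sch_0\subseteq\dots\subseteq\Sch_n$ along the lines of Proposition \ref{ref-rep-decidable} and tests applicability at each step stays polynomial, as every procedure mentions only finitely many relations and attributes.

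The main obstacle is the enrichment-and-persistence step: verifying that supplying at the first step all relations and attributes needed downstream yields a legal outcome of $\Pro_1$, and that these witnesses survive every intervening procedure without clashing with a scope restriction, a preservation query, or a postcondition. This is exactly where the safety restrictions are indispensable --- empty preconditions, acyclic tgds whose scope equals their right-hand sides paired with matching total preservation queries, and schema-alterations that neither modify content nor carry preservation queries --- and it is their absence that makes nonemptiness undecidable in Proposition \ref{prop-rep-undec-nothingworks}.
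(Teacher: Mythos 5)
There is a genuine gap. Your proposal rests on two claims that are false under the paper's dynamic semantics: (a) that a procedure in $\P^{\textit{safe},\textit{alter}}$, once applicable, always has a nonempty dynamic outcome, and (b) the enrichment lemma, asserting that attributes can be added to an outcome's schema "filled arbitrarily" without affecting $\Q_\safe$. Both break on the interaction between \emph{total} preservation queries and the schema extensions that postconditions force. Concretely, take a safe-scope procedure $P$ with $\Cpost=\{S(B:y)\rightarrow T(A:y,B:y)\}$, $\Scope=\{T[*]\}$, $\Q_\safe=\{T\}$, and an instance $I$ over a schema with $\Sch(T)=\{A\}$, $\Sch(S)=\{B\}$, and $T^I=\{(c)\}$ nonempty. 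Then $P$ is applicable on $I$ (the total query $T$ is compatible because $\Sch(T)$ is defined, and there are no preconditions), yet $\dynoutcome_P(I)=\emptyset$: any outcome $I'$ must satisfy $\Cpost$, and satisfaction requires compatibility, so $\schema(I')(T)\supseteq\{A,B\}$; but then the answer to the total query $T$ on $I'$ consists of (unnamed) tuples of arity at least $2$, and the $1$-tuple $(c)\in Q(I)$ can never belong to it, so condition (4) fails. The same mechanism defeats the enrichment lemma (you cannot add attribute $B$ to $T$ in an outcome of a procedure whose total query must preserve nonempty $1$-ary content of $T$), and hence the claimed equivalence ``nonempty iff $\Pro_1$ is applicable'': chaining a $\Pro_1$ that pumps content into $T$ before a $\Pro_2$ whose tgds mention a new attribute of $T$ gives a sequence in which every procedure is applicable at every step while the composite outcome set is empty.

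This arity conflict is exactly what the paper's proof is organized around, which is why it does not reduce nonemptiness to a first-step applicability test. The paper iterates the schema-tracking algorithm $A(P,\Sch)$ from the proof of Proposition \ref{ref-rep-decidable} along the whole sequence: that algorithm labels each relation bound by a total preservation query with its current number of attributes and outputs failure whenever postconditions (of the current procedure, or of later ones as the iteration proceeds) would force such a relation to acquire more attributes than its label permits; only when no call fails does it invoke Proposition \ref{prop-minimal} (the conditional-instance/chase construction) to conclude that the outcome is nonempty. Your proposal has no counterpart of this arity bookkeeping, and without it the sufficiency direction of your characterization is unsound, so the polynomial-time bound does not follow from your argument.
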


%The proof of Proposition \ref{prop-rep-safe-scope} is based on the idea of chasing instances with the dependencies in 
%the procedures, and of adding 
%attributes to schemas as dictated by the procedures. As usual, to enable the chase  
%we need to introduce labeled nulls in instances (see, e.g., \cite{IL84,FKMP05}); composing procedures calls for extending the techniques 
%of \cite{APR13} to enable chasing instances that already have null values. Using the enhanced approach, 
%one can show that the result of the chase is a good over-approximation of the set of outcomes of a sequence of procedures. In Section \ref{query-reason-sec} we will see how  this over-approximation can be used to answer queries. 

%To state this result, 
%we introduce conditional tables \cite{IL84}. 

\medskip
\noindent
\textbf{Procedures with safe scope and dynamic semantics:} Before turning to more complex reasoning tasks, 
we observe that the dynamic semantics does not really interfere with reasoning tasks when we 
deal only with procedures that do not alter the schema. More precisely, 
let us say that a procedure $P$ \emph{does not force alteration of} schemas if, for any instance $I$ compatible with $P$ 
such that $\dynoutcome_P(I) \neq \emptyset$, the set $\dynoutcome_P(I)$ contains at least one instance with the same 
schema as that of $I$. We have the following observation. 

\begin{lemma}
\label{lemma-one}
Let $I$ be an instance, $P_1,\dots,P_n$ a sequence of procedures that do not force alteration of schemas, and $\alpha$ 
a first-order expression over $\schema(I)$. Then $\outcome_{P_1,\dots,P_n}(I) \models \alpha$ if and only if 
 $\dynoutcome_{P_1,\dots,P_n}(I) \models \alpha$.
\end{lemma}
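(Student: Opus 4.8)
The plan is to establish the two directions of the biconditional separately, reading $\mathcal J \models \alpha$ as ``every instance of $\mathcal J$ satisfies $\alpha$.'' The backbone is the observation that the static outcomes are precisely the dynamic outcomes that keep the schema of $I$ fixed, together with the fact that $\alpha$, being written over $\schema(I)$, cannot distinguish an instance from its reduct to $\schema(I)$. For the easy ($\Leftarrow$) direction I would first note that, for a single procedure $P$ and any instance $K$, the four conditions defining a possible outcome under the static semantics (Definition \ref{sec-procedure-semantics}) and under the dynamic semantics are literally identical except that the dynamic one does not fix the schema; hence $\outcome_P(K) = \{J \in \dynoutcome_P(K) \mid \schema(J) = \schema(K)\}$ and in particular $\outcome_P(K) \subseteq \dynoutcome_P(K)$. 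Lifting this through the recursive definition of sequences, using monotonicity of $\outcome_P(\cdot)$ and $\dynoutcome_P(\cdot)$ under set inclusion of their inputs, gives $\outcome_{P_1,\dots,P_n}(I) \subseteq \dynoutcome_{P_1,\dots,P_n}(I)$; so if every instance of the larger set satisfies $\alpha$, so does every instance of the subset.

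For the hard ($\Rightarrow$) direction I would introduce the reduct map $\pi_{\schema(I)}$, which drops relations outside $\schema(I)$ and restricts the remaining tuples to the attributes of $\schema(I)$, and record the invariance: since every atom of $\alpha$ mentions only relations and attributes of $\schema(I)$, for any $J$ whose schema extends $\schema(I)$ one has $J \models \alpha$ iff $\pi_{\schema(I)}(J) \models \alpha$, because $J$ and its $\schema(I)$-reduct are indistinguishable by $\schema(I)$-formulas. With this in hand it suffices to prove that the reduct map sends dynamic outcomes to static ones: for every $J \in \dynoutcome_{P_1,\dots,P_n}(I)$ we have $\pi_{\schema(I)}(J) \in \outcome_{P_1,\dots,P_n}(I)$. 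Combining this with the invariance and the hypothesis that all static outcomes satisfy $\alpha$ forces every dynamic outcome to satisfy $\alpha$.

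I would prove the reduct claim by induction on the length of the sequence. The inductive step is the heart of the argument: given $K \in \dynoutcome_{P_i}(K')$ with $\pi_{\schema(I)}(K') \in \outcome_{P_1,\dots,P_{i-1}}(I)$, one argues $\pi_{\schema(I)}(K) \in \outcome_{P_i}(\pi_{\schema(I)}(K'))$. Each defining condition of a possible outcome --- applicability, satisfaction of $\Cpost$, invariance of $Q_{\Sch\setminus\Scope}$, and preservation of the queries in $\Qsafe$ --- is phrased through constraints and queries, so by the same reduct-invariance it transfers between $K,K'$ and their $\schema(I)$-reducts, \emph{provided} those constraints and queries remain compatible with $\schema(I)$. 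This is exactly where the hypothesis that each $P_i$ does not force alteration of schemas is used: it guarantees that $\Cpost$ can be met without enlarging the schema, i.e. that $\Cpost$ is compatible with $\schema(I)$, so the reduct $\pi_{\schema(I)}(K)$ genuinely satisfies it over the base schema.

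The main obstacle is making this inductive step airtight across schema changes. Along a dynamic run, intermediate instances may legitimately acquire new attributes, and I must ensure that the parallel run over $\schema(I)$ does not become empty where the dynamic run still proceeds; concretely, I must certify that each $P_i$ stays applicable over the reduct and that its pre- and postconditions never demand a genuinely larger schema. Controlling this requires invoking the non-forcing hypothesis at every step of the sequence, so that the reduced run mirrors the dynamic run step for step. Once the reduct map is shown to commute with procedure application in this manner, the theorem follows by chaining the reduct-invariance of $\alpha$ with the inductive claim.
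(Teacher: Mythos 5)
The paper never actually writes a proof of this lemma --- it is stated as a bare observation --- so your proposal can only be measured against the argument the authors presumably intend, which is indeed the one you outline: static outcomes are exactly the schema-preserving dynamic outcomes, $\alpha$ cannot distinguish an instance from its reduct to $\schema(I)$, and the reduct of every dynamic outcome should again be a static outcome. Your ($\Leftarrow$) direction (the inclusion $\outcome_{P_1,\dots,P_n}(I) \subseteq \dynoutcome_{P_1,\dots,P_n}(I)$ via monotonicity) and your invariance claim (satisfaction of any named atom compatible with $\schema(I)$, and hence of any first-order $\alpha$ over $\schema(I)$, transfers between $J$ and $\pi_{\schema(I)}(J)$ whenever $\schema(J)$ extends $\schema(I)$) are both correct.

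The gap is in the inductive step of the ($\Rightarrow$) direction, exactly at the point you flag and then wave through. You claim the non-forcing hypothesis guarantees that each $P_i$'s constraints and queries ``never demand a genuinely larger schema'' and hence that $P_i$ stays applicable on the reduct. It guarantees no such thing: the hypothesis is conditioned on inputs that are \emph{compatible} with $P_i$, so it is silent when the reduct $\pi_{\schema(I)}(K_{i-1})$ is not compatible with $P_i$, even though the dynamic intermediate $K_{i-1}$ (over an enlarged schema) is. Concretely, let $\schema(I)$ give $R$ the single attribute $A$, let $P_1$ be the trivial procedure (empty scope, conditions, and preservation queries), and let $P_2$ have empty scope and conditions but $\Qsafe = \{R(B:x)\}$ for a fresh attribute $B$. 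Both procedures are non-forcing: $P_1$ trivially, and $P_2$ because every instance compatible with it is itself one of its dynamic outcomes. Yet $\outcome_{P_1,P_2}(I) = \emptyset$, since $P_2$ is not applicable to any instance over $\schema(I)$, whereas $\dynoutcome_{P_1,P_2}(I)$ is nonempty: $\dynoutcome_{P_1}(I)$ contains extensions $K'$ of $I$ obtained by adding attribute $B$ to $R$, $P_2$ applies to these, and every resulting dynamic outcome projects exactly onto $I$ over $\schema(I)$. Taking $\alpha$ to be any constraint over $\schema(I)$ that $I$ violates, the static side of the lemma holds vacuously while the dynamic side fails; so the reduct claim, and with it the step-for-step mirroring your induction needs, breaks under the paper's literal definitions. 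A complete proof must confront this point rather than delegate it to the non-forcing hypothesis --- for instance by additionally assuming that all of $\Cpre$, $\Cpost$, and $\Qsafe$ of every $P_i$ are compatible with $\schema(I)$ (the situation in which the paper actually invokes the lemma), or by an explicit argument that the static run cannot become empty while the dynamic run survives.
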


It follows from Lemma \ref{lemma-one} that the algorithmic results shown in Sections \ref{analyze-sec} and 
\ref{readiness-sec} continue to hold under the dynamic semantics. Indeed, the challenge now is in  
reasoning about the outcomes when the mix involves procedures that alter schemas. 
We proceed to study problems related to query answering, leaving those related to constraints to future work. 

\subsection{Reasoning About Queries Under the Dynamic Semantics}
\label{query-reason-sec}

In this section we address the issue of decidability of reasoning about queries in presence of an instance and a 
sequence of procedures belonging to the class $\P^{\textit{safe},\textit{alter}}$ (i.e., safe-scope or safe schema-alteration). The main technique we use is the 
introduction of \emph{conditional tables} as an approximation of the set of outcomes. To state this result, 
we recall the notion of conditional tables introduced by Imieli{\'n}ski and Lipski \cite{IL84}. %some notation following that of %\cite{IL84}. 
 
Let $\Nulls$ be an infinite set of   {\em null values} that is disjoint from the set of domain values $D$. 
A {\em naive instance} $T$ over schema $\Sch$ assigns a finite relation $R^T \subseteq (D \cup \Nulls)^n$ 
to each relation symbol $R$ in $\Sch$ of arity $n$. 
Conditional instances extend naive instances by attaching conditions over the tuples. 
Formally, an \emph{element-condition} is a positive boolean combination of formulas 
of the form $x = y$ and $x \neq y$, where $x \in \Nulls$ and $y \in (D \cup \Nulls)$. 
A {\em conditional instance} $T$ over schema $\Sch$ assigns to each $n$-ary relation 
symbol $R$ in $\Sch$  a pair $(R^T,\rho^T_R)$, where $R^T \subseteq (D \cup \Nulls)^n$ and 
$\rho^T_R$ assigns an element-condition to each tuple $t \in R^T$. 
A conditional instance $T$ is \emph{positive} if none of the element-conditions in its tuples uses 
inequalities (of the form $x \neq y$). 
%A conditional instance with a \emph{global condition} is a pair $(T,\eta)$, where $T$ is a conditional instance and 
%$\eta$ is an element-condition.

%Note that a naive instance can be seen as a conditional instance 
%where every tuples is assigned the condition $x = x$, for a fresh null $x$, and thus we only give the semantics 
%for conditional tables. 

%A naive (resp. conditional) instance \emph{with scope} is a pair $\mathcal T = (T,\Rel)$, where $T$ is a naive (resp. conditional) table and $\Rel$ is a set of relation names. 

To define the semantics, let $\nulls(T)$ be the set of all the nulls in any tuple in $T$ or in an element-condition used in $T$. % , if $T$ is a conditional instance. 
Given a substitution $\nu: \nulls(T) \rightarrow D$, let $\nu^*$ be the 
extension of $\nu$ to a substitution $D \cup \nulls(T) \rightarrow D$ that is the identity on $D$. We say that 
$\nu$ {\em satisfies an element-condition} $\psi$, written $\nu \models \psi$, if for each equality $x= y$ in $\psi$ it is the case that 
$\nu^*(x) = \nu^*(y)$, and for each inequality $x \neq y$ we have that $\nu^*(x) \neq \nu^*(y)$.  
Further, we define the set $\nu(R^T)$ as $\{\nu^*(t) \mid t \in R^T$ and $\nu \models \rho^T_R(t)) 
\}$. %, where $\nu(t)$ is given by replacing each element $n \in \Nulls$ 
%in a tuple $t$ by $\nu(n)$. 
Finally, for a conditional instance $T$, $\nu(T)$ is the 
instance that assigns $\nu(R^T)$ to each relation $R$ in the schema. 

The set of instances represented by $T$, denoted by $\rep(T)$, is defined as 
$\rep(T) = \{I \mid$ there is a substitution $\nu$ such that $I$ extends $\nu(T) \}$. 
%If $G = (T,\eta)$ is a conditional instance with a global condition, then 
%$\rep(G)= \{I \mid$ there is a substitution $\nu$ such that $\nu \models \eta$ and $I$ extends $\nu(T) \}$. 
Note that the instances $I$ in this definition could have potentially bigger schemas than $\nu(T)$. In other words, 
we assume that the set $\rep(T)$ contains instances over any schema extending the schema of $T$. 

The next result states that conditional instances are good over-approximations for the outcomes 
of sequences of procedures. More interestingly, these approximations preserve the \emph{minimal instances} 
of outcomes. To put this formally, we say that an instance $J$ in a set $\Inst$ of instances is {\em minimal} if there is no instance 
$K \in \Inst, K \neq J$, and  
such that $J$ extends $K$. 

\begin{proposition}
\label{prop-minimal}
Let $I$ be an instance, and $\Pro_1,\dots,\Pro_n$ a sequence of procedures in $\P^{\textit{safe},\textit{alter}}$. 
Then either $\dynoutcome_{\Pro_1,\dots,\Pro_n} = \emptyset$, or one can construct 
a positive conditional instance $T$ 
%with global condition  
such that 
\begin{itemize}
\item $\dynoutcome_{\Pro_1,\dots,\Pro_n}(I) \subseteq \rep(T)$; and 
\item If $J$ is a minimal instance in $\rep(T)$, then $J$ is also minimal in 
$\dynoutcome_{\Pro_1,\dots,\Pro_n}(I)$. 
\end{itemize}
Further, $T$ is of double-exponential size with respect to $\Pro_1,\dots,\Pro_n$ and $I$, or 
exponential if $n$ is fixed. 
\end{proposition}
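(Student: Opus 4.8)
The plan is to build the conditional instance $T$ by induction on the length $n$ of the sequence, maintaining at each prefix a positive conditional instance $T_i$ whose represented set over-approximates $\dynoutcome_{P_1,\dots,P_i}(I)$ and whose minimal instances coincide with the minimal instances of $\dynoutcome_{P_1,\dots,P_i}(I)$. First I would dispose of the emptiness alternative: by Proposition \ref{prop-rep-safe-scope-dyn} nonemptiness of $\dynoutcome_{P_1,\dots,P_n}(I)$ is decidable, so if the set is empty we return the first disjunct. Otherwise I set $T_0$ to be $I$ viewed as a conditional instance (no nulls, all element-conditions trivially true), which satisfies the invariant for the empty prefix, and then process the procedures one at a time.

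The inductive step splits according to the type of $P_{i+1} \in \P^{\textit{safe},\textit{alter}}$. If $P_{i+1}$ has safe scope, its postcondition $\Cpost$ is an acyclic set of tgds whose right-hand-side relations are exactly the scope and whose preservation queries bind exactly those relations; by the argument behind Proposition \ref{prop-outcome-fulltgds-justone} the effect on a single instance $J$ is to keep $J$ and add only what $\Cpost$ forces, so the minimal outcome reachable from $J$ is the chase of $J$ with $\Cpost$. I would therefore obtain $T_{i+1}$ by chasing the conditional instance $T_i$ with $\Cpost$ in the style of \cite{APR13}: each firing matches the tgd body into $T_i$ under a conjunction of null-equalities, records that conjunction as the element-condition of the freshly added right-hand-side tuples, and introduces fresh nulls for the existentially quantified variables. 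Since $\Cpost$ contains no egds, only equalities are ever recorded, so $T_{i+1}$ stays positive; acyclicity bounds the number of chase rounds and guarantees termination. If instead $P_{i+1}$ has safe schema-alterations, then $\Scope$ and $\Qsafe$ are empty and $\Cpost$ is a set of structure constraints, so the only freedom is to enlarge the schema as required and pad the existing tuples in the new attribute positions; here I obtain $T_{i+1}$ from $T_i$ by extending $\schema(T_i)$ to satisfy the structure constraints and filling each new attribute slot of every tuple with a fresh null, which faithfully records that any domain value is admissible there.

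Correctness amounts to two claims about the invariant. For the over-approximation $\dynoutcome_{P_1,\dots,P_i}(I) \subseteq \rep(T_i)$, I would argue that ``extends'' is a well-founded partial order on the finite instances produced, so every outcome extends some minimal outcome; that each minimal outcome equals $\nu(T_i)$ for a suitable valuation $\nu$ (the completeness direction, supplied by the chase generating exactly the forced tuples and by the schema-padding capturing arbitrary new values); and that $\rep(T_i)$ is closed under extensions by definition. For the stated minimality-preservation bullet I must show soundness, that any valuation and extension witnessing a minimal instance of $\rep(T_i)$ is realizable as a genuine outcome, which uses that the recorded positive conditions describe precisely the null-identifications under which the corresponding tuples are present. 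The delicate point is the interaction highlighted in the example preceding the proposition: when a later procedure places a relation back into its scope, a tgd established earlier may cease to hold on the outcomes, while the fresh nulls introduced by schema alteration may later unify inside the body of a subsequent tgd. Tracking both phenomena through the element-conditions, and verifying that the minimal frontier of $\rep(T_i)$ is thereby neither enlarged nor shrunk, is the main obstacle of the proof.

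Finally, for the size bound I would observe that a single safe-scope step can raise the conditional instance to a fixed power of its current size (the exponent determined by the maximal tgd-body size and the acyclicity depth of $\Cpost$, both bounded by the procedure), up to an additional factor exponential in $P_{i+1}$, whereas a schema-alteration step only multiplies the number of attributes. Composing these estimates over $n$ steps, the exponents multiply, yielding a table of double-exponential size in $P_1,\dots,P_n$ and $I$; fixing $n$ caps the compounded exponent by a constant and collapses the estimate to a single exponential. This matches the bounds in the statement and feeds directly into the query-reasoning results established afterward.
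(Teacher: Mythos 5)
Your construction is, in substance, the paper's own proof: the paper also proceeds one procedure at a time, handles each safe-scope step by chasing the current \emph{positive} conditional instance with the postcondition tgds in the style of \cite{APR13} (its auxiliary lemmas are adaptations of Theorem 5.1 and Proposition 4.6 of that paper), handles each safe schema-alteration step by extending the schema to the required minimal one and padding every existing tuple with fresh nulls in the new attribute positions, maintains exactly your invariant (over-approximation of $\dynoutcome$ plus preservation of the minimal instances), and gets the size bound by composing per-step exponential blow-ups: each step costs at most $|G|^{|P|}$, hence $|G|^{n|P|}$ overall, which is doubly exponential in general and singly exponential when $n$ is fixed.

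Two places where your sketch needs repair before it is a proof. First, your handling of the emptiness alternative is circular with respect to the paper's logical structure: Proposition \ref{prop-rep-safe-scope-dyn} is proved in the paper \emph{as a consequence of} Proposition \ref{prop-minimal} (together with Proposition \ref{ref-rep-decidable}), so you cannot invoke it here. The fix is cheap: either note that nonemptiness of $\dynoutcome_{\Pro_1,\dots,\Pro_n}(I)$ forces nonemptiness of every prefix outcome, so in the branch you must handle your per-step constructions never meet an empty set; or, as the paper does, detect failure step by step with the polynomial-time minimal-schema algorithm from the proof of Proposition \ref{ref-rep-decidable}, which is independent of the present proposition and is precisely what catches the arity conflicts (a postcondition tgd demanding a new attribute on a relation whose total preservation query pins its arity) that make outcomes empty under the dynamic semantics. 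Second, you explicitly leave the core correctness of the inductive step --- scope re-entry and unification of alteration-introduced nulls --- as ``the main obstacle'' without discharging it. The paper discharges it by importing from \cite{APR13} that the chased conditional table captures all, and only, the minimal outcomes of each minimal instance of the previous table, and then by a monotonicity argument: any tgd trigger present in a minimal instance $I^*$ is present in every extension $I$ of it, so every member of $\outcome_P(I)$ extends a minimal member of $\outcome_P(I^*)$; hence working only with the minimal frontier (rather than with the full outcome space, as the SKB representation of Theorem \ref{theo-skb-closed} must) is what makes the scope interaction harmless for this proposition. With those two repairs your argument coincides with the paper's.
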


We remark that this proposition can be extended to include procedures defined only with egds, at the cost of 
a much more technical presentation and 
loosing positiveness of the constructed conditional instance.

\medskip
\noindent
\textbf{Query answering}. We can use Proposition \ref{prop-minimal} to derive a decision procedure for the problem of answering conjunctive 
queries over the outcome set of an instance $I$ and a sequence $\Pro_1,\dots,\Pro_n$ of procedures in $\P^{\textit{safe},\textit{alter}}$. The procedure  computes the conditional table that represents the minimal instances of $\dynoutcome_{\Pro_1,\dots,\Pro_n}(I)$, and then 
poses the query over this conditional table.  Of course, computing the entire table is not necessary, as we only need the 
parts that are relevant for answering the query. 

\begin{proposition}
\label{prop-dyn-query-answering} Under dynamic semantics, 
\textsc{Query answering} is decidable and in \nexptime\ if $Q$ is a conjunctive query and 
all procedures involved belong to $\P^{\textit{safe},\textit{alter}}$. It is in \np  
if the number $n$ of procedures is considered fixed. 
\end{proposition}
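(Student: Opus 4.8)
The plan is to reduce the question to the existence of a single homomorphism from $Q$ into one canonical instance extracted from the conditional table of Proposition~\ref{prop-minimal}, and then to certify that homomorphism by guessing only a bounded fragment of the table rather than materialising it. First I would dispose of the degenerate case: by Proposition~\ref{prop-rep-safe-scope-dyn} one can decide in polynomial time whether $\dynoutcome_{\Pro_1,\dots,\Pro_n}(I)$ is empty, in which case the intersection defining the certain answer ranges over no instance and the answer is trivially positive. Otherwise Proposition~\ref{prop-minimal} supplies a positive conditional instance $T$ with $\dynoutcome_{\Pro_1,\dots,\Pro_n}(I) \subseteq \rep(T)$ whose minimal instances coincide with the minimal instances of the outcome set.

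The correctness core is to show that $Q$ is a certain answer over the outcome set iff it is certain over $\rep(T)$ iff $Q$ holds in the single instance $\nu_\perp(T)$, where $\nu_\perp$ sends each null in $\nulls(T)$ to a pairwise-distinct fresh value and we keep only the tuples whose element-condition survives this assignment. The implication ``certain over $\rep(T)$ $\Rightarrow$ certain over the outcome set'' is immediate from containment. For the converse I would use that every finite instance of $\rep(T)$ extends a minimal instance of $\rep(T)$, which by Proposition~\ref{prop-minimal} lies in the outcome set; monotonicity of CQs under the extension order then lifts certainty from that minimal instance back to all of $\rep(T)$. Finally, positivity of $T$ guarantees that $\nu_\perp(T)$ keeps exactly the unconditionally present tuples, and the map sending each fresh value $\nu_\perp(x)$ to $\nu(x)$ is a homomorphism $\nu_\perp(T) \to \nu(T)$ for every valuation $\nu$; since CQs are preserved under homomorphisms and $\nu_\perp(T) \in \rep(T)$, it follows that $Q$ holds in $\nu_\perp(T)$ iff $Q$ holds in $\nu(T)$ for all $\nu$, i.e.\ iff $Q$ is certain over $\rep(T)$. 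The whole problem thus collapses to deciding whether a single homomorphism from $Q$ into $\nu_\perp(T)$ exists, which is purely existential and hence amenable to nondeterministic upper bounds.

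For the complexity I would avoid building $\nu_\perp(T)$, which by Proposition~\ref{prop-minimal} is of double-exponential size in general and exponential when $n$ is fixed. Instead I guess the at most $|Q|$ target tuples forming the homomorphic image of $Q$, and for each guessed tuple I guess a derivation witnessing that the chase of $I$ through $\Pro_1,\dots,\Pro_n$ produces it---applying the acyclic tgds of the safe-scope procedures with existentials instantiated by fresh nulls, while threading the column additions of the safe-schema-alteration procedures through the named schema. Because the postconditions are acyclic, a derivation tree for one output tuple has depth bounded by the length of the acyclic dependency chains and branching bounded by the tgd body sizes, so it has size exponential in $n$ in general and polynomial size when $n$ is fixed. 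Checking that the guessed tuples realise a homomorphic image of $Q$ and that each derivation is valid is then an exponential-time (resp.\ polynomial-time) verification, yielding membership in \nexptime\ and, for fixed $n$, in \np.

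The hard part will be the derivation-bounding in the last step: establishing that any single output tuple of the (possibly double-exponential) canonical instance $\nu_\perp(T)$ always admits a derivation whose size is controlled purely by the acyclicity of the postconditions and the number of procedures, so that the $|Q|$-sized witness together with its derivations stays inside the nondeterministic time budget and one never needs the full table. The correctness reduction of the second paragraph, though it demands care with the open/extension semantics and with the positivity of $T$, is comparatively routine once CQ monotonicity and Proposition~\ref{prop-minimal} are in hand.
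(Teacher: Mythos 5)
Your proposal follows the paper's own proof essentially step for step: it invokes Proposition~\ref{prop-minimal} to obtain the positive conditional instance, collapses certain answering to a single homomorphism test into the unconditional part of that instance (your $\nu_\perp(T)$ is exactly the paper's naive table $N$ obtained by dropping conditioned tuples, up to renaming nulls by fresh constants, and your monotonicity/minimal-instance argument is the paper's lemma), and then obtains the \nexptime and \np bounds by guessing the $|Q|$-atom homomorphic image together with backward chase derivations instead of materialising the table. The derivation-size bounding you flag as the remaining hard part is handled in the paper by precisely the argument you sketch: each atom produced at stage $i$ comes from a rule of $P_i$ with at most $|P_i|$ body atoms from stage $i-1$, giving a witness of size bounded by $|P|^n$, exponential in general and polynomial for fixed $n$.
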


\medskip
\noindent
\textbf{Query readiness}. We have seen that {\textsc{query readiness}} can be undecidable even if all the procedures 
involved are safe scope. To get decidability, we had to restrict ourselves to procedures specified 
with full tgds only. We are able to obtain an analogous result if we add procedures with safe schema-alterations. The key observation 
here is that it makes sense to use schema-altering procedures only  once, thus we do not lose 
the small-sequence property of Theorem \ref{theo-readi-decidable-queries} if we include procedures with 
safe schema-alterations into the mix. 

\begin{theorem}
\label{theo-readi-decidable-queries-dynamic}
Under dynamic semantics, {\textsc{query-readiness}} is in \twonexptime\ when $\Pi$ is a set of procedures in $\P^{\textit{safe},\textit{alter}}$ 
in which all tgds involved are full. It is in \nexptime\ if the size of $\Pi$ is fixed. 
\end{theorem}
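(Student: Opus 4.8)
The plan is to reduce the dynamic query-readiness problem to the static one of Theorem~\ref{theo-readi-decidable-queries}, by exploiting the structural simplicity of procedures with safe schema-alterations. The key observation, as hinted in the statement, is that schema-altering procedures only add attributes (their postconditions are structure constraints, and their scope and preservation queries are empty), so applying the same schema-alteration twice is redundant: any attribute it introduces is already present after the first application. Consequently, in any winning workflow $P_1,\dots,P_n$ we may assume that each procedure with safe schema-alterations from $\Pi$ appears at most once. Since $\Pi$ is finite, this bounds the total number of schema-altering steps by $|\Pi|$, independently of how long the workflow is otherwise.

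First I would make precise the idempotence claim: if $P$ has safe schema-alterations and $I'$ is a possible outcome of applying $P$ to $I$, then $\schema(I')$ extends $\schema(I)$ by adding exactly the attributes forced by the structure constraints in $\Cpost$, while the data is preserved on the old schema (the empty scope together with the unchanged $Q_{\Sch \setminus \Scope}$ answers pin down the contents). In particular, applying such a $P$ to an instance whose schema already satisfies $\Cpost$ leaves a same-schema instance among the outcomes, so these procedures do not force alteration of schemas once their target attributes exist. This lets me invoke Lemma~\ref{lemma-one}: between consecutive schema-altering steps, the remaining safe-scope (full-tgd) procedures behave exactly as under the static semantics over a fixed, known schema.

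Next I would establish the small-sequence property. A candidate workflow can be normalized into at most $|\Pi|+1$ blocks, where each block is a (possibly empty) maximal run of safe-scope full-tgd procedures over a fixed schema, separated by single schema-altering steps. Because there are at most $|\Pi|$ distinct attribute-adding procedures and each need occur only once, the number of distinct schemas visited is at most $|\Pi|+1$, all extending $\schema(I)$ and computable upfront. Within each block the schema is static, so by the small-model argument underlying Theorem~\ref{theo-readi-decidable-queries} the block can be taken to have singly-exponential length. Concatenating the blocks yields a guessed workflow of length at most exponential in the input; I would then verify success by computing the representation of its outcome (using Theorem~\ref{theo-outcome-fulltgds} blockwise, or the conditional-table construction of Proposition~\ref{prop-minimal}) and checking the boolean CQ $Q$ via the certain-answer machinery of Proposition~\ref{prop-dyn-query-answering}. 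Guessing an exponential-length object and verifying in exponential time gives the \twonexptime\ bound; if $|\Pi|$ is fixed, the number of blocks is constant and the schema sequence need not be guessed, collapsing the cost to \nexptime.

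The main obstacle I expect is justifying that the small-model property of Theorem~\ref{theo-readi-decidable-queries} survives across schema changes, rather than merely within a single fixed schema. The subtlety is that a schema-altering step can change which full tgds are \emph{compatible} with the current schema, so the set of available safe-scope procedures, and hence the chase behavior, differs from block to block; I must argue that the length bound for each block depends only on the (bounded number of) procedures and the current schema, and that the finitely many schemas are themselves determined by the order in which the at-most-$|\Pi|$ alterations are applied. Handling the interaction at the boundaries — ensuring that an outcome representation valid over the pre-alteration schema correctly lifts (via the \emph{extends} relation on instances and schemas) to a representation over the post-alteration schema so that Lemma~\ref{lemma-one} applies cleanly — is the delicate part; once that bridging lemma is in place, the block-by-block bounds compose straightforwardly.
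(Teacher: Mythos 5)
Your overall strategy is the same as the paper's key observation (schema-altering procedures need only be applied once, and between alterations the reasoning is essentially static), but the quantitative core of your argument has a genuine gap. The small-sequence bound you import from Theorem~\ref{theo-readi-decidable-queries} is exponential in the number of \emph{elements of the instance at the start of the block}, not in the original input. This matters because a safe schema-alteration does not merely change the schema: to lift the current instance to the extended schema, each existing tuple receives a fresh null in every added attribute (this is exactly the construction in the proof of Proposition~\ref{prop-minimal}). Since a preceding block of full-tgd procedures can have inflated the number of tuples to roughly $N^{|\Sch|}$ (where $N$ is the element count entering that block), the element count compounds across blocks: after $k$ alterations it is of order $D^{|\Sch|^{k}}$. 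For $k$ up to $|\Pi|$, which is part of the input, this is \emph{doubly} exponential, so later blocks may require doubly-exponential length and your claim that "concatenating the blocks yields a workflow of length at most exponential in the input" fails. This compounding is precisely why the paper's proof tracks the number of nulls created (at most one per schema-altering procedure per existing tuple, hence roughly $(D^{|\Sch|})^{|\Pi|}$ elements in total) and concludes that the guessed sequence must be allowed doubly-exponential length, yielding \twonexptime.

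Your complexity accounting also signals the miscount: guessing an exponential-length witness and verifying it in exponential time would place the problem in \nexptime, not \twonexptime, so either your length bound or your claimed bound had to be wrong --- and it is the length bound. The fixed-$|\Pi|$ case, by contrast, does work out along the lines you sketch: with constantly many alterations the exponent $|\Sch|^{O(1)}$ stays polynomial, the element count stays singly exponential, and the pruned sequences are singly exponential, giving \nexptime. To repair the general case, replace the uniform "singly-exponential block" claim with an explicit bound on element growth across alterations (as in the paper's proof), and then apply the pruning argument of Theorem~\ref{theo-readi-decidable-queries} to the resulting doubly-exponential universe of instances.
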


\section{Conclusions}\label{sec:conclusions} 

We have embarked on the task of developing a general framework for data improvement 
that enables one to determine whether there exists workflows that transform 
the input data into  data that satisfy some desired properties. 
We believe  that our proposal is general enough to cover a wide range of operations over multiple domains and 
models, possibly including tools from statistics, machine learning, and
    other data-oriented fields. 

In this paper we instantiated the key definitions in a relational setting. We expect that in this  
setting, our proposed framework will prove its worth by allowing different forms of reasoning about data-preparation workflows. 
For instance, we have shown how to reason about procedures given by standard relational constraints, which include most 
data-migration procedures and several kinds of data-quality tasks. In this context, we show that under broad restrictions,  
the data-readiness problem is decidable, both in the form of constraints and in the form of boolean queries. 

It seems promising to continue studying the problem of knowledge-base updates under these classes of procedures. 
In this respect, we would like to understand the limits of this problem: Can we add more expressivity to procedures and knowledge bases, while still staying in the realm of strong representation systems? For instance, there are numerous forms of description logics and/or datalog variants 
that have been shown to have good query-answering and reasoning properties (see, e.g., \cite{BO15} for description logics, and 
\cite{CGLMP10} for families of datalog); it might be the case that these properties translate  
into our setting as well. It also seems promising to pursue the dynamic semantics. In particular,   
we do not know if there is any reasonable representation 
system suitable for representing the outcomes of procedures with safe scope under the dynamic semantics. 

As a final remark,  instantiating  
our framework in other different scenarios appears to be an exciting direction for future work. The exploration could include, for instance, inclusion of procedures with 
statistical operations (as in, e.g., \cite{BCKOV14}), or information extraction from unstructured data \cite{FKRV15} or CSV  files\cite{MNV15,AMRV16}.

\bibliographystyle{plain}
\bibliography{paper}

%\end{document}
%\newpage

\newpage 
\onecolumn
\appendix

\section{Proofs and Additional Results}

\noindent
\textbf{Remark.} Since up to section 6 we consider only procedures under static schemas, for readability in most of the proofs of these sections we will use the unnamed assumption on schemas and queries. We can switch back from one to the other by using the 
order on attributes, as explained in the Preliminaries.

\subsection{Proof of proposition \ref{prop-ap-und}}

The reduction is from the complement of the embedding problem for finite semigroups, shown to be undecidable in \cite{KPT06}, and 
it is itself an adaptation of the proof of Theorem 7.2 in \cite{APR13}.  
Note that, since we do not intend to add attributes nor relations in the procedures of this proof, we can 
drop the named definition of queries, treating CQs now as normal conjunctions of relational atoms. 

The embedding problem for finite semigroups problem 
can be stated as follows. Consider a pair $\textbf A = (A,g)$, where $A$ is a finite set and 
$g: A \times A \rightarrow A$ is a partial associative function. We say that $\textbf A$ is embeddable in a finite 
semigroup is there exists $\textbf B = (B,f)$ such that $A \subseteq B$ and $f: B \times B \rightarrow B$ is a total 
associative function. The embedding problem for finite semigroups is to decide whether an arbitrary 
$\textbf A = (A,g)$  is embeddable in a finite semigroup. 

Consider the schema 
$\Sch = \{C(\cdot,\cdot), E(\cdot,\cdot), N(\cdot,\cdot), G(\cdot,\cdot,\cdot), F(\cdot), D(\cdot)\}$. 
The idea of the proof is as follows. 
We use relation $G$ to encode binary functions, so that a tuple $(a,b,c)$ in $G$ intuitively 
corresponds to saying that $g(a,b) = c$, for a function $g$. 
Using our procedure we  shall mandate that the binary function encoded in $G$ is total and associative. 
We then encode $\textbf A = (A,g)$ into our input instance $I$: the procedure will then try to embed 
$A$ into a semigroup whose function is total. 

In order to construct the procedures, we first specify the following set $\Sigma$ of tgds. 
First we add to $\Sigma$ a set of dependencies ensuring that all elements in the relation 
$G$ are collected into $D$: 
\begin{eqnarray}
G(x,u,v) & \rightarrow & D(x) \\
G(u,x,v) & \rightarrow & D(x) \\
G(u,v,x) & \rightarrow & D(x) 
\end{eqnarray}
The next set verifies that $G$ is total and associative:
\begin{eqnarray}
D(x) \wedge D(y) & \rightarrow & \exists z G(x,y,z) \\
G(x,y,u) \wedge G(u,z,v) \wedge G(y,z,w) & \rightarrow & G(x,w,v)
\end{eqnarray}
Next we include dependencies that are intended to force relation $E$ to be an equivalence relation 
over all elements in the domain of $G$. 
\begin{eqnarray}
D(x) & \rightarrow & E(x,x) \\
E(x,y) & \rightarrow & E(y,x) \\
E(x,y) \wedge E(y,z) & \rightarrow & E(x,z)
\end{eqnarray}
The next set of dependencies we add $\Sigma$ ensure that $G$ represents a function that is consistent 
with the equivalence relation $E$.
\begin{eqnarray}
G(x,y,z) \wedge E(x,x') \wedge E(y,y') \wedge E(z,z') & \rightarrow & G(x',y',z') \\
G(x,y,z) \wedge G(x',y',z') \wedge E(x,x') \wedge E(y,y') & \rightarrow & E(z,z')
\end{eqnarray}
The final tgd in $\Sigma$ serves us to collect possible errors when trying to embed 
$\textbf A = (A,g)$. The intuition for this tgd will be made clear once we outline the reduction, but 
the idea is to state that the relation $F$ now contains everything that is in $R$, as long 
as a certain property holds on relations $E$, $C$ and $N$.  
\begin{eqnarray}
E(x,y) \wedge C(u,x) \wedge C(v,y) \wedge N(u,v) \wedge R(w) & \rightarrow & F(w)
\end{eqnarray}
Let then $\Sigma$ consists of tgds (1)-(11). We construct fixed procedures $P_1 = (\Scope^1,\Cpre^1,\Cpost^1,\Q_\safe^1)$ and 
$P_2= (\Scope^2,\Cpre^2,\Cpost^2,\Q_\safe^2)$ as follows. 

\bigskip

\noindent
\underline{Procedure \textbf{$P_1$}}: 

\noindent
$\Scope^1$: The scope of $P_1$ consists of relations $G$, $E$, $D$ and $F$, which corresponds to the constraints 
$\{G[*],  E[*], D[*], F[*]\}$. 

\noindent
$\Cpre^1$: There are no preconditions for this procedure. 

\noindent
$\Cpost^1$: The postconditions are the tgds in $\Sigma$. 

\noindent
$\Q_\safe^1$: A single query ensuring that no information is deleted from all of $G$, $E$ and $F$ (and thus that no 
attributes are added to them): $G(x,y,z) \wedge E(u,v) \wedge D(w) \wedge F(p)$.

\medskip
\noindent
\underline{Procedure \textbf{$P_2$}}: 

\noindent
$\Scope^2$: The scope of $P_2$ is empty. 

\noindent
$\Cpre^2$: The precondition for this constraint is $R(x) \rightarrow F(x)$. 

\noindent
$\Cpost^2$: The are no postconditions. 

\noindent
$\Q_\safe^2$: There are no preserved queries. 

\medskip
Note that $P_2$ does not really do anything, it is only there to check that $R$ is contained in $F$. 
We can now state the reduction. On input $\textbf A = (A,g)$, where $A = \{a_1,\dots,a_n\}$, 
we construct an instance 
$I_\textbf{A}$ given by the following interpretations: 
\begin{itemize}
\item $E^{I_\textbf{A}}$ contains the pair $(a_i,a_i)$ for each $1 \leq i \leq n$ (that is, for each element of $A$); 
\item $G^{I_\textbf{A}}$ contains the triple $(a_i,a_j,a_k)$ for each $a_i,a_j,a_k \in A$ such that $g(a_i,a_j) = a_k$;  
\item $D^{I_\textbf{A}}$ and $F^{I_\textbf{A}}$ are empty, while $R^{I_\textbf{A}}$ contains a single element $d$ not in $A$; 
\item $C^{I_\textbf{A}}$ contains the pair $(i,a_i)$ for each $1 \leq i \leq n$; and 
\item $N^{I_\textbf{A}}$ contains the pair $(i,j)$ for each $i \neq j$, $1 \leq i \leq n$ and $1 \leq j \leq n$. 
\end{itemize}

Let us now show $\textbf{A} = (A,g)$ is embeddable in a finite semigroup if and only if $\outcome_{P_1}(I)$ contains 
an instance $I$ such that $I'$ does not satisfy the precondition $R(x) \rightarrow F(x)$ of procedure $P_2$.

\medskip
\noindent
($\Longrightarrow$) Assume that $\textbf{A} = (A,g)$ is embeddable in a finite semigroup, say the semigroup 
$\textbf B = (B,f)$, where $f$ is total. Let $J$ be the instance such that 
$E^J$ is the identity over $B$, $D^J = B$ and $G^J$ contains a pair $(b_1,b_2,b_3)$ if and only if 
$f(b_1,b_2) = b_3$; $F^J$ is empty and relations $N$, $C$ and $R$ are interpreted as in $I_\textbf{A}$. 
It is easy to see that $J \models \Sigma$, $Q_{\Sch \setminus \Scope}$ is preserved and that 
$\Q_\safe(I_\textbf{A}) \subseteq \Q_\safe(J)$, this last because $\textbf A$ was said to be embeddable in 
$\textbf B$. We have that $J$ then does belong to $\outcome_{P_1}(I)$, but $J$ does not satisfy the 
constraint $R(x) \rightarrow F(x)$. 

\medskip
\noindent
($\Longleftarrow$) Assume now that there is an instance $J \in \outcome_{P_1}(I)$ that does not 
satisfy $R(x) \rightarrow F(x)$. Note that, because of the scope of $P_1$, the interpretation of 
$C$, $N$ and $R$ of $J$ must be just as in $I$. Thus it must be that the element $d$ is not in 
$F^J$, because it is the only element in $R^J$. 
Construct a finite semigroup $\textbf{B} = (B,f)$ as follows. 
Let $B$ consists of one representative of each equivalence class in $E^J$, with the additional restriction that 
each $a_i$ in $A$ must be picked as its representative. Further, define $f(b_1,b_2) = b_3$ if and only if 
$G(b_1,b_2,b_3)$ is in $G$. Note that $J$ satisfies the tgds in $\Sigma$, in particular $G$ is associative and 
$E$ acts as en equivalence relation over $G$, which means that $f$ is indeed associative, total, and well defined. 
It remains to show that $\textbf A$ can be embedded in $\textbf B$, but since 
$G^J$ and $E^J$ are supersets of $G^{I_\textbf{A}}$ and $E^{I_\textbf{A}}$ (because of the preservation query of $P_1$), 
all we need to show is that each $a_i$ is in a separate equivalence relation. But this hold because of tgd 
(11) in $\Sigma$: if two elements from $A$ are in the same equivalence relation then the left hand side of (11) would hold in $I_\textbf{A}$, which contradicts the fact that $F^J$ does not contain $d$.

\subsection{Proof of proposition \ref{prop-rep-undec-weak}}

This proof is a simple adaptation of the reduction we used in the proof of Proposition \ref{prop-ap-und}. 
Indeed, consider again the schema $\Sch$ from this proof, and the procedure $P$ given by: 

\noindent
$\Scope$: The scope of $P$ consists of relations $G$, $E$, $D$ and $F$, which corresponds to the constraints 
$G[*], E[*], D[*]$ and $F[*]$. 

\noindent
$\Cpre$: There are no preconditions for this procedure. 

\noindent
$\Cpost$: The postconditions are the tgds in $\Sigma$ plus the tgd $F(x) \rightarrow R(x)$.

\noindent
$\Q_\safe$: This query ensures that no information is deleted from all of $G$, $E$ and $F$: 
$G(x,y,z) \wedge E(u,v) \wedge D(w) \wedge F(p)$.

Given a finite semigroup $\textbf A$, we construct now the following instance $I_\textbf{A}$: 
\begin{itemize}
\item $E^{I_\textbf{A}}$ contains the pair $(a_i,a_i)$ for each $1 \leq i \leq n$ (that is, for each element of $A$); 
\item $G^{I_\textbf{A}}$ contains the triple $(a_i,a_j,a_k)$ for each $a_i,a_j,a_k \in A$ such that $g(a_i,a_j) = a_k$;  
\item All of $D^{I_\textbf{A}}$, $F^{I_\textbf{A}}$ and $R^{I_\textbf{A}}$ are empty; 
\item $C^{I_\textbf{A}}$ contains the pair $(i,a_i)$ for each $1 \leq i \leq n$; and 
\item $N^{I_\textbf{A}}$ contains the pair $(i,j)$ for each $i \neq j$, $1 \leq i \leq n$ and $1 \leq j \leq n$. 
\end{itemize}

By a similar argument as the one used in the proof of Proposition \ref{prop-ap-und}, one can show that 
$\outcome_P(I_\textbf{A})$ has an instance if and only if $\textbf A$ is embeddable in a finite semigroup. 
The intuition is that now we are adding the constraint $F(x) \rightarrow R(x)$ as a postcondition, and 
since $R$ is not part of the scope of the procedure the only way to satisfy this restriction is if we do not 
fire the tgd (11) of the set $\Sigma$ constructed in the aforementioned proof. This, in turn, can only happen if 
$\textbf A$ is embeddable. 

\subsection{Proof of proposition \ref{prop-rep-undec}}

This proof is just a slight adaptation of the Proof of Proposition \ref{prop-rep-undec-nothingworks}.
The only thing needed to be done is to replace the tgd: 
\begin{eqnarray*}
D(x) \wedge D(y) & \rightarrow & G^\text{binary}(x,y) \\
\end{eqnarray*}
in that proof for the (non-full) tgd: 
\begin{eqnarray*}
D(x) \wedge D(y) & \rightarrow & \exists z G^d{binary}(x,y) 
\end{eqnarray*}

The rest follows just as in the proof of Proposition \ref{prop-rep-undec-nothingworks}.

\subsection{Proof of Proposition \ref{prop-rep-safe-scope}} 

Let $I$ and $P_1,\dots,P_n$ be as specified in the statement of the Proposition, assuming the postconditions of 
each $P_i$ correspond to a set $\Sigma_i$ of tgds. Then chasing $I$ with $\Sigma_1$, $\Sigma_2$, etc 
yields an instance in $\outcome_{P_1,\dots,P_n}(I)$. Indeed, if the chase yields instances $I_1,\dots,I_n$ with $I_0 = I$ (so that 
$I_{i+1}$ is the chase of $I_i$ with $\Sigma_{i+1}$), then each $I_{i+1}$ is a possible outcome of applying $P_{i+1}$ to $I_i$. 

\subsection{Proof of Proposition \ref{prop-outcome-fulltgds-justone}}

A consequence of the following Theorem \ref{theo-outcome-fulltgds}.

\subsection{Proof of Theorem \ref{theo-outcome-fulltgds}}

This is a consequence of Theorem \ref{theo-skb-closed}. Exponential time follows from the fact that $\removerels(\Gamma,\Scope)$ 
works in $O(2^{|\Gamma|}$ and that (since tgds are full) the maximum size of an instance produced out of chasing 
is of order $|I|^\Sch$.

\subsection{Proof of Theorem \ref{theo-skb-closed}}

For readability, in this proof we specify the Scopes of procedures using just the relation names (since they can only be formed with 
constraints of the form $R[*]$), and we sometimes omit the definition of 
$\Q_\safe$, since for procedures of safe scope this set depends only on the scope of processes. We also continue using the unnamed assumption.

We begin with some notation. The premise of a tgd is its left-hand side, and the conlcusion is its right-hand side. The dependency graph $\Gamma$ of a set of tgds is a graph whose nodes are the relation names used in the dependencies in 
$\Gamma$ and where there is an edge from a node $R$ to a node $S$ if there is a dependency in $\Gamma$ with $R$ in its premise and $S$ in its consequence. 
A set $\Gamma$ of tgds is acyclic if its dependency graph is acyclic. 

Acyclicity plays an important role in this proof, as we show that one essentially needs to keep up an acyclic set of dependencies (even though subsequent procedures may introduce 
cycles). It also allows us to define a procedure $\removerels(\Gamma, \Scope)$, for an acyclic set $\Gamma$ of tgds and a set $\Scope$ of relation names. 
The procedure $\removerels(\Gamma, \Scope)$ works as follows. Let us first assume without loss of generality that each dependency in $\Gamma$ uses different variables. 
Let also $R_1,\dots,R_n$ be an enumeration of the relations in $\Scope$ that are used in $\Gamma$ and 
that is consistent with the partial order 
of the nodes in the dependency graph of $\Gamma$. We consider 
$R_n$ is the greatest node in this order, i.e., a node without a directed path to any of the nodes $R_1,\dots,R_{n-1}$.

Then, for each $i = n,n-1,n-2,\dots,1$: 
\begin{itemize}
\item Let $\Omega_i \subseteq \Gamma$ contain all dependencies using $R_i$ in its premises. 
\item Initialize the set $\Omega_i'$ as $\Omega_i$. 
\item For each tgd $\lambda$ in $\Omega_i$ of the form $\phi(\bar x) \rightarrow \psi(\bar z)$, for each different atom $R_i(\bar y)$ in $\phi(\bar x)$ and for each dependency 
$\sigma$ in $\Gamma \setminus \Omega_i$ of the form $\theta(\bar u) \rightarrow \eta(\bar v)$ such that there is an atom $R_i(\bar w)$ in $\eta(\bar v)$\footnote{Note that because of acyclicity $\theta$ cannot contain atoms $R_j$ with $j \geq i$}
\begin{itemize}
\item Let $\Pi$ contain all possible equivalence relations for the variables in $\bar w$. For each equivalence relation $\pi \in \Pi$, let 
$\pi(\bar w)$ denote the renaming of the variables in $\bar w$ where each variable is sent to one representative of the equivalence relation. 
If there is a homomorphism 
$h: \bar y \rightarrow \pi(\bar w)$ from $R_i(\pi(\bar w))$ to $R_i(\bar y)$, then 
\item Let $\hat h$ be the extension of $h$ that is the identity on any variable in $\bar x$ but not in $\bar y$, $\hat \pi$ the extension of $\pi$ that is the identity over any variable 
in $\bar v$ not in $\bar w$, and let $\phi'(\bar x)$ be the result of removing the atom $R_i(\bar y)$ 
from $\phi(\bar x)$. 
\item Construct the dependency $\lambda' = \phi'(\hat h(\bar x)) \wedge \theta(\hat \pi(\bar u)) \rightarrow \psi(\hat h(\bar z))$. Note that this is well defined because each variable in $\bar z$ must appear in 
$\bar x$, and since $R_i$ cannot appear in $\theta$ we have removed at least one occurrence of $R_i$.  
\item Create a copy of $\lambda'$ using only fresh variables, and add it to $\Omega_i'$. 
\item Continue until no more dependencies in $\Omega_i'$ contain $R_i$ in their premises. 
\end{itemize}
\item Define $\Gamma = \Gamma \setminus \Omega_i \cup \Omega_i'$. Note that we have removed $R_i$ from the premise of the dependencies in $\Gamma$, but 
$\Gamma$ remains an acyclic set of full tgds (because on the $i$-th step we never introduce dependencies with a relation 
$R_j$ in its premises that is higher in the order than $R_i$). 
\end{itemize}

\bigskip

We now have all the ingredients to prove this theorem. 
Let $\skb = (I,\Gamma, \Scope)$ be a scoped knowledge base and such that $\Gamma$ is an acyclic set of full tgds, and 
$P = (\Scope_P,\emptyset,\Sigma,\Q_\safe)$ a procedure with safe scope with $\Sigma$ a set of full tgds. 
We show how to represent the set $\outcome_{P}(\rep(\skb))$ with a scoped knowledge base $\skb' = (I', \Gamma', \Scope')$, where $\Gamma'$ is also an acyclic set of full tgds. 

\medskip
\noindent
\textbf{Computing $\skb'$}. The scoped knowledge base is $\skb' = (I', \Gamma', \Scope')$, where $I'$ is the chase of $I$ with respect to $\Sigma$, 
$\Gamma'$ is defined as $\Sigma \cup \removerels(\Gamma, \Scope_P)$, and $\Scope' = \Scope \cup \Scope_P)$ (recall that we abuse notation and 
define $\Scope_P$ as a set of relations, where $R \in \Scope_P$ if and only if the scope includes the constraint $R[*]$).  

\medskip
\noindent
\textbf{Soundness}. Let $\skb$, $P$, $\Sigma$ and $\skb'$ be defined as above, and let $N' \in \rep(\skb')$ be an instance represented by $\skb'$. We need to 
show that $N' \in \outcome_P(\rep(\skb))$. In order to do that, we construct an instance $N \in \rep(\skb)$ and then show that $N'$ belongs to $\outcome_P(N)$. 

For readability, let us write $\chase_\Sigma(I)$ to denote the chase of $I$ with respect to $\Sigma$, so that $I' = \chase_\Sigma(I)$. 
Construct instance $K$ by removing from $N'$ all tuples in any relation in $\Scope_P$ that are not in $I$, and then define 
$N = \chase_{\Gamma}(K)$. 

We first show that $N \in \rep(\skb)$. That is, we need to show that (1) $I \subseteq K$, (2) $R^{N} = R^I$ for each $R \notin \Scope$, and 
(3) $N \models \Gamma'$. We know (3) by construction, since $N$ is the chase of $K$ with respect to $\Gamma$. For 
(1), by definition $N' \in \rep(\skb')$ is a superset of $\chase_\Sigma(I)$, and $\chase_\Sigma(I)$ is a superset of 
$I$ itself. Since we construct $K$ by removing tuples not in $I$, we have that $I \subseteq K$ and therefore $I \subseteq N$. 
For (2), clearly  $I^R = N^R$ for each relation $R$ not in $\Scope'$, and thus by construction 
$I^R = K^R$ for each relation not in $\Scope$ (since $\Scope' = \Scope \cup \Scope_P)$. 
Now, if a tuple was added to a relation $R$ in $R^N$ by chasing $K$ with respect to $\Gamma$, then 
since $\Scope'$ contain at least all relations in the consequence of any dependency in $\Gamma$, it follows that $R$ is in $\Scope$. 

Next we show that $N'$ belongs to $\outcome_P(N)$. For this we need to show that (a) $N'$ satisfies $\Sigma$, (b)  $N \subseteq N'$ and 
(c)   $N$ and $N'$ differ only in relations within $\Scope_P$. We obtain (a) from the fact that $N'$ satisfies $\Gamma'$, that contains 
all dependencies in $\Sigma$. 

\smallskip
\noindent 
For (b) we need the following claim, which we prove by induction. We say that $\chase^{i}_\Sigma(I)$ corresponds to the 
instance produced after applying the $i$-th step of a chase (for simplicity we can assume that all rules are applied in lexicographical order). 
\begin{claim}
If the $i$-th step of the chase produces tuples $\psi(h(\bar z))$ out of a dependency $\lambda = \phi(\bar x) \rightarrow \psi(\bar z)$ 
and an assignment $h$ so that $\chase^{i-1}_\Gamma(K)$ satisfies $\phi(h(\bar x))$ but not $\psi(h(\bar z))$, then there is a 
dependency $\theta(\bar u) \rightarrow \psi(\bar v)$ in $\removerels(\Gamma, \Scope_P)$ and an assignment $f$ 
such that $K$ satisfies $\theta(f(\bar u))$ and where $\psi(f(\bar v)) =  \psi(h(\bar z))$.
\end{claim}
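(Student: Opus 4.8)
The plan is to prove the claim by strong induction on the chase step $i$, carried out in tandem with the statement we are really after in part (b), namely that $\chase^i_\Gamma(K) \subseteq N'$ for every $i$. The two halves reinforce each other: once the claim holds for step $i$, the produced tuple $\psi(h(\bar z))$ is generated from $K$ by a dependency of $\removerels(\Gamma,\Scope_P)$, and since $K \subseteq N'$ and $N' \models \removerels(\Gamma,\Scope_P)$ we immediately obtain $\psi(h(\bar z)) \in N'$; conversely, having $\chase^{i-1}_\Gamma(K) \subseteq N'$ available is exactly what lets us relocate the premise atoms of step $i$ inside $K$.

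First I would dispose of the easy case, in which the firing dependency $\lambda = \phi(\bar x) \to \psi(\bar z)$ has no atom over a relation of $\Scope_P$ in its premise. Such a $\lambda$ is untouched by $\removerels$, so $\lambda \in \removerels(\Gamma,\Scope_P)$ and we may take $\theta \to \psi = \lambda$ and $f = h$. It then remains only to see that $K \models \phi(h(\bar x))$: every atom of $\phi(h(\bar x))$ already holds in $\chase^{i-1}_\Gamma(K)$, hence in $N'$ by the induction hypothesis, and since these atoms are over relations outside $\Scope_P$, where $K$ and $N'$ agree, they lie in $K$ as required.

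The heart of the argument is the case where $\phi$ mentions a relation $R_k \in \Scope_P$. Here I would trace each such atom $R_k(h(\bar y))$ back through the chase. When the tuple $R_k(h(\bar y))$ was produced at an earlier step by some $\sigma = \theta'(\bar u) \to \eta'(\bar v)$ of $\Gamma$ with $R_k(\bar w)$ in $\eta'$, the pairing of $R_k(\bar y)$ against $R_k(\bar w)$ is precisely one resolution step of $\removerels$: reading off from the chase substitutions the equivalence relation $\pi$ on $\bar w$ and the homomorphism $\bar y \to \pi(\bar w)$ that the algorithm uses, one recovers exactly the unfolded dependency $\lambda' = \phi'(\hat h(\bar x)) \wedge \theta'(\hat\pi(\bar u)) \to \psi(\hat h(\bar z))$ inserted by $\removerels$, and the chase assignments compose into a witness for its premise. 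Iterating this unfolding strips off the $\Scope_P$-atoms one at a time; acyclicity guarantees termination and, crucially, that each spliced-in premise $\theta'$ involves only relations strictly below $R_k$ in the dependency order, so no already-eliminated relation of $\Scope_P$ can reappear — this is the reason $\removerels$ eliminates the relations of $\Scope_P$ in the order dictated by the dependency graph. The outcome is a dependency $\theta \to \psi \in \removerels(\Gamma,\Scope_P)$ together with an assignment $f$, assembled by composing $h$ with the substitutions of the resolved steps through the renamings $\hat h,\hat\pi$, such that $\psi(f(\bar v)) = \psi(h(\bar z))$ and every atom of $\theta(f(\bar u))$ is matched in $K$.

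I expect the main obstacle to be the residual subcase in which a premise atom $R_k(h(\bar y))$ over a relation $R_k \in \Scope_P$ is \emph{not} produced by the chase but is inherited from $K$, that is, from the seed $I$. Since $\removerels$ deletes $R_k$ from premises entirely, there is no resolution step to mirror, and a naive unfolding would simply drop the dependency and with it the tuple $\psi(h(\bar z))$. Controlling this subcase is where the structure of $K$ has to be exploited: the relations outside $\Scope_P$ are not in the scope of $P$, so their contents are already closed under the surviving consequences, and one needs that the seed together with its chase satisfies those consequences, so that whatever the persistent $\Scope_P$-tuples of $I$ would imply is already materialised in $K$ and no genuinely new tuple is forced through them. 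Making this precise, and checking that it meshes with the substitution bookkeeping so that the assembled $f$ really does satisfy $\psi(f(\bar v)) = \psi(h(\bar z))$, is the delicate step; the remaining verifications are routine inductive and homomorphism arguments.
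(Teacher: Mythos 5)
Your core strategy is the same as the paper's: induct over the chase steps and, whenever a premise atom over a relation of $\Scope_P$ was produced by an earlier step, splice in the dependency that produced it, mirroring exactly the resolution step of $\removerels$ and composing the substitutions through the renamings. Your organization differs in one useful way: you run the Claim in tandem with part (b) (that $\chase^{i}_\Gamma(K)\subseteq N'$), and you use the inductive hypothesis of (b), together with the facts that $K$ and $N'$ agree outside $\Scope_P$ and that $N'\models\removerels(\Gamma,\Scope_P)$, to conclude that premise atoms over relations \emph{outside} $\Scope_P$ lie in $K$ even when they were created by earlier chase steps. The paper proves the Claim by a standalone induction and silently treats every non-$\Scope_P$ premise atom as a fact of $K$; your tandem argument is what actually justifies that step, so on this point your proposal is more careful than the printed proof.

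The residual subcase you flag, however, is a genuine gap, and you do not close it --- but neither does the paper. The printed proof excludes it by fiat: it asserts that each premise atom ``either is not in $\Scope_P$, or was produced earlier in the chase'' (and, in the base case, that no premise atom is in $\Scope_P$ at all), with no justification, even though $K$ retains the $\Scope_P$-tuples of $I$. Worse, the property you hope to extract from ``the structure of $K$'' is not available under the paper's construction. Concretely, take $\Gamma=\{R(x)\rightarrow S(x)\}$ and a procedure $P$ with $\Cpost=\{T(x)\rightarrow R(x)\}$, so that $\Scope_P=\{R\}$, and let $a\in R^I$, $S^I=\emptyset$. Since nothing in $\Gamma$ derives $R$, the dependency $R(x)\rightarrow S(x)$ cannot be unfolded and is dropped, so $\removerels(\Gamma,\Scope_P)=\emptyset$ and $\Gamma'$ places no constraint on $S$; hence there is $N'\in\rep(\skb')$ with $S^{N'}=\emptyset$, yielding $K$ with $R^K=\{a\}$ and $S^K=\emptyset$. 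The very first chase step fires $R(a)\rightarrow S(a)$, yet no dependency of $\removerels(\Gamma,\Scope_P)$ can witness the Claim's conclusion: the Claim is false in this subcase, even in the context where $K$ is the instance built in the soundness argument. Repairing it requires changing the construction itself (closing the seed instance under $\Gamma$ as well as $\Sigma$, and instantiating $\Scope_P$-premise atoms of $\Gamma$ against the concrete tuples of $I$), not a finer induction. So the ``delicate step'' you postponed is not a routine verification; it is the actual defect, and it is present, unacknowledged, in the paper's own proof.
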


\begin{proof}
For the base case when there is a single chase step, we have that none of the atoms in $\theta$ is in $\Scope_P$, and thus 
$\lambda$ is in $\removerels(\Gamma, \Scope_P)$ by construction. 

Now assume the claim holds for all chase steps earlier than step $k$, and let $\lambda = \phi(\bar x) \rightarrow \psi(\bar z)$  as in the claim. 
Further, assume $\phi(\bar x)$ is of the form $S_1(\bar x_1) \wedge \cdots \wedge S_m(\bar x_m)$. Then for each such atom, either $S_j$ is not in $\Scope_P$, 
or the atom $S_j(h(\bar x_j))$ it was produced earlier in the chase. Let us assume that there is only one such atom (if there are more the proof just 
follows by repeating the same argument). By induction, such atom comes from a dependency 
$\theta(\bar u) \rightarrow \eta(\bar v)$ and an assignment $f$ 
such that $K$ satisfies $\theta(f(\bar u))$ and where there is an atom $S_j(\bar w)$ in $\eta(\bar v)$ such that $f(\bar w) = h(\bar x_j)$.   

We note that function $f$ induces an equivalence relation in the variables $\bar v$, where two variables $u_1$ and $u_2$ are in the same equivalence relation 
if $f(v_1) = f(v_2)$. If $\pi$ is the assignment mandated by the equivalence relation, then clearly there is a homomorphism $g$ from $S_j(\bar x_j)$ 
to $S_j(\pi(\bar w))$. 

Let $\hat \pi$ be the extension of $\pi$ that is the identity on all variables in $\bar u$ not in $\bar w$. 
Note that since $\pi$ is the relation induced by $f$, we have that $f(\bar u) = f(\pi(\bar u))$. 
Next, let $\hat g$ be the extension of $g$ that is the identity on every variable of $\bar x$ not in $\bar x_j$. 
 
Now since $S_j$ is in $\Scope_P$ and we have shown $\pi$ and $g$ as in the condition of the procedure, 
at some point during the application of $\removerels$, the dependency $\lambda$ was  
replaced by  $\lambda' = \phi'(\hat g(\bar x)) \wedge \theta(\hat \pi(\bar u)) \rightarrow \psi(\hat g(\bar z))$, 
where $\phi'(\hat g(\bar x)) $ is the result of removing $R_i(g(\bar y)$ from $\phi(\hat g(\bar x))$. 

Define a homomorphism $h^*: g(\bar x) \rightarrow D$ so that $h^*(x) = x$ if $x \in \bar x$ and $h^*(x) = f(g(x))$ otherwise. 
Note that $h(\bar x) = h^*(g \bar x)$, since these two only differ in the variables in $\bar x_j$ and $f(\bar w) = h(\bar x_j)$.

Let $h*: \bar x \cup \bar u: \rightarrow D$ define the union of homomorphisms $h$ and $f$. 
Then we have that $K$ must satisfy 
$\phi'(h^*(\hat g(\bar x))) \wedge \theta(h^*(\pi(\bar u)))$. It satisfies $\phi'(h^*(\hat g(\bar x)))$ because $\hat g$ is the identity on each variable in $\bar x$ not in 
$\bar x_j$, and $f(\hat g(x) = h(x)$ for each $x \in \bar x_j$, and it satisfies $\theta(h^*(\hat \pi(\bar u)))$ because we know that $K$ satisfies $\theta(f(\bar u))$ and 
$f(\bar u) = f(\pi(\bar u))$. 

Again, since $f(\bar w) = h(\bar x_j)$ and $\hat g$ is the identity over any variable not in $\bar x_j$, 
we also obtain that $\psi(h^*(\hat g(\bar z))$ corresponds to $\psi(h(\bar z))$. This proofs the claim. 
\end{proof}

We continue with the proof of fact (b): $N \subseteq N'$. Let us assume otherwise, so that there is a tuple $\bar a$ and a relation $R$ so that $\bar a$ is in 
$R^N$ and not in $R^{N'}$. Clearly $\bar a$ must not be in $R^K$, since by definition $K \subseteq N'$. Thus, $\bar a$ was added to $R^N$ as a product 
of the chase. Assume without loss of generality that $\bar a$ was the first such tuple added by the chase, product of 
chasing a dependency $\phi(\bar x) \rightarrow \psi(\bar z)$ and an assignment $h: \bar x \rightarrow D$, where 
$K$ satisfies $\phi(h(\bar x))$ but not $\psi(h(\bar z))$. 

By the Claim above we know that, instead of chasing $\phi(\bar x) \rightarrow \psi(\bar z)$ we could have chased a relation 
$\theta(\bar u) \rightarrow \psi(\bar v)$ in $\removerels(\Gamma, \Scope_P)$ with the same result. 
But this last dependency is in $\Gamma'$ by construction, and therefore since $N'$ satisfies $\Gamma'$, 
it must be the case that $\bar a$ is actually in $R^{N'}$, which contradicts our initial assumption. 

\smallskip
\noindent
Finally, (c): $N$ and $N'$ differ only in relations within $\Scope_P$ follows from that fact that $K$ and $N'$ differ only in relations 
within $\Scope_P$ and $K \subseteq N \subseteq N'$. This proves $N'$ belongs to $\outcome_P(N)$.

\medskip
\noindent
\textbf{Completeness}. Let $\skb$, $P$, $\Sigma$ and $\skb'$ be defined as above, and consider an instance $N'$ in $\outcome_P(\rep(\skb))$. 
Then there is an instance $N \in \rep(\skb)$ such that $N' \in \outcome_P(N)$. We show that $N'$ belongs also to $\rep(\skb')$. 
We need to prove that (1) $\chase_\sigma(I) \subseteq N'$, (2) $R^{N'} = R^{\chase_\Sigma(I)}$ for each $R \notin \Scope'$, and 
(3) $N' \models \Gamma'$. 

\smallskip 
\noindent
For (1), note that  $N \subseteq N'$ because of the preservation queries in $P$. Since $N'$ satisfies, in particular, $\Sigma$, and $I \subseteq N$, it must be that 
$\chase_\sigma(I) \subseteq N'$

\smallskip 
\noindent 
For (2), we observe that $N$ and $N'$ differ only in relations not in $\Scope_P$, and, furthermore, $I$ and $N$ differ only in relations from $\Scope$. 
Since $\Scope' = \Scope \cup \Scope_P$, if one could find a tuple in a relation $R \notin \Scope'$ such that $N'$ and $\chase_\Sigma(I)$ differ on $R$, 
then either $N'$ and $N$ differ on $R$, or $N$ and $I$ differ on $R$, which we know it is not possible. 

\smallskip 
\noindent 
For (3), note that $N$ satisfies $\Gamma$, and therefore it satisfies $\removerels(\Gamma,\Sigma)$. Now $N'$ is a superset of $N$ that satisfies $\Sigma$. If 
$N'$ does not satisfy a dependency $\lambda$ in $\Gamma'$, then $\lambda$ must belong to $\removerels(\Gamma,\Sigma)$ and thus $N$ satisfies $\lambda$. 
All relations in the premise of $\lambda$ do not belong to $\Scope_P$, and thus since $N$ and $N'$ only differ in relations from $\Scope_P$, 
any assignment $h$ from the premise of $\lambda$ to $N'$ is also an assignment 
to $N$. Furthermore, since $N \subseteq N'$, if $h$ is not an assignment from the consequence of $\lambda$ to $N'$ it is also not an assignment 
from the consequence of $\lambda$ to $N$. This contradicts the fact that $N$ satisfies $\lambda$. 

\subsection{Proof of proposition \ref{prop-size-representation}}

Recall we omit preservation queries from procedures with safe scope. 
Fix a number $i \geq 1$. 
Let $\Sch$ contains unary relation $R$ and $n$-ary relation $T$, and let $I$ such that $R^I = \{0,1\}$ 
and $T^I = \emptyset$. Let $P'$ be the procedure with scope $R$, no preconditions and postcondition 
$\exists x_2 \exists x_3 \exists x_n T(x_1,\dots,x_n) \rightarrow R(x_1)$. 
Let $P_i$ be a procedure with 
with scope $T$, no preconditions and postcondition $R(x_1) \wedge R(x_2) \wedge \cdots \wedge R(x_i) \rightarrow T(x_1,\dots,x_i)$. 

Now note that all instances $K$ in $\outcome_{P_i,P'}(I)$ must be such that $K^T$ contains all tuples of length $i$ that can be formed with 
$0$s and $1$s. Furthermore, since the scope of $P'$ is $R$, for every SKB $(J,\Gamma,\Scope)$ representing 
$\outcome_{P_i,P'}(I)$ it must be that $\Gamma$ does not contain tgds with $R$ in its premises (this follows from the construction of 
Theorem \ref{theo-skb-closed}). Then clearly $J$ must contain all $2^i$ tuples. 

\subsection{Proof of Proposition \ref{egd-prop}} 

Let $\skb = (I,\Gamma,\Scope)$ be as in the statement of the Proposition. 
Let also $\sigma$ be of the form 
%
%\noindent 
$\sigma: \ \phi({\bar X}) \rightarrow Y = Z,$
%
%\noindent 
where each of $Y$ and $Z$ is a term in the vector $\bar X$, and let $\phi^O(\bar X^O)$ be the restriction of $\phi$ to only the relations 
in $\Scope$, and $\phi^C(\bar X^C)$ the restriction of $\phi$ to relations not in $\Scope$. 
Consider now an enumeration $h_1,\dots,h_\ell$ of all mappings from $\phi^C(\bar X)$ to $I$. Next, for each such mapping 
$h$, construct an assignment $g$ that extends $h$ by assigning all variables not in $\bar X^C$ to a fresh element. 
Furthermore, for each such assignment $g$, construct a set of assignments $g_1,\dots,g_n$, where $g_1 = g$ and 
where each assignment $g_j$, with $1$ $<$ $j$ $\leq$ $n$, is a modification of $g_1$ that allows some elements in the image of $\bar X$ to be the same fresh element or to be an element used  in $I$ in $SKB$. We choose $n$ so that all possible such restrictions (up to renaming the null symbols) are included in the set  $\{$ $g_1$, $\ldots$, $g_n$ $\}$.

Let us now assume $\{$ $\tau_1$, $\ldots$, $\tau_k$ $\}$ is the set of all assignments constructed in this way. We construct an 
instance $D_i$ for each such assignment: Each $D_i$ is created by adding to $I$ the set of tuples $\phi(\tau_i({\bar X}))$. We then chase each $D_i$ with $\Gamma$, 
% \ryc{--- and with ${\cal Q}_{safe}$?}, 
and check whether the resulting instance $(D_i)^{\Gamma}$ satisfies the egd $\sigma$. 
%(Note that here we do not need to chase with the ``views for the relations outside the scope in $SKB$,'' because we assume that each relation outside the scope in $SKB$ is a ground relation.) 
%, and because (ii) (do we even need this assumption?) we assume that in $\Gamma$, each relation symbol in the RHS of each tgd is in scope in $SKB$.} 
There are two cases here:

\begin{enumerate} 
	\item There exists an $i$ $\in$ $[1, k]$ such that $(D_i)^{\Gamma}$ $\models$ $\sigma$ does not hold. But it is easy to see 
	that $D_i$ belongs to the set $\rep(\skb)$, so we have a counterexample and we conclude that $(I,\Gamma, \Scope)$ does not satisfy $\sigma$. 

	\item (This is the ``small-witness counterexample'' property at the center of this proof.) Suppose it is the case that $(D_i)^{\Gamma}$ $\models$ $\sigma$ for each $i$ $\in$ $[1, k]$. In this case, we conclude that $(I,\Gamma, \Scope)$ satisfies $\sigma$. 
	
	Indeed, assume toward a contradiction that there exists an instance $D_*$ in $(I,\Gamma, \Scope)$ such  that $D_*$ does not satisfy $\sigma$. By the fact that $(D_i)^{\Gamma}$ $\models$ $\sigma$ for each $i$ $\in$ $[1, k]$. Then there cannot exist a homomorphism to $D_*$ from $(D_i)^{\Gamma}$ for any $i$ $\in$ $[1$, $k]$, such that the the image of the homomorphism would include one or more violations of $\sigma$ in $D_*$. The reason for the nonexistence of any such homomorphism is that each $(D_i)^{\Gamma}$ is ``minimal'' in terms of satisfying both $I$ and $\sigma$, and $D_*$ does not satisfy $\sigma$ while $(D_i)^{\Gamma}$ satisfies $\sigma$ for each $i$ $\in$ $[1, k]$. Note that all the instances $(D_i)^{\Gamma}$ taken together cover all the ``minimal'' instances that satisfy all of $I$, $\Gamma$, and $\sigma$. Thus, to obtain $D_*$, we need to add at least one tuple to some $(D_i)^{\Gamma}$, and then chase the result (call this result $D_{**}$) with $\Gamma$. 
	
	By construction of $D_{**}$ and $D_*$, there must be a homomorphism, $h$, from  $D_{**}$ to the part of $D_*$ that has all the tuples in $\nu(\phi)$ (recall that $\phi$ is the body of $\sigma$) with $\nu(Y)$ $\neq$ $\nu(Z)$, for some valuation $\nu$ from $\phi$ to $D_*$. Thus, it must be possible to extend $h$ to a homomorphism, $h'$, from  $D_{**}$ to $D_*$, such that the preimage of $h'$ includes all of $I$. Take one $h'$ such that its preimage $D'$ is exactly $I$ plus the minimal set (potentially empty)  of extra tuples that are needed to cover all of the preimage of $h$ by the preimage of $h'$. Then, by construction of the instances $(D_i)^{\Gamma}$ (by that construction, there must be a homomorphism from the $I$ in $SKB$ to $(D_i)^{\Gamma}$ for each $i$ $\in$ $[1$, $k]$), $D'$ must be isomorphic to $(D_i)^{\Gamma}$ for some $i$ $\in$ $[1$, $k]$. It follows that there exists a homomorphism from that $(D_i)^{\Gamma}$ to a part of $D_*$ that contains at least one violation of $\sigma$. We get the desired contradiction, as the existence of such a homomorphism is impossible by construction of the $(D_i)^{\Gamma}$ and by our assumption that there exists a ground instance $D_*$ of $SKB$ that violates $\sigma$.
	\end{enumerate} 
	%
%	\mbox{} 
	%
	%\ryc{These are pieces of the old proof:}
	%
%	Thus, we can describe $D_*$ as having been generated by extending one of the $(D_i)^{\Gamma}$'s by at least one more tuple, and by then chasing the resulting instance with $\Gamma$. Observe that this process must result in  an instance ($D_*$) that is bound to satisfy $\sigma$. 
%
%	Indeed, the chase that would generate $D_*$ would involve at least all the tuples (and thus the chase would ``go at least as far as'') in the appropriate $D_i^{\Gamma}$. Consider any one violation of $\sigma$ in $D_*$; this violation must be of the form $\nu(\phi({\bar X}))$, with $\nu(X)$ and $\nu(Y)$ being two distinct terms. \ryc{Stopped here F02/17/17}
	%
	 %there must exist at least one $j$ $\in$ $[1, k]$ such that there is a homomorphism, $h$, from $(D_j)^{\Gamma}$ into 
		%
	 %violation of $\sigma$ and from. ((from $(D_i)^{\Gamma}$ $\models$ $\sigma$ for each $i$ $\in$ $[1, k]$) ) We thus obtain the desired contradiction. 
%\end{enumerate} 
%
%\ryc{Need to finish writing this prooffrom handwritten}
%

\medskip
\noindent
\textbf{Membership in \conp}. From the remarks above, to solve the complement of the satisfaction problem it suffices to guess one of the $D_i$s (which are of 
polynomial size), a mapping $h$ from $\phi({\bar X})$ to the chase of $D_i$ with $\Gamma$ such that $h(Z) \neq h(y)$, 
and the appropriate chase rules 
(plus their assignments) to take us from $D_i$ to the images of $h$ (thus, instead of chasing the entire $D_i$ we just guess 
the witnesses to cover the image of $h$). 

\medskip
\noindent
\textbf{Proof of \conp-hardness}. We reduce from the compliment of the 3-colorability problem. Given a graph $G$, let 
$Q_G$ be the corresponding boolean CQ whose underlying graph is $G$ (using fresh variables), using a binary relation $E$ 
to specify its edges. 
Further, let $R$ and $S$ unary relations, and let $I$ be an instance such  that $I^R = \{1\}$, $I^S = \{2\}$, 
$I^E = \{(r,b), (b,r), (b,w), (w,b), (w,r), (r,w)\}$, and define the skb $\skb$ as $(I,\emptyset,\{E\})$. 
Consider then the egd $\sigma = Q_G \wedge R(x) \wedge S(y) \rightarrow x = y$, where $x$ and $y$ are again fresh variables not used in $Q_G$. 
It follows that $\skb \models \sigma$ if and only if $G$ is not 3-colorable.

\subsection{Proof of proposition \ref{tgd-entailment-decidability-prop}}

\begin{example} 
\label{tgd-entailment-decidability-example} 
We illustrate via an example the subtleties in the interplay between the closed-world and open-world aspects of our setting. (Recall that we explore a setting that is partially closed world, as we use a particularly simple case of exact materialized views to model relations that are not in the specified scope of a scoped knowledge base. At the same time, our setting is also partially open world, as expressed by the relations that are in scope in the given scoped knowledge base.) 

Consider a scoped knowledge base $SKB$ $=$ $(I,\Gamma, \Scope)$ over schema  $\Sch$ $=$ $\{ R(A), T(B) \}$, with $\Gamma$ $=$ $\emptyset$, the relation $T$ being the only relation in $\Scope$, and $I$ consisting of four tuples $R(a),$ $R(b),$ $T(a),$ and $T(b)$. 
 
(i) Let us check first whether tgd $\sigma_1: \ R(X) \rightarrow T(X)$ holds on $SKB$. Intuitively, $\sigma_1$ should hold on $SKB$. Indeed, the relation $R$ in $I$ is exact (closed world), and thus stays the same (i.e., always is exactly $\{$ $R(a),$ $R(b)$ $\}$) in all the ground instances in $SKB$. At the same time, relation $T$ in $I$ is open world, and thus each ground instance in $SKB$ has an instance of $T$ that is a superset of $\{$ $T(a),$ $T(b)$ $\}$. 

To verify formally that the tgd $\sigma_1$ holds on $SKB$, we transform each of the body and head of $\sigma_1$ into two respective CQ queries, $Q_1$ and $Q_2$:  

\noindent 
$Q_1(X) \leftarrow R(X); \ Q_2(X) \leftarrow T(X).$ 

By Proposition \ref{tgd-entailment-decidability-prop-app}, the tgd $\sigma_1$ holds on $SKB$ iff the result of transforming $Q_1$ using the information we have in $SKB$ is contained in $Q_2$.

We begin the transformation of $Q_1$ by conjoining its body with the conjunction of all the facts in $I$, and denote the result by $Q'_1$: 

\noindent 
$Q'_1(X) \leftarrow R(X), R(a), R(b), T(a), T(b).$ 

We now chase $Q'_1$ with a dependency, $\tau$, generated from the materialized view $V(X) \leftarrow R(X)$. (The view $V$ expresses the fact that the relation $R$ in $I$ is exact --- closed world --- and thus is exactly $\{$ $R(a),$ $R(b)$ $\}$ in all the ground instances in $SKB$.) The dependency $\tau$ is constructed from the definition of the view $V$ and from its answer, which is the relation $\{$ $R(a),$ $R(b)$ $\}$ given in $SKB$: 

\noindent 
$\tau: \ R(X) \rightarrow X = a \vee X = b.$ 

The chase transforms the query $Q'_1$ into a UCQ query $Q''_1$: 

\noindent 
$Q^{''(1)}_1(a) \leftarrow R(a), R(a), R(b), T(a), T(b).$ 

\noindent 
$Q^{''(2)}_1(b) \leftarrow R(b), R(a), R(b), T(a), T(b).$ 

(After the removal of duplicate tuples from each CQ component of $Q''_1$, the bodies of the components become identical to each other, but the heads are still distinct.) 

By the containment mapping $\mu_1:$ $\{$ $X$ $\rightarrow$ $a$ $\}$, we have that $Q^{''(1)}_1$ is contained in $Q_2$. Further, by  the containment mapping $\mu_2:$ $\{$ $X$ $\rightarrow$ $b$ $\}$, we have that $Q^{''(2)}_1$ is contained in $Q_2$. We conclude that $Q''_1$ is contained in $Q_2$. Thus, by Proposition \ref{tgd-entailment-decidability-prop-app} we have that the tgd $\sigma_1$ holds on $SKB$. 

(ii) Let us now check whether tgd $\sigma_2: \ T(X) \rightarrow R(X)$ holds on $SKB$. Intuitively, $\sigma_2$ should not hold on $SKB$. (Recall that  the relation $R$ in $I$ is closed world and thus is exactly $\{$ $R(a),$ $R(b)$ $\}$ in all the ground instances in $SKB$. At the same time, relation $T$ in $I$ is open world, and thus each ground instance in $SKB$ has an instance of $T$ that is a superset of $\{$ $T(a),$ $T(b)$ $\}$, in particular a proper supersets of this set in an infinite number of ground instances of $I$ in $SKB$.) 

By the process that is symmetric to that in part (i) of this example, we obtain from $\sigma_2$ two CQ queries, $P_1$ and $P_2$:  

\noindent 
$P_1(X) \leftarrow T(X); \ P_2(X) \leftarrow R(X).$ 

By Proposition \ref{tgd-entailment-decidability-prop-app}, the tgd $\sigma_2$ holds on $SKB$ iff the result of transforming $P_1$ using the information we have in $SKB$ is contained in $P_2$. 

We begin the transformation of $P_1$ by conjoining its body with the conjunction of all the facts in $I$, and denote the result by $P'_1$: 

\noindent 
$P'_1(X) \leftarrow T(X), R(a), R(b), T(a), T(b).$ 

We now chase $P'_1$ with the dependency, $\tau$, featured in part (i) of this example: 

\noindent 
$\tau: \ R(X) \rightarrow X = a \vee X = b.$ 

The chase keeps the query $P'_1$ intact, as $\tau$ is not applicable. We denote by $P''_1$ the result of thus terminated chase. 

\noindent 
$P''_1(X) \leftarrow T(X), R(a), R(b), T(a), T(b).$ 

There does not exist a containment mapping from $P_2$ to $P''_1$. (Any containment mapping from from $P_2$ to $P''_1$ would have to map $X$ in $P_2$ to the head variable of $P''_1$, which would force an invalid mapping from the only subgoal $R(X)$ of $P_2$ to the non-matching subgoal $T(X)$ of $P''_1$.) We prove below that in this case the tgd $\sigma_2$ does not hold on $SKB$.
\end{example} 

Let us begin with the general case ($\Pi^p_2$ membership).
\begin{proof}{(Proposition \ref{tgd-entailment-decidability-prop} - first part)}  
Let $\sigma$ be of the form $\sigma: \ \phi({\bar X}, {\bar Y}) \rightarrow \ \exists {\bar Z} \ \psi({\bar X}, {\bar Z})$. Consider two CQ queries 
$Q({\bar X}) \leftarrow \phi({\bar X}, {\bar Y})$ 
and 
$P({\bar X}) \leftarrow \psi({\bar X}, {\bar Z}),$ each constructed from the respective part of the tgd $\sigma$. The proof constructs from $Q({\bar X})$ a $UCQ^{\neq}$ query $Q''({\bar X})$, such that $Q''$ is equivalent to $Q$ on all ground instances in the scoped knowledge base $SKB$. We then show that $Q''$ is contained in $P$ if and only if the tgd $\sigma$ holds on each instance in $SKB$. As the containment test for $UCQ^{\neq}$ queries in CQ queries is decidable in $\Pi^p_2$  (see e.g. \cite{SY80}) the result of the first part follows.

We begin constructing the query $Q''({\bar X})$ by conjoining $\phi({\bar X}, {\bar Y})$ with the instance $I$ in $SKB$, treating $I$ as a conjunction of atoms. %, after renaming each named null in $I$ into a fresh variable not already occurring in ${\bar X}$ or ${\bar Y}$. 
We then chase the resulting CQ query $Q'({\bar X})$ with $\Gamma$, after  $\Gamma$ has been (i) transformed into a set of dependencies $\Gamma^{\neq}$, and (ii) enhanced with a set ${\cal V}^{\neq}$ of dependenies that were introduced in \cite{ChirkovaY14}  as a straightforward generalization of disjunctive egds %as introduced in 
\cite{DeutschT01,FaginKMP05}. We denote by $\Psi$  the result of transforming $\Gamma$ via (i) and (ii), i.e., $\Psi$ $:=$ $\Gamma^{\neq}$ $\cup$ ${\cal V}^{\neq}$; as discussed above, we obtain the query $Q''({\bar X})$ by chasing $Q'({\bar X})$ with $\Psi$. We now provide the details of (i) and (ii) in the construction of $\Psi$, using the exposition in \cite{ChirkovaY14}. 

{\bf Constructing $\Psi$:} 
The construction of $\Psi$ uses {\em normalized} versions of conjunctions of relational atoms (see, e.g., \cite{ZhangO97}). That is, let $\phi$ be a conjunction of relational atoms. We replace in $\phi$ each duplicate occurrence of a variable or constant with a fresh distinct variable name. As we do each replacement, say of $X$ (or $c$) with $Y$, we add to the conjunction the equality atom $Y$ $=$ $X$ (or $Y$ $=$ $c$). As an illustration, if $\phi$ $=$ $P(X,X) \wedge S(c,c,X)$, then its normalized version is ${\phi}^{(norm)}$ $=$ $P(X,Y) \wedge S(c,Z,W) \wedge Y = X \wedge Z = c \wedge W = X$. %Clearly, 
By construction, the normalized version of each $\phi$ is unique up to variable renamings. For the normalized version ${\phi}^{(norm)}$ of a conjunction $\phi$, we will denote by ${\cal R}({\phi}^{(norm)})$ the conjunction of all the relational atoms in ${\phi}^{(norm)}$, and will denote by ${\cal E}({\phi}^{(norm)})$ the conjunction of all the equality atoms in ${\phi}^{(norm)}$. (If ${\phi}^{(norm)}$ has no equality atoms, we set ${\cal E}({\phi}^{(norm)})$ to $true$.) %Thus, ${\cal C}^{(norm)}$ $=$ ${\cal R}({\cal C}^{(norm)})$ $\wedge$ ${\cal E}({\cal C}^{(norm)})$. 

A {\em non-egd} {\em (negd)} % for short)  
is a dependency of the form 
%
%\vspace{-0.2cm} 
%
\begin{equation} 
\label{negd-eqn}  
%\vspace{-0.1cm} 
\sigma: \phi(\bar{W}) \rightarrow X \neq Y.  
\end{equation} 

%\vspace{-0.1cm} 

\noindent 
Here, $\phi$ is a conjunction of relational atoms, and each of $X$ and $Y$ is an element of the set of variables $\bar W$. 

We also use %in Section~\ref{containment-procedure-sec} 
chase with ``implication constraints,'' %or Horn rules with  empty heads, 
see, e.g., \cite{ZhangO97}. An {\em implication constraint (ic)} is a dependency of the form $\tau: \phi(\bar{W}) \rightarrow false$, with $\phi(\bar{W})$ a conjunction of relational atoms. % (and $false$ a truth value). 

Intuitively, to obtain the query $Q''({\bar X})$, we will be performing chase of $CQ^{\neq}$ queries (starting with $Q'({\bar X})$) with the set of dependencies $\Psi$, which includes potentially ics, negds, egds, and tgds. (The chase rules are as defined in \cite{ChirkovaY14}.) 

(i) Dependencies $\Gamma^{\neq}$: We convert each dependency in $\Gamma$ using a conversion rule that follows, and then produce $\Gamma^{\neq}$ as the union of the outputs. The conversion rule for a dependency $\gamma$ $\in$ $\Gamma$ of the form $\gamma: \phi({\bar X},{\bar Y}) \rightarrow \exists \bar{Z} \ \psi(\bar{X},\bar{Z})$ converts $\phi$ into ${\cal R}(\phi^{(norm)})$ $\wedge$ ${\cal E}(\phi^{(norm)})$, and then returns 

\begin{tabbing} 
$\gamma^{(\neq)}: {\cal R}(\phi^{(norm)}) \rightarrow  \exists \bar{Z} \ \psi(\bar{X},\bar{Z}) \vee \neg {\cal E}(\phi^{(norm)})$.  
\end{tabbing} 

(ii) Dependencies ${\cal V}^{\neq}$: We use a type of dependencies, as introduced in \cite{ChirkovaY14}, that collectively enable us to reflect the requirement that in all the ground instances of $SKB$, the contents of all the outside-scope relations are fixed. Let $R_1,$ $\ldots$, $R_m$ ($m$ $\geq$ $0$) be the names of all such outside-scope relations in $SKB$. We denote by $\cal V$ the set of $m$ views $V_1$, $\ldots$, $V_m$, such that for each $i$ $\in$ $[1,$ $m]$ and for the relation $R_i$, $V_i$ is defined as $V_i({\bar X})$ $\leftarrow$ $R_i({\bar X})$. We now proceed for each $V_i$ as follows: 
\begin{itemize} 
	\item 
If, for the instance $I$ in $SKB$, we have $R_i(I)$ $=$ $\emptyset$, we define the implication constraint  $\iota_{V_i}$ {\em for} $V_i$ as 
\begin{equation} 
\label{new-iota-eqn}  
\iota_{V_i}: \ R_i({\bar X}) \rightarrow \ false.  
\end{equation}

% let $k_V$ $=$ $0$, and let $MV[V]$ be the set $\{ () \}$. Then we define the {\em $MV$-induced generalized egd} $\tau_V$ {\em for} $V$ as  

%\begin{tabbing} 
%$\tau_V:$ $\phi({\bar X},{\bar Y})$ $\rightarrow$ $true$.  
%\end{tabbing} 

%Finally, 

\item 

Now suppose that the arity $p_i$ of the relation $R_i$ is greater than zero, and that for the instance $I$ in $SKB$, we have  $R_i(I)$ $=$ $\{ {\bar t}_1$, ${\bar t}_2$, $\ldots$, ${\bar t}_{m_i} \}$, with ${m_i}$ $\geq$ $1$. Then we define the generalized negd \cite{ChirkovaY14} $\tau_{V_i}$ {\em for} $V_i$ as  
%
%\begin{tabbing} 
%$\tau_V:$ $\phi({\bar X},{\bar Y})$ $\rightarrow$ $\bigvee_{i=1}^{m_V} ({\bar X} = {\bar t}_i)$.  
%\end{tabbing} 
%\vspace{-0.1cm} 
\begin{equation} 
\label{new-tau-eqn}  
\tau_{V_i}: R({\bar X}) \rightarrow \vee_{j=1}^{m_i} ({\bar X} = {\bar t}_j) .  
\end{equation} 

%\vspace{-0.1cm} 

\noindent 
Here, ${\bar X}$ $=$ $[S_1,\ldots,S_{p_i}]$ is the (distinct-variable-only) head vector of the query for $V_i$.  For each $j$ $\in$ $[1,$ ${m_i}]$ and for the ground tuple $\bar t_j$ $=$ $(c_{j1},$ $\ldots,$ $c_{jp_i})$ $\in$   $R_i(I)$, we abbreviate by ${\bar X} = {\bar t}_j$ the conjunction $\wedge_{l=1}^{p_i} (S_l = c_{jl})$. 

\end{itemize}

The set of dependencies ${\cal V}^{\neq}$ is the union of the implication constraints and generalized negds, one for each relation ouside the scope in $SKB$, with the exception that nonempty Boolean relations (if any are present ouside the scope in $SKB$) are not represented in ${\cal V}^{\neq}$. (It is shown in \cite{ChirkovaY14} that it is not necessary for the correctness of the chase to include such dependencies for nonempty Boolean relations.) 

%\vspace{-0.1cm} 

%\noindent 
%(Observe that whenever $\sigma$ is an egd, $\sigma_{(\neq)}$ is a gnegd.) % See Example~\ref{disj-neq-ex} for an illustration.) 

%\vspace{-0.1cm} 

It is shown in \cite{ChirkovaY14} that chase of CQ queries with sets of dependencies such as in $\Psi$ terminates and has a unique output (up to variable renaming) that is a $UCQ^{\neq}$ query, provided $\Gamma$ is a weakly acyclic \cite{FaginKMP05} set of egds and tgds. Denote by $Q''({\bar X})$ the output of the chase of the CQ query $Q'({\bar X})$ (as constructed in the beginning of this proof) with the set of dependencies $\Psi$ $=$ $\Gamma^{\neq}$ $\cup$ ${\cal V}^{\neq}$. By construction of  $Q''({\bar X})$ (see \cite{ChirkovaY14}), we have that for each ground instance $D$ in $SKB$, the queries $Q({\bar X})$ and $Q''({\bar X})$ have on $D$ the same answer set, call it ${\cal A}_Q(D)$. 

Now: 
\begin{itemize} 
	\item Suppose that we have $Q''({\bar X})$ $\sqsubseteq$ $P({\bar X})$. Then it follows immediately from the definition of tgds and from the reasoning in the previous paragraph that $\sigma$ holds on all ground instances in $SKB$. 
	
	\item Conversely, suppose that $\sigma$ holds on  all ground instances $D$ in $SKB$. Then, by definition of $\sigma$, for each such $D$ we have that 
	
	$Q(D)$ $\subseteq$ $P(D)$. 
	
	But we have obtained that the answer ${\cal A}_Q(D)$ to the query $Q$ on the instance $D$ is also the anwer to $Q''({\bar X})$ on $D$. Thus, we have that $Q''({\bar X})$ $\sqsubseteq$ $P({\bar X})$. 	The latter result is immediate from the following  properties of the relationship between $Q$ and $Q''$, by construction of $Q''$ from $Q$: 
	\begin{itemize} 
		\item On all the instances $D$ that are represented by the given SKB, $Q(D)$ $=$ $Q''(D)$; 
		\item On all the instances $D$ that do not have at least one fact included in $I$ in the given SKB, $Q''(D)$ $=$ $\emptyset$; and 
		\item On each instance $D$ that includes all the facts that are present in $I$ in the given SKB, each valuation from $Q''$ to $D$ is an instance represented by the given SKB.  
	\end{itemize}

\end{itemize} 
\end{proof}

Next we show ($\np$-completeness when the SKB is safe).
\begin{proof}{(Proposition \ref{tgd-entailment-decidability-prop} - second part)}  
The key ingredient is that, when the SKB is safe, everything produced out of the case belongs to a relation on scope. 
Thus, we can then proceed with the standard proof for implication of tgds in an instance, where the premise of the tgd is 
added to $I$, chased according to $\Gamma$ so that the conclusion of the tgd can be found in the chase. 
Since $\Gamma$ is weakly acyclic, chase terminates and query answering is $\np$-complete (combied complexity) \cite{FKMP05}.
\end{proof}

\subsection{Proof of Proposition \ref{prop-query-answering}}

The membership in \exptimenp\ follows from the proof of Proposition \ref{prop-dyn-query-answering}, as the setting therein strictly generalizes the one in the 
statement of the Proposition. 

For the membership in \exptime, we can use the following algorithm. 
\begin{itemize}
\item Compute the skb $\skb = (J,\Gamma,\Scope)$ representing the set $\outcome_{P_1,\dots,P_n}(I)$. 
\item We know that $J$ may be of exponential size, but the amount of data values in $J$ are the same as in $I$. Let us denote this number 
by $d$.  
\item Since CQs are preserved under homomorphisms, it suffices 
to check the satisfaction of $Q$ over $J$. In order to do that, we can enumerate the number of homomorphisms $h$ from $Q$ to 
$J$, which are bounded by $d^{|Q|}$, and see whether $h(Q)$ is realised in $J$. 
\end{itemize}

For the membership in \ptime, we use the same algorithm as for the \exptime case, albeit this time all of $Q$ and $P_1,\dots,P_n$ are fixed. 
This means that $J$ is of size plynomial in $I$, and the number of homomorphisms if also polynomial. This results altogether in a polynomial algorithm. 

\subsection{Proof of Proposition \ref{prop-readi-undecidable-constraints}}

\newcommand{\alphabet}{\textbf{A}}
\newcommand{\Moveright}{\textup{R}}
\newcommand{\Moveleft}{\textup{L}}
\newcommand{\blankcell}{\mathord{\hbox{\textvisiblespace}}}

%Proposition \ref{prop-readi-undecidable-constraints} stated that the
%problem \textsc{constraint readiness} is undecidable, even if $\Pi$ is
%a set of procedures with safe scope and $\Sigma$ contains only tgds.
%In outline, the proof goes as follows.

%\begin{proof}{(Proposition \ref{prop-readi-undecidable-constraints})}

  The proof reduces the halting problem to {\textsc{constraint
      readiness}} for tgd constraints, encoding the computation of
  Turing machine on an input word by using a set of relations that
  describe successive machine configurations and tgds that describe
  machine transitions.  The proof outline here is similar in spirit to
  the proof outline given in \cite[Theorem 1]{DNR08b} to show that the
  problem of deciding whether there exists a terminating chase
  sequence for a set of tgds is undecidable.

  Let $M = (Q,\alphabet,q_0,q_h,\delta)$ be a deterministic Turing
  machine with a single biinfinite tape, an alphabet $\alphabet$
  including a blank symbol that we write as $\blankcell$, and a set of
  controller states $Q$ containing an initial state $q_0$.  We assume
  without loss of generality that $M$ has a single halting state $q_h$,
  distinct from $q_0$, and that no transitions are defined for this
  state.  Let $w = a_1,\dots,a_m$ denote the input word in
  $\alphabet^*$.

  We represent the space-time structure of a computation of $M$ by a
  grid of tape cells, in which the top row of cells represents the
  initial configuration of $M$, and each successive row beneath the
  first represents the next machine configuration.  Each tape cell in
  a row is connected by horizontal edges to the cells representing its
  left and right neighbors.  With some exceptions, every cell in a row
  is also connected by vertical edges to its correspondents (if any)
  in the preceding and succeeding rows.

\medskip

\textbf{Schema.}  We represent this grid structure in relations, using
a schema $\Sch$ that includes ternary relations $T$ (``tape'') and $H$
(``head'').  The relation $T$ describes the tape contents and layout
in a machine configuration: $T(x,a,y)$ indicates that cell $x$
contains letter $a$ and lies immediately to the left of cell $y$.  The
relation $H$ describes the head location and state: $H(x,q,y)$
indicates that the tape head rests on cell $x$ (immediately to the
left of cell $y$) and that the machine controller is in state $q$.  We
can depict the initial configuration of $M$ as follows, with $c_i$
denoting the tape cell numbered $i$, and with $B$ and $E$ denoting special
elements, different from any tape symbols, that represent,
respectively, the beginning and the end of the used or visited portion
of the infinite tape.
  
\begin{center}
  \begin{tikzpicture}[node distance=15mm]
    \node (c0) {$c_0$};
    \node (c1) [right of=c0] {$c_1$};
    \node (c2) [right of=c1] {$c_2$};
    \node (c3) [right of=c2] {\dots};
    \node (cm) [right of=c3] {$c_m$};
    \node (cm1) [right of=cm] {$c_{m+1}$};
    \node (cm2) [right of=cm1] {$c_{m+2}$};

    \draw (c0) edge node [above] {$B$} (c1);
    \draw (c1) edge node [above] {$a_1$}  node [below] {$q_0$} (c2);
    \draw (c2) edge node [above] {$a_2$} (c3);
    \draw (c3) edge node [above] {$a_{m-1}$} (cm);
    \draw (cm) edge node [above] {$a_m$} (cm1);
    \draw (cm1) edge node [above] {$E$} (cm2);
  \end{tikzpicture}
\end{center}

The schema $\Sch$ also includes two auxilliary binary relations $L$
(``left'') and $R$ (``right'') used to ensure that the tape cells to
the left and right of the active head region in each successor step
are copies of the tape cells to the left and right in the predecessor
step.

To ensure that the tgds within each procedure are acyclic, the schema
$\Sch$ includes additional ternary relations $T'$ and $H'$ and binary
relations $L'$ and $R'$.  These, respectively, have exactly the same
interpretation as do $T$, $H$, $L$, and $R$, but serve only as dummies
that we use to divide what would otherwise be cyclic tgds in one
procedure into matching acyclic sets that appear in separate
procedures.

In the remainder of the proof, as in the explanations of $T$ and $H$
above, we use constants to denote the grid cells, states, and tape
symbols of $M$.  We do not, however, permit tgds to involve constants,
and so assume that for each state and symbol constant $c$, the schema
$\Sch$ contains a distinct unary relation $C$ that is never in scope
and that simulates the use of the constant $c$ in the sense that the
interpretation of $C$ is $C^I = \{c\}$ in the initial instance $I$ and
hence in every subsequent instance.  For example, an atom of the form
$T(x,a,y)$ for some symbol constant $a$ should be read as shorthand
for the conjunction $T(x,v,y) \wedge A(v)$.

\medskip

\textbf{Initial instance.}  The initial instance $I$ represents the initial
machine configuration as follows.
\begin{itemize}
\item $T^I$ consists of triples
  $\{ (c_0,B,c_1), (c_1,a_1,c_2), \dots, (c_m,a_m,c_{m+1}),
  (c_{m+1},E,c_{m+2}) \}$.  Here each $c_j$ is a fresh element
  representing a tape cell.
\item $H^I$ consists of the triple $(c_1,q_0,c_2)$, meaning that $M$
  starts with the head on cell 1 in the initial state $q_0$.
\item $L^I$ and $R^I$ are empty, as are ${T'}^I$, ${H'}^I$, ${L'}^I$,
  and ${R'}^I$.
\end{itemize}

\medskip

\textbf{Set of procedures.}  The set of procedures $\Pi$ contains one
procedure $P^d$ for each transition $d$ in $\delta$, plus a procedure
$P^{lr}$ that copies tape cell contents across configurations, and a
procedure $P^{tr}$ that translates the dummy relations back into the
primary relations.

We define the translation procedure
$P^{tr} = (\Scope^{tr},\Cpre^{tr},\Cpost^{tr},\Q_\safe^{tr})$ so that
$\Scope^{tr} = \{ T[*], H[*], L[*], R[*] \}$, $\Cpre^{tr}$ is empty,
$\Q_\safe^{tr} = \{ T(x_1,y_1,z_1), H(x_2,y_2,z_2), L(x_3,y_3),
R(x_4,y_4)\}$, and
\begin{displaymath}
  \Cpost^{tr} =
  \left\{
    \begin{array}{lll}
      T'(x,y,z) & \rightarrow & T(x,y,z) \\
      H'(x,y,z) & \rightarrow & H(x,y,z) \\
      L'(x,y) & \rightarrow & L(x,y) \\
      R'(x,y) & \rightarrow & R(x,y).
    \end{array}
  \right.
\end{displaymath}
Comparison with the definition shows that $P^{tr}$ has safe scope.

For each transition
$d = (q,a) \mapsto (q',a',\Moveleft/\Moveright)$ in $\delta$, we
define the transition procedure $P^d$ so that
$\Scope^d = \{ T'[*], H'[*], L'[*], R'[*] \}$, $\Cpre^d$ is empty,
$\Q_\safe^d = \{ T'(x_1,y_1,z_1), H'(x_2,y_2,z_2), L'(x_3,y_3),
R'(x_4,y_4)\}$, and $\Cpost^d$ consists of tgds that characterize
local transition changes.
\begin{enumerate}

\item If $d$ is a right-moving transition, the first tgd in $\Cpost^d$
  encodes motion that does not extend the used portion of the tape by
  visiting new cells.
  \begin{eqnarray*}
    \lefteqn{T(x,a,y) \wedge H(x,q,y) \wedge T(y,v,z) \rightarrow {}} \\
    & \exists x'\ \exists y'\ \exists z'
    & T'(x',a',y') \wedge T'(y',v,z') \wedge H'(y',q',z') \wedge {} \\
    && L'(x,x') \wedge R'(y,y').
  \end{eqnarray*}
  \begin{center}
    \begin{tikzpicture}[node distance=15mm]
      \node (x) {$x$};
      \node (y) [right of=x] {$y$};
      \node (z) [right of=y] {$z$};
      \draw (x) edge node [above] {$a$} node [below] {$q$} (y);
      \draw (y) edge node [above] {$v$} (z);

      \node (mid) [right of=z] {\Large $\Longrightarrow$};

      \node (x1) [right of=mid] {$x$};
      \node (y1) [right of=x1] {$y$};
      \node (z1) [right of=y1] {$z$};
      \draw (x1) edge node [above] {$a$} node [below] {$q$} (y1);
      \draw (y1) edge node [above] {$v$} (z1);
      \node (x1p) [below of=x1] {$x'$};
      \node (y1p) [below of=y1] {$y'$};
      \node (z1p) [below of=z1] {$z'$};
      \draw (x1p) edge node [above] {$a'$} (y1p);
      \draw (y1p) edge node [above] {$v$}  node [below] {$q'$}(z1p);
      \draw (x1) edge node [left] {$L$} (x1p);
      \draw (y1) edge node [right] {$R$} (y1p);
    \end{tikzpicture}
  \end{center}
  Here variables $x'$, $y'$, and $z'$ name the cells in the successor
  configuration that correspond to the cells $x$, $y$, and $z$.  The
  symbols $a$, $a'$, $q$ and $q'$ denote constants, with
  $a, a' \in \alphabet$ and $q, q' \in Q$.

  If $d$ is a left-moving transition, the first tgd is defined
  correspondingly.
  \begin{eqnarray*}
    \lefteqn{T(x,a,y) \wedge H(x,q,y) \wedge T(z,v,x) \rightarrow {}} \\
    & \exists x'\ \exists y'\ \exists z'
    & T'(x',a',y') \wedge T'(z',v,x') \wedge H'(z',q',x') \wedge {} \\
    && L'(x,x') \wedge R'(y,y').
  \end{eqnarray*}
  \begin{center}
    \begin{tikzpicture}[node distance=15mm]
      \node (x) {$x$};
      \node (y) [right of=x] {$y$};
      \node (z) [left of=x] {$z$};
      \draw (x) edge node [above] {$a$} node [below] {$q$} (y);
      \draw (x) edge node [above] {$v$} (z);

      \node (mid) [right of=y] {\Large $\Longrightarrow$};

      \node (z1) [right of=mid] {$z$};
      \node (x1) [right of=z1] {$x$};
      \node (y1) [right of=x1] {$y$};
      \draw (x1) edge node [above] {$a$} node [below] {$q$} (y1);
      \draw (x1) edge node [above] {$v$} (z1);
      \node (x1p) [below of=x1] {$x'$};
      \node (y1p) [below of=y1] {$y'$};
      \node (z1p) [below of=z1] {$z'$};
      \draw (x1p) edge node [above] {$a'$} (y1p);
      \draw (x1p) edge node [above] {$v$}  node [below] {$q'$}(z1p);
      \draw (x1) edge node [left] {$L$} (x1p);
      \draw (y1) edge node [right] {$R$} (y1p);
    \end{tikzpicture}
  \end{center}

\item If $d$ is a right-moving transition, the second tgd in
  $\Cpost^d$ encodes right moves that extend the used portion of the
  tape by moving the $E$ marker right and inserting a blank cell.
  \begin{eqnarray*}
    \lefteqn{T(x,a,y) \wedge H(x,q,y) \wedge T(y,E,z) \rightarrow {}} \\ 
    & \exists x'\ \exists y'\ \exists z'\ \exists u'
    & T'(x',a',y') \wedge T'(y',\blankcell,z') \wedge
      H'(y',q',z') \wedge T'(z',E,u') \wedge {} \\
    && L'(x,x') \wedge R'(z,z').
  \end{eqnarray*}
  \begin{center}
    \begin{tikzpicture}[node distance=15mm]
      \node (x) {$x$};
      \node (y) [right of=x] {$y$};
      \node (z) [right of=y] {$z$};
      \draw (x) edge node [above] {$a$} node [below] {$q$} (y);
      \draw (y) edge node [above] {$E$} (z);

      \node (mid) [right of=z] {\Large $\Longrightarrow$};

      \node (x1) [right of=mid] {$x$};
      \node (y1) [right of=x1] {$y$};
      \node (z1) [right of=y1] {$z$};
      \draw (x1) edge node [above] {$a$} node [below] {$q$} (y1);
      \draw (y1) edge node [above] {$E$} (z1);
      \node (x1p) [below of=x1] {$x'$};
      \node (y1p) [below of=y1] {$y'$};
      \node (z1p) [below of=z1] {$z'$};
      \node (u1p) [right of=z1p] {$u'$};
      \draw (x1p) edge node [above] {$a'$} (y1p);
      \draw (y1p) edge node [above] {$\blankcell$} node [below] {$q'$}(z1p);
      \draw (z1p) edge node [above] {$E$} (u1p);
      \draw (x1) edge node [left] {$L$} (x1p);
      % \draw (y1) edge node {$L$} (y1p);
      \draw (z1) edge node [right] {$R$} (z1p);
    \end{tikzpicture}
  \end{center}
  We include the $R'(y,y')$ atom in the right-hand side of the tgd for
  uniformity, though it serves no other purpose in this case, as there
  are no cells to copy to the right of the cell marked with $E$ in the
  prior configuration.

  If $d$ is a left-moving transition, this second kind of tgd is
  defined correspondingly, this time moving the beginning of tape
  marker to the left.  In this tgd, the $L(z,z')$ atom is included for
  uniformity.
  \begin{eqnarray*}
    \lefteqn{T(x,a,y) \wedge H(x,q,y) \wedge T(z,B,x) \rightarrow {}} \\ 
    & \exists x'\ \exists y'\ \exists z'\ \exists u'
    & T'(x',a',y') \wedge T'(z',\blankcell,x') \wedge
      H'(z',q',x') \wedge T'(u',B,z') \wedge {} \\
    && L'(z,z') \wedge R'(y,y').
  \end{eqnarray*}
  \begin{center}
    \begin{tikzpicture}[node distance=15mm]
      \node (x) {$x$};
      \node (y) [right of=x] {$y$};
      \node (z) [left of=x] {$z$};
      \draw (x) edge node [above] {$a$} node [below] {$q$} (y);
      \draw (x) edge node [above] {$B$} (z);

      \node (mid) [right of=y] {\Large $\Longrightarrow$};

      \node (z1) [right of=mid] {$z$};
      \node (x1) [right of=z1] {$x$};
      \node (y1) [right of=x1] {$y$};
      \draw (x1) edge node [above] {$a$} node [below] {$q$} (y1);
      \draw (x1) edge node [above] {$B$} (z1);
      \node (x1p) [below of=x1] {$x'$};
      \node (y1p) [below of=y1] {$y'$};
      \node (z1p) [below of=z1] {$z'$};
      \node (u1p) [left of=z1p] {$u'$};
      \draw (x1p) edge node [above] {$a'$} (y1p);
      \draw (x1p) edge node [above] {$\blankcell$}  node [below] {$q'$}(z1p);
      \draw (z1p) edge node [above] {$B$} (u1p);
      \draw (z1) edge node [left] {$L$} (z1p);
      \draw (y1) edge node [right] {$R$} (y1p);
    \end{tikzpicture}
  \end{center}

\end{enumerate}  
As sets of such tgds are acyclic, comparison with the definition shows
that each $P^d$ has safe scope.

Finally, we define the left-right copying procedure
$P^{lr} = (\Scope^{lr},\Cpre^{lr},\Cpost^{lr},\Q_\safe^{lr})$ so that
$\Scope^{lr} = \{ T'[*], L'[*], R'[*] \}$, $\Cpre^{lr}$ is empty,
$\Q_\safe^{lr} = \{ T'(x_1,y_1,z_1), L'(x_2,y_2), R'(x_3,y_3)\}$, and
$\Cpost^{lr}$ contains two tgds that simply copy the contents of any
cells outside of the active region in one configuration to the
corresponding cells in the successor configuration.

The first tgd copies contents of cells to the left of the active
region.
\begin{eqnarray*}
  T(x,v,y) \wedge L(y,y') & \rightarrow & \exists x'\ T'(x',v,y') \wedge L'(x,x')
\end{eqnarray*}
\begin{center}
  \begin{tikzpicture}[node distance=15mm]
    \node (x) {$x$};
    \node (y) [right of=x] {$y$};
    \draw (x) edge node [above] {$v$} (y);
    \node (yp) [below of=y] {$y'$};
    \draw (y) edge node [left] {$L$} (yp);

    \node (mid) [right of=y] {\Large $\Longrightarrow$};

    \node (x1) [right of=mid] {$x$};
    \node (y1) [right of=x1] {$y$};
    \draw (x1) edge node [above] {$v$} (y1);
    \node (x1p) [below of=x1] {$x'$};
    \node (y1p) [below of=y1] {$y'$};
    \draw (x1p) edge node [above] {$v$} (y1p);
    \draw (x1) edge node [left] {$L$} (x1p);
    \draw (y1) edge node [left] {$L$} (y1p);
  \end{tikzpicture}
\end{center}

The second tgd copies contents of cells to the right of the active
region.
\begin{eqnarray*}
  T(x,v,y) \wedge R(x,x') & \rightarrow & \exists y'\ T'(x',v,y') \wedge R'(y,y')
\end{eqnarray*}
\begin{center}
  \begin{tikzpicture}[node distance=15mm]
    \node (x) {$x$};
    \node (y) [right of=x] {$y$};
    \draw (x) edge node [above] {$v$} (y);
    \node (xp) [below of=x] {$x'$};
    \draw (x) edge node [right] {$R$} (xp);

    \node (mid) [right of=y] {\Large $\Longrightarrow$};

    \node (x1) [right of=mid] {$x$};
    \node (y1) [right of=x1] {$y$};
    \draw (x1) edge node [above] {$v$} (y1);
    \node (x1p) [below of=x1] {$x'$};
    \node (y1p) [below of=y1] {$y'$};
    \draw (x1p) edge node [above] {$v$} (y1p);
    \draw (x1) edge node [right] {$R$} (x1p);
    \draw (y1) edge node [right] {$R$} (y1p);
  \end{tikzpicture}
\end{center}
Comparison with the definition shows that $P^{lr}$ has safe scope.

\medskip

\textbf{Undecidability.}  Let $\Sigma$ be the readiness constraint
consisting of a single tgd $t$ that encodes halting of the machine as
entry of the machine into the halting state $q_h$ as the following.
\begin{eqnarray*}
  H(x,q_0,y) &\rightarrow& \exists x'\ \exists y'\ H(x',q_h,y')
\end{eqnarray*}
\begin{center}
  \begin{tikzpicture}[node distance=15mm]
    \node (x) {$x$};
    \node (y) [right of=x] {$y$};
    \draw (x) edge node [below] {$q_0$} (y);

    \node (mid) [right of=y] {\Large $\Longrightarrow$};

    \node (x1) [right of=mid] {$x$};
    \node (y1) [right of=x1] {$y$};
    \draw (x1) edge node [below] {$q_0$} (y1);
    \node (x1p) [below of=x1] {$x'$};
    \node (y1p) [below of=y1] {$y'$};
    \draw (x1p) edge node [below] {$q_h$} (y1p);
  \end{tikzpicture}
\end{center}
Here $x'$ and $y'$ denote cells in some successor configuration of
the initial configuration, not necessarily immediate successors of
$x$ and $y$.

This tgd is cyclic, but we explain how to transform it into an acyclic tgd at the end of the proof. 

\medskip

We now show that $I$ can be readied for $t$ using $\Pi$ if and only
if $M$ halts on input $w$.  

\noindent
\textbf{(Readiable $\Rightarrow$ halts):} Assume that $I$ can be
readied for $t$ using $\Pi$.  Let $P_1,\dots,P_n$ be a witnessing
sequence of procedures, and let $I_0, \dots, I_n$ be a sequence of
instances such that $I_0 = I$ and for each $i>0$,
$I_i = \chase_i (I_{i-1})$, where $\chase_i$ denotes chase with
respect to the set $\Cpost$ of tgds in the output constraints of
procedure $P_i$.  In particular, $I_n$ is an instance produced out of
chasing dependencies in the output constraints of the procedures in
$\Pi$.

By construction, the initial instance $I$ characterizes the starting
configuration of $M$ on $w$, in which the state specification is
$H(c_1,q_0,c_2)$.  Also by construction, if a transition procedure
applies to a configuration, then $M$ must have a corresponding
transition.  Because $M$ is deterministic, if no transition procedure
applies even after all copying and translation dependencies have been
chased, it must be because the final configuration is one in which
$H(c_i,q_h,c_{i+1})$ for some $i$, meaning that the computation has
halted.  Therefore $I$ can be readied for $t$ only if some sequence of
procedures yields a transition to $q_h$.

\medskip

\noindent
\textbf{(Halts $\Rightarrow$ readiable):} Assume now that $M$ halts on
input $w$.  Let $d_1,\dots,d_n$ be the sequence of transitions taken
by $M$ in moving from the initial configuration to the final halting
configuration, and let $P_i$ denote the transition procedure for
$d_i$.

% We claim that there is a sequence of the form
% ${P_1,S_1,P^{tr},P_2,S_2,P^{tr},\dots,P_n,S_n,P^{tr}}$ where each $S_i$ stands
% for a finite number of repetitions of $P^{lr}$.

We claim that $t$ satisfies each instance in
\begin{displaymath}
  \outcome_{P_1,P^{lr},P^{tr},P_2,P^{lr},P^{tr},\dots,P_n,P^{lr},P^{tr}}(I).
\end{displaymath}
To see this, note that $t$ is satisfied as long as the conjunctive
query $\exists x' \exists y' H(x',q_h,y')$ is satisfied.  That query,
in turn, is satisfied if the chase of $I$ satisfies each of the
$\Cpost$ dependencies of $P^{lr}$, $P^{tr}$, and $P^{d_i}$ for each
$i$.  By construction, each of the transition procedures produces, by
chase and in conjunction with the copying and translation procedures,
each successive configuration, so the final transition $d_n$ will
require, after translation, that $H$ contains a triple of the form
$(c_i,q_h,c_{i+1})$.

We conclude that $I$ can be readied for $t$ using $\Pi$ if and only
if $M$ halts on input $w$.  As the halting problem is undecidable,
so must also be the constraint readiness problem.

\bigskip

We note that the claim ofProposition
\ref{prop-readi-undecidable-constraints} holds even if $\Sigma$ is
acyclic.  The tgd $t$ employed in the proof is cyclic, but essentially
the same argument would hold were one to use instead the acyclic tgd
\begin{eqnarray*}
  T(c_0,B,c_1) &\rightarrow& \exists x'\ \exists y'\ H(x',q_h,y'),
\end{eqnarray*}
where here $c_0$ and $c_1$ are the constants so named in the initial
instance.

\subsection{Proof of Theorem \ref{theo-readi-decidable-constraints}}

It is easy to see that using procedures with safe scope we will never be able to ready an instance 
for an egd: if the instance $I$ does not satisfy an egd then this violation will be carried over no matter what 
procedures we apply. For this reason, we focus on tgds only. 

Let then $I$, $\Pi$ and $\sigma$ be as outlined in the statement of the theorem (where $\sigma$ is a tgd). 
Let $D$ be the number of different elements in $I$, and $\Sch$ the schema of $I$. 

Further, assume a sequence 
$P_1,\dots,P_\ell$ such that every instance in $\outcome_{P_1,\dots,P_\ell}(I)$ satisfies $\sigma$. 
By Theorem \ref{theo-outcome-fulltgds}, there is an $\skb = (J,\Gamma,\Scope)$ that represents the set 
$\outcome_{P_1,\dots,P_\ell}(I)$. 

We now show that there is a sequence $P'_1,\dots,P'_n$ of procedures in $\Pi$ with the same property, but where now $n$ is bounded 
exponentially on the size of $I$ and $\Pi$. First, 
since all tgds involved in all procedures in $\Pi$ are full tgds, the size of $J$ is bounded by $|D|^|\Sch|$. 

From Proposition \ref{tgd-entailment-decidability-prop}, it must be that the instance $K$ resulting of taking 
$J$ together with the frozen body of the premise of $\sigma$ is such that the consequence of $\sigma$ holds in 
$\chase_\Gamma(K)$. The size of $K$ is bounded by $(|D| + |Q|)^|\Sch|$, and thus we can assume that 
$\Gamma$ contains only tgds of size bounded by $(|D| + |Q|)^|\Sch|$, as any bigger tgd can be equivalen to 
a tgd of such size when chasing $K$. 

We can enumerate all sets $\Omega$ containing tgds of size at most $(|D| + |Q|)^|\Sch|$, and we know 
that the number of different such sets is bounded by $2^{(|D| + |Q|)^|\Sch|}$. 

Let $\skb_i = (K_i,\Gamma_i,\Scope_i)$ be the SKB representing the set  $\outcome_{P_1,\dots,P_i}(I)$. 
As mentioned, we assume without loss of generality that each $\Gamma_i$ contains tgds of size at most 
$(|D| + |Q|)^|\Sch|$. 

If no $K_i$ is equal, then the sequence $\ell$ must then be of length at most $|D|^|\Sch|$, because each new procedure must at least 
introduce some data in $K_i$. 
Further, if for every maximal sequence $p,p+1,p+2,\dots,q$ such that $K_p = K_q$ one cannot find two numbers 
$r_1$ and $r_2$ such that $\Gamma_{r_1}$ is logically equivalent to $\Gamma_{r_2}$, then 
$\ell$ must then be of length at most $|D|^|\Sch| \cdot \Pi * 2^{(|D| + |Q|)^|\Sch|}$. 

On the other hand, if there are  $r_1$ and $r_2$ such that $\Gamma_{r_1}$ is logically equivalent to $\Gamma_{r_2}$, 
we can just prune the sequence from $r_1 +1$ to $r_2$. 

With the above observations in hand we then outline the \twonexptime algorithm: 

\begin{itemize}
\item Guess a sequence $P_1,\dots,P_n$ of procedures. We know it is of length at most doubly-exponential in the size of the input. 
\item Compute the SKB $\skb$ representing $\outcome_{P_1,\dots,P_n}(I)$. 
\item We need also to guess all appropriate chase steps to show that $\skb \models \sigma$, as explained in the proof of proposition \ref{tgd-entailment-decidability-prop}. 
\end{itemize}

\subsection{Proof of Proposition \ref{prop-readi-undecidable-queries}}

Follows from the proof of \ref{prop-readi-undecidable-constraints}, simply by using the query 
$H(x',q_h,y')$ instead of the tgd employed therein. 

\subsection{Proof of Theorem \ref{theo-readi-decidable-queries}}

We use the same argument as in the proof of Propositions \ref{prop-query-answering} and \ref{prop-dyn-query-answering}. 
This time, in addition we need to 
guess a sequence of procedures that yield the appropriate chase rule. 

More precisely, as in the proof of Theorem \ref{theo-readi-decidable-constraints}, we can bound the size of the 
sequence of procedures. 
In order to do that, assume a sequence 
$P_1,\dots,P_\ell$ of procedures from $\Pi$ such that every instance in $\outcome_{P_1,\dots,P_\ell}(I)$ satisfies $Q$. 
By Theorem \ref{theo-outcome-fulltgds}, there is an $\skb = (J,\Gamma,\Scope)$ that represents the set 
$\outcome_{P_1,\dots,P_\ell}(I)$. 
Let also $\skb_i = (K_i,\Gamma_i,\Scope_i)$ be the SKB representing the set  $\outcome_{P_1,\dots,P_i}(I)$. 

Note however that, in contrast with the proof of Theorem \ref{theo-readi-decidable-constraints}, we do not need to focus on maintaining 
the sets $\Gamma_i$ of intermediate SKBs, as we only care about the instances $K_i$. 

We can then construct a corresponding exponential sequence by pruning out all procedures $P_i$ where $J_i$ is the same 
instance as $J_{i-1}$. 

For the \nexptime\ algorithm we can then guess this exponential sequence of procedures whose outcome is represented by an SKB of the form 
$(J,\Gamma',\Scope')$, guess an appropriate homomorphism from $Q$ to $J$ and guess the necessary chase steps to produce 
the image of $Q$ over $J$ as in the proof of Propositions \ref{prop-query-answering} and \ref{prop-dyn-query-answering}.

\subsection{Proof of Proposition \ref{ref-rep-decidable}}

\newcommand{\Schmin}{\Sch_\text{min}}

%We say that a schema $\Sch'$ is an extension of a schema $\Sch$ if (1) all relations in $\Sch$ are contained in 
%$\Sch'$, and (2), all attributes associated to a relation $R$ in $\Sch$ are also associated to $R$ in 
%$\Sch'$ (but there could be more attributes). 

Let  $P = (\Scope,\Cpre,\Cpost,\Q_\safe)$. We first show how to construct, for each instance $I$ over a schema $\Sch$, the \emph{minimal} 
schema $\Schmin$ such that all pairs $(J,\Sch')$ that are possible outcomes of applying $P$ over $(I,\Sch)$  are such that 
$\Sch'$ extend $\Schmin$. 

The algorithm receives a procedure $P$ and a schema $\Sch$ and outputs either $\Schmin$, if the procedure is applicable, or a failure signal in case there is no schema satisfying the output constraints of the procedure. 
Along the algorithm we will be assigning numbers to some of the relations in $\Schmin$. This is important to be able to decide failure. 

\medskip
\noindent
\textbf{Algorithm $A(P,\Sch)$ for constructing $\Schmin$}\\
\textbf{Input}: procedure $P = (\Scope,\Cpre,\Cpost,\Q_\safe$) and schema $\Sch$. \\
\textbf{Output}: either \emph{failiure} or a schema $\Schmin$.

\begin{enumerate}
\item If $\Sch$ does not satisfy the structural constraints in $\Cpre$ or is not compatible with either $\Q_\safe$ or $Q_{\Sch \setminus \Scope}$, output failure. Otherwise, continue. 
\item Start with $\Schmin = \emptyset$. 
\item For each total query $R$ in $\Q_\safe$, assume that $|\Sch(R)| = k$. Set $\Schmin(R) = \Sch(R)$, and label $R$ with $k$.  

%\item Add to $\Schmin$ all relations $R$ mentioned in a total tgd in $\Cpost$ (if they are not already part of $\Schmin$), without associating any attributes to them
\item Add to $\Schmin$ all relations $R$ mentioned in an atom $R[*]$ in $\Cpost$ (if they are not already part of $\Schmin$), without associating any attributes to them 

\item In the following instructions we construct a set $\Gamma(P,\Sch)$ of pairs of relations and attributes. Intuitively, 
a pair $(R,\{a_1,\dots,a_n\})$ in $\Gamma(P,\Sch)$ states that each schema in the output of $P$ must contain a relation $R$ with attributes 
$a_1,\dots,a_n$. 
\begin{itemize}
\item For each relation $R$ in $\Sch$ that is not mentioned in $\Scope$, add to $\Gamma(P,\Sch)$ the pair $(R,\Sch(R))$.
\item For each constraint $R[a_1,\dots,a_n]$ in $\Scope$, add the pair 
$(R,\Sch(R) \setminus \{a_1,\dots,a_n\})$ to $\Gamma(P,\Sch)$.
\item For each atom $R(a_1:x_1,\dots,a_n:x_n)$ in $\Q_\safe$ add to $\Gamma(P,\Sch)$ the pair 
$(R,\{a_1,\dots,a_n\})$. 
\item For each atom $R(a_1:x_1,\dots,a_n:x_n)$ in a tgd or egd in $\Cpost$ add to $\Gamma(P,\Sch)$ the pair 
$(R,\{a_1,\dots,a_n\})$. 
\item For each constraint $R[a_1,\dots,a_n]$ in $\Cpost$, add to $\Gamma(P,\Sch)$ the pair 
$(R,\{a_1,\dots,a_n\})$. 
\end{itemize}

\item For each pair $(R,A)$ in $\Gamma(P,\Sch)$, do the following.
\begin{itemize}
\item If $R$ is not yet in $\Schmin$, add $R$ to $\Schmin$ and set $\Schmin(R) = A$; 
\item If $R$ is in $\Schmin$, update $\Schmin(R) = \Schmin(R) \cup A$. 
%\item If $R$ is in $\Schmin$ but it is not marked, let $B$ be the set of attributes currently associated to $R$. Update $B = B \cup A$. 
%\item Is $R$ is in $\Schmin$, $R$ is marked and the attributes associated to $R$ are not a superset of $A$, output failure. 
\end{itemize}

%\item Repeat until there are no more labelings: For each total tgd $R \rightarrow S$ in $\Cpost$, 
%\begin{itemize}
%\item If both $R$ and $S$ are labelled with different numbers, output failure. 
%\item Otherwise if $R$ is labelled with $n$ but $S$ is not, label $S$ with $n$. 
%\item Otherwise if $S$ is labelled with $n$ but $R$ is not, label $S$ with $n$. 
%\end{itemize}

\item If $\Schmin$ contains a relation $R$ labelled with a number $n$ where, $\Schmin(R) > n$, output failure. Otherwise output $\Schmin$.  
\end{enumerate}

By direct inspection of the algorithm, we can state the following. 
\begin{observation}
\label{obs-gamma}
Let $P = (\Scope,\Cpre,\Cpost,\Q_\safe)$ be a relational procedure and $\Sch$ a relational schema. Then for each 
relation $R$ in $\Schmin$ with attributes $\{a_1,\dots,a_n\}$, every instance $I$ over $\Sch$ and every pair 
$(J,\Sch')$ in the outcome of applying $P$ to $(I)$, we have that 
$\Sch(R)$ is defined, with $\{a_1,\dots,a_n\} \subseteq \Sch(R)$. 
\end{observation}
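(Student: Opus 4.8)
The plan is to establish the observation by direct inspection of the algorithm $A(P,\Sch)$, tracing every attribute that the algorithm places into $\Schmin(R)$ back to a requirement that the dynamic semantics imposes on the outcome schema $\Sch'$ (the second component of the pair $(J,\Sch')$, which is where the claim really lives). I would fix $P$ and $\Sch$, assume the algorithm returns $\Schmin$ rather than \emph{failure} (otherwise there is no outcome and the claim is vacuous), and fix an arbitrary instance $I$ over $\Sch$ together with an arbitrary possible outcome $(J,\Sch')$ of applying $P$ to $I$ under the dynamic semantics. The engine of the proof is the remark that conditions (2)--(4) in the definition of possible outcome are only \emph{well defined} when the objects they mention are compatible with $\Sch'$: for $J \models \Cpost$ one needs $\Cpost$ compatible with $\Sch'$ and each structure constraint $R[\bar s]$ of $\Cpost$ satisfied by $\Sch'$; for $Q_{\Sch\setminus\Scope}(I)=Q_{\Sch\setminus\Scope}(J)$ one needs $Q_{\Sch\setminus\Scope}$ compatible with $\Sch'$; and for $Q(I)\subseteq Q(J)$ one needs each $Q\in\Q_\safe$ compatible with $\Sch'$. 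Since compatibility of a named atom $R(a_1{:}x_1,\dots,a_n{:}x_n)$ with $\Sch'$ is by definition the assertion $\{a_1,\dots,a_n\}\subseteq\Sch'(R)$, each compatibility obligation directly yields a fragment of the desired conclusion.

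Next I would go through the sources that feed $\Schmin$ and match each to one of these obligations. A relation $R$ with $\Sch(R)$ defined but not mentioned in $\Scope$ contributes the pair $(R,\Sch(R))$ and occurs, in $Q_{\Sch\setminus\Scope}$, as an atom carrying all of $\Sch(R)$, so compatibility of $Q_{\Sch\setminus\Scope}$ forces $\Sch(R)\subseteq\Sch'(R)$. The scope contributes, for each relation $R$ it names without a wildcard, exactly the attributes of $R$ that $Q_{\Sch\setminus\Scope}$ retains (those not named by any scope constraint on $R$), and these are again forced into $\Sch'(R)$ by compatibility of $Q_{\Sch\setminus\Scope}$. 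Atoms $R(a_1{:}x_1,\dots)$ appearing in a query of $\Q_\safe$, or in a tgd or egd of $\Cpost$, contribute their attributes, which are forced by compatibility of that query (needed for (4)) or of $\Cpost$ (needed for (2)), respectively. A structure constraint $R[a_1,\dots,a_n]$ in $\Cpost$ contributes $\{a_1,\dots,a_n\}$, and $J\models\Cpost$ entails $\Sch'\models R[a_1,\dots,a_n]$, i.e.\ $\{a_1,\dots,a_n\}\subseteq\Sch'(R)$; a constraint $R[*]$ in $\Cpost$ (handled in step~4) only asserts that $\Sch'(R)$ is defined, matching the fact that such an $R$ is added to $\Schmin$ with no attributes. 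Taking the union exactly as steps~5--6 do, every attribute placed in $\Schmin(R)$ has been shown to lie in $\Sch'(R)$, and $\Sch'(R)$ is defined.

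The one case needing extra care---and the main obstacle---is the total query $R$ of step~3, which sets $\Schmin(R)=\Sch(R)$. Compatibility of a total query $R$ with $\Sch'$ only guarantees that $\Sch'(R)$ is defined, not that it retains the attributes of $\Sch(R)$, and the preservation condition $R(I)\subseteq R(J)$ on unnamed tuples by itself pins down at most the arity of $R$ (when $R^I\neq\emptyset$), not the attribute set. Here I would invoke the governing convention of the dynamic semantics that outcomes only \emph{extend} their inputs---$\schema(J)$ extends $\schema(I)$, so attributes are never dropped---which gives $\Sch(R)\subseteq\Sch'(R)$ directly; this is also where the arity label $k=|\Sch(R)|$ recorded in step~3 and the failure test of step~7 matter, since any other source that would push $|\Schmin(R)|$ above $k$ witnesses that no $\Sch'$ can simultaneously satisfy the total query and the remaining obligations, so the algorithm correctly reports \emph{failure} and the statement is again vacuous. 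Assembling the cases yields the observation.
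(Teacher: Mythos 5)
Your proposal is correct in substance and follows the only route available: the paper's entire ``proof'' of this observation is the phrase \emph{by direct inspection of the algorithm}, so your case-by-case tracing of each pair added to $\Gamma(P,\Sch)$ back to a compatibility obligation on the outcome schema $\Sch'$ is precisely the inspection the paper leaves implicit. Your matching of sources to obligations is the right decomposition and is sound: relations outside $\Scope$ and the residual attributes of scope-named relations are forced into $\Sch'$ by compatibility of $Q_{\Sch\setminus\Scope}$ (condition (3)); named atoms of $\Q_\safe$ and of the tgds/egds in $\Cpost$ are forced by conditions (4) and (2), the latter because the paper's definition of constraint satisfaction explicitly includes compatibility; and structure constraints in $\Cpost$ act through $J\models\Cpost$, with $R[*]$ correctly contributing only definedness.

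The one step that deserves scrutiny is the total-query case, which you rightly isolate: compatibility of a total query plus preservation of its unnamed tuples pins down at most the arity of $R$ (and only when $R^I\neq\emptyset$), not the attribute set. The ``governing convention'' you then invoke --- that $\schema(J)$ extends $\schema(I)$ for every dynamic outcome --- is \emph{not} part of the paper's formal definition: the dynamic semantics imposes only conditions (1)--(4), and schema extension, though defined immediately before that definition, is never required of outcomes. Under the literal definitions an outcome could rename an attribute of a scope relation (say replace $\Sch(R)=\{A,B\}$ by $\Sch'(R)=\{A,C\}$) while preserving all unnamed tuples of the total query $R$, and then the attributes contributed by step 3 of the algorithm need not appear in $\Sch'(R)$ --- i.e., the observation itself fails for that case. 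So you should present the extension property as an explicit hypothesis (or as the evidently intended reading of the dynamic semantics) rather than as a convention already in force. To be clear, this is a defect of the paper's formalization rather than of your argument: the paper's own correctness lemma for the algorithm (the claim that every outcome schema extends $\Sch_\text{min}$, on which soundness of the iterated applicability check rests) needs exactly the same assumption, and your proof is the first place where this dependence is made visible.
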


%Furthermore, from the definition of total queries and the semantics of procedures it is not difficult to show that. 
%\begin{observation}
%\label{obs-label}
%Let $P = (\Scope,\Cpre,\Cpost,\Q_\safe)$ be a relational procedure and $\Sch$ a relational schema. 
%For each relation $R$ in $\Schmin$ that is labeled with a number $k$, 
%every instance $I$ over $\Sch$ and every instance  
%$J$ in the outcome set of applying $P$ to $I$, we have that $|\Sch(R)| = k$. 
%\end{observation}

Furthermore, the following lemma specifies, in a sense, the correctness of the algorithm. 
\begin{lemma}
\label{lem-schmin}
Let $P = (\Scope,\Cpre,\Cpost,\Q_\safe)$ be a relational procedure and $\Sch$ a relational schema. Then: 
\begin{itemize}
\item[i)] If $A(P,\Sch)$ outputs failure, either $P$ cannot be applied over any instance $I$ over $\Sch$, or 
for each instance $I$ over $\Sch$ the set  $\outcome_P(I)$ is empty. 
\item[ii)] If $A(P,\Sch)$ outputs $\Schmin$, then the schema of any instance in $\outcome_P(I)$ extends $\Schmin$.
\end{itemize}
\end{lemma}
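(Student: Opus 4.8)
The plan is to prove the two items separately, exploiting the fact that item~(ii) is essentially a restatement of Observation~\ref{obs-gamma} and so demands little beyond unwinding definitions, while item~(i) --- the analysis of the two places where $A(P,\Sch)$ reports failure --- carries the real content. For item~(ii) I would reason directly from Observation~\ref{obs-gamma}: fix any instance $I$ over $\Sch$ and any outcome $I'$ of applying $P$ to $I$, say over schema $\Sch'$. The observation gives, for every relation $R$ occurring in $\Schmin$, that $\Sch'(R)$ is defined and $\Schmin(R) \subseteq \Sch'(R)$, which is exactly the statement that $\Sch'$ extends $\Schmin$. The only thing to verify is that the observation accounts for \emph{every} source that places attributes into $\Schmin$ --- the total-query entries of Step~3, the bare relation names of Step~4, and each of the five kinds of pairs gathered into $\Gamma(P,\Sch)$ in Step~5 --- each of which corresponds to a compatibility demand forced on $\Sch'$ by $I' \models \Cpost$, by $Q_{\Sch \setminus \Scope}(I) = Q_{\Sch \setminus \Scope}(I')$, or by $Q(I) \subseteq Q(I')$ for $Q \in \Q_\safe$.

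For item~(i) I would split on where failure is reported. If it is reported in Step~1, then $\Sch$ either violates a structure constraint of $\Cpre$ or is incompatible with some query of $\Q_\safe$ or with $Q_{\Sch \setminus \Scope}$. Under the hypothesis of Proposition~\ref{ref-rep-decidable} the precondition $\Cpre$ consists only of structure constraints, so whether $I \models \Cpre$ depends solely on $\schema(I) = \Sch$ and not on the tuples of $I$; likewise, compatibility of the queries is a property of $\Sch$ alone. Hence the applicability test fails uniformly for $\Sch$, and by the definition of applicability no instance $I$ over $\Sch$ makes $P$ applicable, giving the first disjunct.

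If instead failure is reported in Step~7, there is a relation $R$ carrying the label $n = |\Sch(R)|$ assigned in Step~3 --- so $R$ occurs as a total query in $\Q_\safe$ --- yet $|\Schmin(R)| > n$. The surplus attributes can only have entered $\Schmin(R)$ through $\Cpost$, since the other sources (the out-of-scope projection, the scope residue, and the named atoms of $\Q_\safe$, all compatible with $\Sch$ by Step~1) never exceed $\Sch(R)$. Consequently, by $I' \models \Cpost$ together with Observation~\ref{obs-gamma}, every outcome must assign strictly more than $n$ attributes to $R$. On the other hand, preserving the total query $R$ requires $Q(I) \subseteq Q(I')$, and since a total query returns the unnamed tuples of $R$, whose arity equals $|\schema(I')(R)| > n$ while the tuples of $R^I$ have arity $n$, this preservation pins the arity of $R$ to $n$. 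The two requirements clash, so no $I'$ can simultaneously satisfy $\Cpost$ and the preservation guarantee for $R$; therefore $\outcome_P(I) = \emptyset$ for every $I$ over $\Sch$, giving the second disjunct.

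The step I expect to be the main obstacle is exactly this last incompatibility argument: making precise that the total-query preservation guarantee genuinely forbids enlarging the arity of $R$, and in particular dealing with the degenerate case $R^I = \emptyset$, where $Q(I) = \emptyset$ and the subset condition is trivially met. The cleanest route is to read a total query in $\Q_\safe$ as a commitment that $R$ is preserved verbatim, arity included, so that the schema-level demand imposed by $\Cpost$ and the one imposed by the total query clash at the level of $\Sch'$ itself, independently of the contents of $R^I$.
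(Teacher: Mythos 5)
Your proposal follows the paper's proof essentially step for step: item (ii) is read off Observation \ref{obs-gamma}, and item (i) splits on where the algorithm fails — Step 1 failure means applicability fails at the level of $\Sch$ itself, hence for every instance over $\Sch$, while Step 7 failure means the $\Cpost$-forced enlargement of $R$ clashes with preservation of the total query $R$, so no outcome can exist. The one point of divergence is the degenerate case $R^I = \emptyset$ that you flag as the main obstacle: the paper's proof does not address it at all — it argues exactly as you initially do, that the arity of tuples in the answer to the total query $R$ differs between $I$ and any possible outcome, an argument that is vacuous when $R^I$ (and hence $Q(I)$) is empty, since $\emptyset \subseteq Q(I')$ holds trivially. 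So your concern is well founded and your write-up is, if anything, more honest than the paper's. Be aware, however, that your proposed patch — reading a total query in $\Q_\safe$ as a commitment that $R$ is preserved \emph{verbatim, arity included} — is a strengthening of the stated semantics $Q(I) \subseteq Q(I')$ rather than something derivable from it; it amounts to amending the definition of preservation (or weakening the lemma to instances with nonempty $R^I$), not to closing the gap inside the paper's own framework.
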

\begin{proof}
For i), if some of the components of $P$ are not compatible with $\Sch$, or $\Sch$ does not satisfy the constraints in 
$\Cpre$, then clearly $P$ cannot be applied over any instance $I$ over $\Sch$. 
Assume then that $\Sch$ satisfies all compatibilities and preconditions in $P$, but $A(P,\Sch)$ outputs failure. 
Then $\Schmin$ contains a relation $R$ such that $|\Schmin(R)| = m$, but $R$ is labelled with number $k$, for 
$k < \ell$. From the algorithm, we this implies that $|\Schmin(R)| > |\Sch(R)$, but that there is a query 
$R$ in $\Q_\safe$. Clearly, $\Q_\safe$ cannot be preserved under any outcome, since by Observation \ref{obs-gamma} 
we require the schemas of outcomes to assign more attributes to $R$ than those assigned by $\Schmin$, and thus 
the cardinality of tuples in the answer of $R$ differs between $I$ and its possible outcomes. 
Finally, item ii) is a direct consequence of Observation \ref{obs-gamma}. 
\end{proof}

The algorithm $(A,P)$ runs in polynomial time, and that the total size of $\Schmin$ (measured as the number of relations and attributes) 
is at most the size of $\Sch$ and $P$ combined. Thus, to decide the applicability problem for a sequence $P_1,\dots,P_n$ of procedures, all we need 
to do is to perform subsequent calls to the algorithm, setting $\Sch_0 = \Sch$ and then using $\Sch_{i} =  A(P_i,\Sch_{i-1})$ as the input for the 
next procedures. If $A(P_n,\Sch_{n-1})$ outputs a schema, then the answer to the applicability problem is affirmative, otherwise if some call to 
$A(P_i,\Sch{i-1})$ outputs failure, the 
answer is negative. 

\subsection{Proof of proposition \ref{prop-rep-undec-nothingworks}}

The reduction, just as that of Proposition \ref{prop-ap-und}, is by reduction from 
the embedding problem for finite semigroups, and builds up from this proposition. 
Let us start by defining the procedures $\Pro_1$, $\Pro_2$ and $\Pro_3$. 

\medskip
\noindent
For procedure $\Pro_1$ we first build a set $\Gamma_1$ of tgs. This set is similar to the set 
$\Sigma$ used in Proposition \ref{prop-ap-und}, but using three additional \emph{dummy} 
relations $G^d$, $E^d$ and $G^\text{binary}$. 

First we add to $\Gamma_1$ dependencies that collect elements of $G$ into $D$, and that initialize 
$E$ as a reflexive relation. 

\begin{eqnarray*}
G(x,u,v) & \rightarrow & D(x) \\
G(u,x,v) & \rightarrow & D(x) \\
G(u,v,x) & \rightarrow & D(x) \\ 
D(x) & \rightarrow & E(x,x) 
\end{eqnarray*}

Next the dependency that states that $F$ contains everything in $R$ if some conditions about $E$ occur. 

\begin{eqnarray}
E(x,y) \wedge C(u,x) \wedge C(v,y) \wedge N(u,v) \wedge R(w) & \rightarrow & F(w)
\end{eqnarray}

The dependencies that assured that $E$ was an equivalence relation where acyclic, so we 
replace the right hand side with a dummy relation.  

\begin{eqnarray*}
E(x,y) & \rightarrow & E^d(y,x) \\
E(x,y) \wedge E(y,z) & \rightarrow & E^d(x,z)
\end{eqnarray*}

Next come the dependencies assuring $G$ is a total and associative function, using also dummy relations. 

\begin{eqnarray*}
D(x) \wedge D(y) & \rightarrow & G^\text{binary}(x,y) \\
G(x,y,u) \wedge G(u,z,v) \wedge G(y,z,w) & \rightarrow & G^d(x,w,v)
\end{eqnarray*}

Finally, the dependencies that were supposed to ensure that $E$ worked as the equality over 
function $G$, using again the dummy relations.  

\begin{eqnarray*}
G(x,y,z) \wedge E(x,x') \wedge E(y,y') \wedge E(z,z') & \rightarrow & G^d(x',y',z') \\
G(x,y,z) \wedge G(x',y',z') \wedge E(x,x') \wedge E(y,y') & \rightarrow & E^d(z,z')
\end{eqnarray*}

\noindent
We can now define procedure \textbf{$P_1$}: 

\noindent
$\Scope$: The scope of $P_1$ consists of relations $G$, $E$, $D$, $F$, $G^d$, $E^d$ and $G^\text{binary}$ which corresponds to the constraints 
$G[*], E[*], D[*], F[*], E^d[*], G^d[*]$ and $G^\text{binary}[*]$. 

\noindent
$\Cpre$: There are no preconditions for this procedure. 

\noindent
$\Cpost$: The postconditions are the tgds in $\Gamma_1$. 

\noindent
$\Q_\safe$: This query ensures that no information is deleted from all of $G$, $E$, $F$, $G^d$, $E^d$ and $G^\text{binary}$: 
$G(x,y,z) \wedge E(u,v) \wedge D(w) \wedge F(p) 
\wedge G^d(x',y',z') \wedge E^d(u',v') \wedge G^\text{binary}(a,b)$.

Note that, even though relations $G$ and $E$ are not mentioned in the right hand side of any tgd in $\Gamma_1$, 
they are part of the scope and thus they could be modified by the procedures $\Pro_1$. 

\medskip
\noindent
The procedure $\Pro_2$ has no scope, no safety queries, no precondition, and the only postcondition is the presence 
of a third attribute, say $C$, in $G^\text{binary}$, by using a structural constraint $G^\text{binary}[A,B,C]$ (to maintain consistency with our unnamed 
perspective, we assume that these three attributes are ordered $A <_\mathcal{A} B <_\mathcal{A} C$). 

\medskip
\noindent
To define the final procedure, consider the following set of tgds $\Gamma_3$. 

\begin{eqnarray*}
E^d(x,y) & \rightarrow & E(x,y) \\ 
G^d(x,y,z) & \rightarrow & G(x,y,z) \\  
G^\text{binary}(x,y,z) & \rightarrow & G(x,y,z) \\ 
F(x) & \rightarrow & F^\text{check}(x) 
\end{eqnarray*}

\noindent
Then we define procedure $\Pro_3$ is as follows. 

\noindent
$\Scope$: The scope of $\Pro_3$ is again empty.  

\noindent
$\Cpre$: There are no preconditions for this procedure. 

\noindent
$\Cpost$: The postconditions are the tgds in $\Gamma_3$. 

\noindent
$\Q_\safe$: There are also no safety queries for this procedure. 

Let $\Sch$ be the schema containing relations 
$G$, $E$, $D$, $F$, $F^\text{check}$, $G^d$, $E^d$ and $G^\text{binary}$ and $R$. The attribute names are of no importance 
for this proof, except for $G^\text{binary}$, which associates attributes $A$ and $B$. 

Given a finite semigroup $\textbf A$, we construct now the following instance $I_{\textbf A}$: 
\begin{itemize}
\item $E^{I_\textbf{A}}$ contains the pair $(a_i,a_i)$ for each $1 \leq i \leq n$ (that is, for each element of $A$); 
\item $G^{I_\textbf{A}}$ contains the triple $(a_i,a_j,a_k)$ for each $a_i,a_j,a_k \in A$ such that $g(a_i,a_j) = a_k$;  
\item All of $D^{I_\textbf{A}}$, $F^{I_\textbf{A}}$ and ${F^\text{check}}^{I_\textbf{A}}$ are empty; 
\item $R^{I_\textbf{A}}$ has a single element $d$ not used elsewhere in $I_\textbf{A}$
\item $C^{I_\textbf{A}}$ contains the pair $(i,a_i)$ for each $1 \leq i \leq n$; and 
\item $N^{I_\textbf{A}}$ contains the pair $(i,j)$ for each $i \neq j$, $1 \leq i \leq n$ and $1 \leq j \leq n$. 
\end{itemize}

Let us now show $\textbf{A} = (A,g)$ is embeddable in a finite semigroup if and only if $\outcome_{P_1,P_2,P_3}(I)$ is nonempty. 

\medskip
\noindent
($\Longrightarrow$) Assume that $\textbf{A} = (A,g)$ is embeddable in a finite semigroup, say the semigroup 
$\textbf B = (B,f)$, where $f$ is total. Let $J$ be the instance over $\Sch$ such that 
both ${E^d}^J$ and $E^J$ are the identity over $B$, $D^J = B$, both ${G^d}^J$ and  $G^J$ contains a pair $(b_1,b_2,b_3)$ if and only if 
$f(b_1,b_2) = b_3$; ${G^\text{binary}}^J$ is the projection of $G^J$ over its two first attirbutes, 
$F^J$ and ${F^\text{check}}^J$ are empty and relations $N$, $C$ and $R$ are interpreted as in $I_\textbf{A}$. 

It is easy to see that $J$ is in the outcome of applying $\Pro_1$ over $I$. 
Now, let $\Sch'$ be the extension of $\Sch$ where $G^\text{binary}$ has an extra attribute, $C$, 
and $K$ is an instance over $\Sch'$ that is just like $J$ except that ${G^\text{binary}}^K$ is now 
the same as $G^J$ (and therefore $G^K$). 
By definition we obtain that $K$ is a possible outcome of applying $\Pro_2$ over $J$, and therefore 
$K$ is in $\outcome_{\Pro_1,\Pro_2}(I)$. 
Furthermore, one can see that the same instance $K$ is again an outcome of applying $\Pro_3$ over 
$K$, therefore obtaining that $\outcome_{\Pro_1,\Pro_2,\Pro_3}(I)$ is nonempty. 

\medskip
\noindent
($\Longleftarrow$) Assume now that there is an instance $L \in \outcome_{P_1,P_2,P_3}(I)$. Then by definition there are instances $J$ and $K$ 
such that $J$ is in $\outcome_{\Pro_1}(I)$, $K$ is in $\outcome_{\Pro_2}(J)$ and 
$L$ is in $\outcome_{\Pro_3}(K)$. 

Let $J^*$ be the restriction of $J$ over the schema $\Sch$. From a simple inspection of $\Pro_1$ we have 
that $J^*$ satisfies as well the dependencies in $\Pro_1$, so that 
$J^*$ is in $\outcome_{\Pro_1}(I)$.

Let now $\Sch'$ be the extension of $\Sch$ that assigns also attribute $C$ to $G^\text{binary}$. Now, since $K$ is an outcome of $P_2$ over $J$ 
and $P_2$ has no scope, if we define $K^*$ as the restriction of $K$ over $\Sch'$, then 
clearly $K^*$ must be in the outcome of applying $\Pro_2$ over $J^*$. 
Note that, by definition of $\Pro_3$ (since its scope is empty), the restriction of $L$ up to the schema of $K$ must be the same instance as $K$, 
and therefore the restriction $L^*$ of $L$ to $\Sch'$ must be the same instance than $K^*$. Furthermore, since $L$ (and thus $L^*$) satisfies the 
constraints in $\Pro_3$, and the constraints only mention relations and atoms in $\Sch'$, 
we have that $K^*$ must be an outcome of applying $\Pro_3$ over $(K^*,\Sch')$. 

We now claim that $K^*$ satisfy all tgds (1)-(11) in the proof of Propositon \ref{prop-ap-und}. 
Tgds (1-3) and (6) are immediate from the scopes of procedures, and the satisfaction for all the remaining ones 
is shown in the same way. For example, to see that $K^*$ satisfies $E(x,y) \rightarrow E(y,x)$, note that 
$J^*$ already satisfies $E(x,y) \rightarrow E^d(y,x)$. From the fact that the interpretations of $E^d$ and $E$ 
are the same over $J^*$ and $K^*$ and that $K^*$ satisfies $E^d(x,y) \rightarrow E(x,y)$ we 
obtain the desired result. 

Finally, since $K^*$ satisfies $F(x)\ \rightarrow\ F^\text{check}(x)$, and the interpretation of 
$F^\text{check}$ over all of $I$, $J^*$ and $K^*$ must be empty, we have that 
the interpretation of $F$ over $K^*$ is empty as well. Given that $K^*$ satisfies all dependencies in $\Sigma$, 
it must be the case that the left hand side of the tgd (11) is not true $K^*$, for any possible assignment. 
By using the same argument 
as in the proof of Proposition \ref{prop-ap-und} we obtain that $\textbf{A} = (A,g)$ is embeddable in a finite semigroup. 

\subsection{Proof of proposition \ref{prop-rep-safe-scope-dyn}}

Follows from Proposition \ref{prop-minimal} and Proposition \ref{ref-rep-decidable}. We need to check first whether each procedure in the 
sequence is applicable. Once we do that, from Proposition \ref{prop-minimal} we know that the resulting outcome is non-empty. 

\subsection{Proof of proposition \ref{prop-minimal}}

For the proof we assume that all procedures does not use preconditions. We can treat them  by first doing an initial check on compatibility that only complicates the proof. 

We also specify an alternative set of representatives for conditional instances (which is actually the usual one). The 
set $\hat \rep(G)$ of representatives of a conditional instance $G$ is simply 
$\hat \rep(G)= \{I \mid$ there is a substitution $\nu$ such that $\nu(T) \subseteq I \}$. That is,  $\hat \rep (G)$ only 
specifies instances over the same schema as $G$. 
The following lemma allows us to work with this representation instead; it is immediate from the definition of safe scope procedures. 

\begin{lemma}
\label{lem-usual-rep}
If $G$ is a conditional instance, then (1) $\hat \rep(G) \subseteq \rep(G)$, and (2) an instance $J$ is minimal for $\rep(G)$ if and only if it is minimal 
for $\hat \rep(G)$. 
\end{lemma}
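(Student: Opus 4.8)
The plan is to reduce everything to the observation that the relation ``extends'' is a partial order on instances (reflexive, transitive, and antisymmetric), and that once the schema is held fixed it collapses to ordinary relation-wise tuple containment. First I would verify these order-theoretic facts directly from the definition of ``extends'': reflexivity is immediate, since every instance extends itself; for transitivity, if $J$ extends $K$ and $K$ extends $M$ then $\schema(J)$ contains $\schema(K)$ which contains $\schema(M)$, and any tuple of $M$ is witnessed first inside $K$ and then inside $J$ on the attributes of $\schema(M)$; for antisymmetry, $J$ extends $K$ together with $K$ extends $J$ forces identical schemas, after which the two tuple conditions become mutual containment and hence $J = K$. I would also record the special case that when $J$ and $K$ share the schema of $G$, ``$J$ extends $K$'' is exactly ``$K \subseteq J$''.

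For part (1), I would take any $I \in \hat\rep(G)$, so that $\nu(G) \subseteq I$ for some substitution $\nu$, with $I$ over the schema of $G$. Since $\nu(G)$ and $I$ share that schema, containment is precisely ``extends'', so $I$ extends $\nu(G)$ and therefore $I \in \rep(G)$.

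For part (2), the key structural fact I would establish is that every minimal instance of $\rep(G)$ already lies in $\hat\rep(G)$ and is over the schema of $G$. Indeed, if $J \in \rep(G)$ then $J$ extends $\nu(G)$ for some substitution $\nu$; but $\nu(G)$ is itself in $\rep(G)$ (it extends itself) and is over the schema of $G$, so minimality of $J$ forces $J = \nu(G)$, placing $J$ in $\hat\rep(G)$. With this in hand, the ($\Rightarrow$) direction is direct: any witness $K \neq J$ to non-minimality of $J$ in $\hat\rep(G)$ also belongs to $\rep(G)$ by part (1), contradicting minimality of $J$ in $\rep(G)$. For ($\Leftarrow$), I would suppose $J$ minimal in $\hat\rep(G)$ and assume, toward a contradiction, that some $K \in \rep(G)$ with $K \neq J$ satisfies $J$ extends $K$. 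Writing $K$ extends $\mu(G)$ and invoking transitivity gives $J$ extends $\mu(G)$, where $\mu(G) \in \hat\rep(G)$; minimality of $J$ in $\hat\rep(G)$ then forces $\mu(G) = J$, whence $K$ extends $J$ and $J$ extends $K$, and antisymmetry yields $J = K$, the desired contradiction.

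I expect no deep obstacle, which matches the paper's remark that the statement is immediate from the definitions; the only place requiring care is bookkeeping around the schema-growth clause of ``extends''. The one genuinely load-bearing point is the structural claim that a minimal instance of $\rep(G)$ can have neither a strictly larger schema nor extra tuples beyond some $\nu(G)$ --- this is exactly what eliminates the schema-enlarging instances that $\rep$ admits but $\hat\rep$ does not, and so it is where the equivalence of the two notions of minimality really comes from.
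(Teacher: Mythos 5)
Your proof is correct, and it takes the only natural route: the paper itself offers no argument for this lemma (it is dismissed as "immediate from the definition"), and your verification — that \emph{extends} is a partial order collapsing to tuple containment over a fixed schema, plus the observation that any minimal element of $\rep(G)$ must equal some $\nu(G)$ and hence lie in $\hat\rep(G)$ — is precisely the routine check being waved off. The only load-bearing step, as you note, is that minimality in $\rep(G)$ rules out schema-enlarging instances, and your handling of it (via $\nu(G)\in\rep(G)$ and antisymmetry) is sound.
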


Moreover, from the fact that procedures with safe scope are acyclic, we can state Theorem 5.1 in \cite{APR13} in the following terms:  

\begin{lemma}[\cite{APR13}]
\label{lem-chase}
Given a set $\Sigma$ of tgds and a positive conditional instance $G$, one can construct, in exponential time, a positive conditional instance 
$G'$ such that (1) $\hat \rep(G') \subseteq \hat \rep(G)$ and (2) all minimal models of $\hat \rep(G')$ satisfy $\Sigma$. 
\end{lemma}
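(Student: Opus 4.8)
The plan is to reduce the lemma to the correctness of the \emph{conditional chase} of \cite{APR13}, specialised to our acyclic setting. First I would recall the single chase step on a positive conditional instance $G$: given a tgd $\sigma: \phi(\bar x,\bar y)\rightarrow \exists \bar z\,\psi(\bar x,\bar z)$ in $\Sigma$ and a homomorphism $h$ from the atoms of $\phi$ into the tuples of $G$, let $\theta$ be the conjunction of the element-conditions $\rho^G_R(t)$ of all the matched tuples $t$. If the conclusion $\psi(h(\bar x),\bar z)$, with fresh nulls substituted for $\bar z$, is not already present in $G$ under a condition implied by $\theta$, the step appends these conclusion tuples, each annotated with $\theta$. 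Since $G$ is positive and $\theta$ is a conjunction of equalities (no inequality is ever introduced), positivity is preserved throughout, so the final $G'$ is again positive.

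Next I would establish termination together with the exponential bound, where acyclicity of $\Sigma$ --- which, as noted just before the statement, is exactly what the safe-scope restriction guarantees --- is essential. Processing relations in a topological order of the dependency graph of $\Sigma$, any chase step that targets a relation $R$ appearing on the right-hand side of some tgd can match, on the corresponding left-hand side, only relations that strictly precede $R$; by acyclicity $R$ never feeds back into its own premises. Arguing by induction along this order, the number of fresh nulls and tuples created for $R$ is bounded by a polynomial in the sizes of the already-saturated predecessor relations, which yields a bound on $|G'|$ and on the running time that is exponential in $|\Sigma|$ and $|G|$. This is the point where acyclicity, rather than mere weak acyclicity, keeps the construction clean.

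For property (1), $\hat \rep(G')\subseteq\hat \rep(G)$, I would observe that a chase step never deletes a tuple and never weakens the condition of an existing one; it only appends new conditional tuples. Hence the original part of $G$ survives inside $G'$, and for every valuation $\nu$ of $\nulls(G')$ we have $\nu(G)\subseteq\nu(G')$. Consequently, if an instance $I$ satisfies $\nu(G')\subseteq I$, then also $\nu'(G)\subseteq I$ for the restriction $\nu'$ of $\nu$ to $\nulls(G)$, so $I\in\hat \rep(G)$, giving the desired inclusion.

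The core of the argument, and the main obstacle, is property (2): every minimal model of $\hat \rep(G')$ satisfies $\Sigma$. These minimal models are precisely the instances $\nu(G')$ obtained from valuations $\nu$ consistent with the element-conditions, and I would argue by contradiction. If some tgd $\sigma$ were violated in such a $\nu(G')$, there would be a homomorphism $g$ witnessing the premise of $\sigma$ inside $\nu(G')$ whose conclusion is missing; each premise atom used by $g$ comes from a tuple of $G'$ whose condition $\nu$ satisfies, so $g$ lifts to a conditional match $h$ of $\phi$ into $G'$ with conjoined condition $\theta$ that $\nu$ satisfies. Since the chase has terminated, this match must already have fired, so the conclusion tuples were added under condition $\theta$ and therefore appear in $\nu(G')$ --- contradicting the supposed violation. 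The delicate step is exactly this correspondence between concrete homomorphisms into $\nu(G')$ and conditional matches into $G'$, together with the bookkeeping showing that $\theta$ is satisfied by $\nu$; this is the content of the soundness-and-completeness argument for the conditional chase in \cite{APR13}, which I would invoke for the general mechanics and specialise using the acyclic termination established above.
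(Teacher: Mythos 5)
You should first note what the paper itself does here: it gives no proof at all. The lemma is imported verbatim as a restatement of Theorem~5.1 of \cite{APR13}, justified only by the one-line observation that safe-scope postconditions form acyclic sets of tgds. Your reconstruction is therefore more detailed than anything in the paper, and most of it is sound: positivity preservation, termination with an exponential bound via a topological order on the dependency graph (acyclicity, as you note, is implicitly assumed from the surrounding context even though the statement says only ``a set of tgds''), and property (1) by monotonicity of chase steps are all fine.

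There is, however, a genuine gap in your definition of the chase step, and it sits exactly at the crux of property (2). You fire a tgd only on literal homomorphisms from the premise into the tuples of the conditional instance, with $\theta$ being just the conjunction of the matched tuples' element-conditions. This misses matches that exist only under valuations identifying distinct null symbols (or a null with a constant). Concretely, take $\Sigma = \{R(x) \wedge S(x) \rightarrow T(x)\}$ and let $G$ contain $R(n_1)$ and $S(n_2)$ with trivially true conditions, where $n_1$ and $n_2$ are different null symbols. No homomorphism maps $x$ simultaneously into both tuples, so your chase terminates with $G' = G$; yet the valuation $\nu$ with $\nu(n_1) = \nu(n_2) = a$ yields the minimal model $\{R(a), S(a)\}$ of $\hat \rep(G')$, which violates the tgd --- property (2) fails. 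The conditional chase of \cite{APR13} must also fire on \emph{unifying} matches, adding $T(n_1)$ annotated with the condition $n_1 = n_2$ conjoined to the matched tuples' conditions (positivity survives, since only equalities are introduced), and it is these unification equalities that your $\theta$ omits. Your lifting argument in the final paragraph silently assumes they are present: the ``conditional match $h$'' you extract from a homomorphism $g$ into $\nu(G')$ need not be a homomorphism into $G'$ in your sense, precisely because $\nu$ may collapse distinct nulls. Deferring to \cite{APR13} ``for the general mechanics'' thus papers over the one step your own chase-step definition gets wrong; a minor additional slip is the claim that the minimal models of $\hat \rep(G')$ are \emph{precisely} the instances $\nu(G')$ --- they are only among them, though this does not affect the contradiction argument.
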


Moreover, by slightly adapting the proof of Proposition 4.6 in \cite{APR13}, we can see that the conditional instance constructed above has even better 
properties. In order to prove this theorem all that one needs to do is to adapt the notion of solutions for data exchange into a scenario where 
the target instance may already have some tuples (which will not fire any dependencies because of the safeness of procedures). 

\begin{lemma}[\cite{APR13}]
\label{lem-minimal-chase}
Let $P = \Scope,\Cpre,\Cpost,\Q_\safe$ be a procedure with safe scope, and let $G$ be a positive conditional instance. 
Then one can construct (in exponential time) a positive conditional instance $G'$ such that, for every minimal instance $I$ of $\hat \rep(G)$, the 
set $\hat \rep(G')$ contains all minimal instances in $\outcome_{\Pro}(I)$, and for every minmal instance $J$ in $\hat \rep(G')$ 
there is a minimal instance $I$ of $\rep(G)$ such that $J$ is minimal for $\outcome_{\Pro}(I)$.
\end{lemma}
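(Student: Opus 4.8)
The plan is to let $G'$ be exactly the positive conditional instance obtained by chasing $G$ with the tgd set $\Sigma := \Cpost$, i.e.\ the instance produced by Lemma~\ref{lem-chase}. That lemma supplies, at no extra cost, everything routine: $G'$ is again a positive conditional instance, it is computed in exponential time (acyclicity of $\Sigma$ bounds the number of chase rounds by the length of the longest path in the dependency graph, and each round enlarges the table only polynomially), $\hat\rep(G')\subseteq\hat\rep(G)$, and every minimal model of $\hat\rep(G')$ satisfies $\Sigma$. What remains is to upgrade this to the exact two-way minimality correspondence demanded by the statement, and this is precisely where the adaptation of Proposition~4.6 of \cite{APR13} is needed.

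First I would record the semantic shape of a single safe-scope application. Since $P$ has safe scope, $\Cpre$ is empty, $\Scope$ is exactly the set of right-hand-side relations of $\Sigma$, and $\Q_\safe$ binds precisely those scope relations; unwinding the definition of possible outcome therefore gives, for any compatible $I$,
\[
\outcome_P(I)=\{\,I' \mid I\subseteq I',\ I'\models\Sigma,\ R^{I'}=R^I \text{ for every } R\notin\Scope\,\}.
\]
This is a data-exchange style solution set in which the non-scope relations are frozen as an immutable source while the scope relations form a target that may already carry the tuples of $I$. The one place the safeness hypothesis is used is that freezing the non-scope relations comes for free: no tgd of $\Sigma$ has a non-scope relation on its right-hand side, so the chase never adds a tuple outside $\Scope$, and the pre-existing scope tuples of $I$ are simply carried along (they may match premises, but this only reflects solutions $I$ already satisfies). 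Consequently the minimal members of $\outcome_P(I)$ are captured by the valuations of the canonical universal solution $\text{chase}_\Sigma(I)$.

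The heart of the argument is a commutation lemma between the conditional chase and valuations, lifted from \cite{APR13}: for a valuation $\nu$ on the nulls of $G$ there is an induced valuation $\nu'$ on the nulls of $G'$ (the identity on the old nulls, fresh on the chase nulls) with $\nu'(G')$ matching $\text{chase}_\Sigma(\nu(G))$, and conversely every valuation of $G'$ restricts to one of $G$. Granting this, the two bullets follow. For the forward direction, take a minimal $I\in\hat\rep(G)$, write $I=\nu(G)$, and let $J$ be a minimal instance of $\outcome_P(I)$; by the characterization above $J$ is a valuation image of $\text{chase}_\Sigma(I)=\nu'(G')$, hence $J\in\hat\rep(G')$. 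For the converse, take a minimal $J\in\hat\rep(G')$ and write $J=\mu(G')$ for a valuation $\mu$ collapsing no nulls unnecessarily; set $\nu=\mu|_{\mathrm{nulls}(G)}$ and $I=\nu(G)$. Since the nulls of $G$ sit inside those of $G'$, $\nu$ also collapses nothing unnecessarily, so $I$ is minimal in $\hat\rep(G)$ and hence in $\rep(G)$ by Lemma~\ref{lem-usual-rep}; and $J=\text{chase}_\Sigma(I)$ is then a minimal outcome of $P$ on $I$.

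The main obstacle I expect is the bookkeeping inside the commutation lemma when $\nu$ identifies two distinct nulls of $G$: such a collapse can make a tgd that fires during the construction of $G'$ already satisfied in $\nu(G)$, so that $\text{chase}_\Sigma(\nu(G))$ is a proper quotient of $\nu'(G')$ rather than an equality. Showing that these collapses only shrink instances in ways compatible with minimality, and that positivity of $G$ (absence of inequalities) keeps all carried conditions satisfiable, is exactly the content one adapts from Proposition~4.6 of \cite{APR13}; it is what makes the minimal-instance correspondence exact rather than the mere over-approximation already provided by Lemma~\ref{lem-chase}.
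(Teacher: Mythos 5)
Your proposal is correct and follows essentially the same route as the paper, whose own justification is precisely the remark that one takes the conditional-chase construction of Lemma~\ref{lem-chase} and adapts Proposition~4.6 of \cite{APR13} to a data-exchange setting in which the non-scope relations form a frozen source and the scope relations form a target that may already contain tuples. Your unwinding of $\outcome_P(I)$ as that solution set, the chase/valuation commutation argument, and your identification of null collapses as the point where the conditions attached by the conditional chase do the work all match the paper's intended adaptation, which it leaves at exactly this level of detail.
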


Finally, we can show the key result for this proof. 
\begin{lemma}
\label{lem-seq-safe}
Let $\Inst$ be a set of instances, and $G$ a positive conditional table that is minimal for $\Inst$, and 
$\Pro  = (\Scope,\Cpre,\Cpost,\Q_\safe)$ a procedure with safe scope.  
Then either $\outcome_{\Pro}(\Inst) = \emptyset$ or one can construct, in exponential time, a  
positive conditional instance $G'$ 
%with global condition  
such that 
\begin{itemize}
\item[i)] $\outcome_{\Pro}(\Inst) \subseteq \rep(G')$; and 
\item[ii)] If $J$ is a minimal instance in $\rep(G')$, then $J$ is also minimal in 
$\outcome_{\Pro}(\Inst) $. 
\end{itemize}
\end{lemma}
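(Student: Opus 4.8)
The plan is to reduce everything to the single-instance chase of Lemma~\ref{lem-minimal-chase} and then lift it to the union $\outcome_\Pro(\Inst)=\bigcup_{I\in\Inst}\outcome_\Pro(I)$. I read the hypothesis ``$G$ is minimal for $\Inst$'' as the two clauses that mirror the conclusion: $\Inst\subseteq\rep(G)$, and $\rep(G)$ and $\Inst$ have the same set of minimal instances. By Lemma~\ref{lem-usual-rep} these clauses transfer to $\hat\rep(G)$, so I may work throughout with $\hat\rep$ and recover statements about $\rep$ at the end through minimality. If $\Inst=\emptyset$ we are in the empty case; otherwise, since $\Pro$ has safe scope and hence no preconditions, each $\outcome_\Pro(I)$ is nonempty and so is $\outcome_\Pro(\Inst)$. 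I then define $G'$ to be exactly the conditional instance produced by applying Lemma~\ref{lem-minimal-chase} to $G$ and $\Pro$; this already yields the exponential-time bound and tells me that (a)~for each minimal $I_0$ of $\hat\rep(G)$ the set $\hat\rep(G')$ contains every minimal instance of $\outcome_\Pro(I_0)$, and (b)~each minimal $J$ of $\hat\rep(G')$ is minimal for $\outcome_\Pro(I_0)$ for some minimal $I_0$ of $\rep(G)$.

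The one ingredient not supplied by the cited lemmas is a monotonicity property of safe-scope outcomes, which I would isolate first. By Proposition~\ref{prop-outcome-fulltgds-justone}, $\outcome_\Pro(I)=\{J\mid I\subseteq J,\ J\models\Cpost,\ R^J=R^I\text{ for }R\notin\Scope\}$, and for a safe-scope procedure every relation occurring on a right-hand side of $\Cpost$ lies in $\Scope$. Using this I claim: if $I_0\subseteq I$ and $J\in\outcome_\Pro(I)$, then the instance $J_0$ obtained from $J$ by leaving the in-scope relations untouched and resetting each out-of-scope relation to its $I_0$-value satisfies $J_0\in\outcome_\Pro(I_0)$ and $J_0\subseteq J$. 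Indeed $I_0\subseteq J_0\subseteq J$ is immediate, out-of-scope agreement holds by construction, and $J_0\models\Cpost$ because shrinking only the out-of-scope relations can only make left-hand sides fire less often, while every existential witness demanded on a right-hand side already sits in an in-scope relation shared with $J$.

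With monotonicity in hand, part~i) follows. Take $J\in\outcome_\Pro(\Inst)$, say $J\in\outcome_\Pro(I)$ with $I\in\Inst$. Since $I$ is finite it extends some minimal instance $I_0$ of $\Inst$, which by the hypothesis and Lemma~\ref{lem-usual-rep} is a minimal instance of $\hat\rep(G)$. Monotonicity yields $J_0\in\outcome_\Pro(I_0)$ with $J_0\subseteq J$, and $J_0$ in turn extends some minimal instance $J_0'$ of $\outcome_\Pro(I_0)$; by clause~(a) of Lemma~\ref{lem-minimal-chase}, $J_0'\in\hat\rep(G')$. As $J$ extends $J_0'$ and $\rep(G')$ is closed under extension, $J\in\rep(G')$.

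For part~ii), let $J$ be minimal in $\rep(G')$; by Lemma~\ref{lem-usual-rep} it is minimal in $\hat\rep(G')$, so clause~(b) of Lemma~\ref{lem-minimal-chase} gives a minimal instance $I_0$ of $\rep(G)$ with $J$ minimal for $\outcome_\Pro(I_0)$. The hypothesis makes $I_0$ a minimal instance of $\Inst$, whence $I_0\in\Inst$ and $J\in\outcome_\Pro(\Inst)$. To see $J$ is minimal there, suppose some $K\in\outcome_\Pro(\Inst)$ with $K\neq J$ were extended by $J$; part~i) gives $K\in\rep(G')$, contradicting minimality of $J$ in $\rep(G')$. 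The main obstacle I anticipate is precisely the monotonicity step, since it is the only place where the fine print of the safe-scope definition (preservation queries binding exactly the scope, right-hand sides confined to the scope) must be used to guarantee that resetting out-of-scope relations preserves $\Cpost$; the remaining arguments are bookkeeping over the finitely many sub-instances of a finite instance.
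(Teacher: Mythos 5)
Your proof is correct and follows essentially the same route as the paper's: both define $G'$ via Lemma~\ref{lem-minimal-chase}, reduce part~i) to showing that every outcome of $I\in\Inst$ extends a minimal instance of $\outcome_\Pro(I_0)$ for a minimal $I_0$ below $I$, and obtain part~ii) directly from Lemma~\ref{lem-minimal-chase} together with upward closure of $\hat\rep$. Your explicit monotonicity step (the $J_0$ reset of out-of-scope relations, justified by right-hand sides lying in the scope) is a more rigorous rendering of the paper's one-sentence claim that instances in $\outcome_\Pro(I)$ must extend minimal instances of $\outcome_\Pro(I^*)$, so it strengthens rather than diverges from the paper's argument.
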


\begin{proof}
Using the chase procedure mentioned in Lemma \ref{lem-minimal-chase}, we see that the conditional table $G'$ produced in this lemma 
satisfies the conditions of this Lemma, for $\hat \rep(G)$.  

For i), let $J$ be an instance in $\outcome_{\Pro}(\Inst)$. Then there is an instance $I$ in $\Inst$ such that 
$J \in \outcome{P}(I)$. Let $I^*$ be a minimal instance in $\Inst$ such that $I$ extends $I^*$. 
By our assumption we know that $I^*$ belongs to $\rep(G)$, and 
since $I^*$ is minimal it must be the case that $I^*$ belongs (and is minimal) for $\hat \rep(G)$. 
Therefore, by Lemma \ref{lem-minimal-chase} we have that $\hat \rep(G')$ contains all minimal instances 
for $\outcome_P(I^*)$. But now notice that for every assignment $\tau$ and tgd $\lambda$ such that 
$(I^*,\tau)$ satisfies $\lambda$, we have that $(I,\tau)$ satisfy $\lambda$ as well. This means that 
every instance in the set $\outcome_P(I)$ must extend a minimal instance in $\outcome_P(I^*)$ 
(if not, then a tgd would not be satisfied due to some assignment that would not be possible to extend). 
Since every minimal instance in $\outcome_P(I^*)$ is in $\hat \rep(G')$, then by the semantics of conditional tables 
it must be the case that $J$ belongs to $\hat \rep(G')$ as well, and therefore to $\rep(G')$. 

Item [ii)] follows from the fact that any minimal instance in $\rep(G')$ must also be minimal for $\hat \rep(G')$ 
and a direct application of Lemma \ref{lem-minimal-chase}.
\end{proof}

The next Lemma constructs the desired outcomes for alter schema procedures. 

\begin{lemma}
\label{lem-seq-alter}
Let $\Inst$ be a set of instances, and $G$ a conditional table that is minimal for $\Inst$, and 
$\Pro  = (\Scope,\Cpre,\Cpost,\Q_\safe)$ an alter schema procedure.  
Then either $\outcome_{\Pro}(\Inst) = \emptyset$ or one can construct, in polynomial time, a  
conditional instance $G'$ 
%with global condition  
such that 
\begin{itemize}
\item[i)] $\outcome_{\Pro}(\Inst) \subseteq \rep(G')$; and 
\item[ii)] If $J$ is a minimal instance in $\rep(G')$, then $J$ is also minimal in 
$\outcome_{\Pro}(\Inst) $. 
\end{itemize}
\end{lemma}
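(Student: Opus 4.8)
The plan is to reuse the structure of the proof of Lemma~\ref{lem-seq-safe} while exploiting the fact that a safe schema-alteration procedure is far tamer than a safe-scope one: its scope and preservation queries are empty and its postcondition $\Cpost$ is a set of structure constraints, so no chase is needed. Because $\Scope=\emptyset$, condition~(3) of the dynamic semantics makes $Q_{\Sch\setminus\Scope}$ range over the whole input, so every outcome $J$ of an input $I$ must restrict to $I$ on $\schema(I)$; the only freedom $\Pro$ grants is to widen the schema so as to satisfy $\Cpost$ and to place wholly unconstrained values in the freshly added attribute positions. Since $\rep(\cdot)$ already quantifies over all schema extensions, this widening can be recorded directly inside the conditional instance. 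Concretely, structure constraints are always jointly satisfiable and $\Pro$ has empty precondition, so $\Pro$ is applicable to every input and yields at least one outcome; hence $\outcome_\Pro(\Inst)=\emptyset$ iff $\Inst=\emptyset$ iff $\rep(G)=\emptyset$, which is testable. In the nonempty case I would run the algorithm $A(\Pro,\schema(G))$ of Proposition~\ref{ref-rep-decidable} to obtain the minimal target schema $\Schmin$ and build $G'$ by setting $\schema(G')=\Schmin$, keeping each conditional tuple of $G$ with its element-condition, and padding the attributes of $\Schmin$ that are new to each relation with pairwise distinct fresh nulls of $\Nulls$ carrying no condition; this is a polynomial-time construction.

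For inclusion~(i), take $J\in\outcome_\Pro(I)$ with $I\in\Inst\subseteq\rep(G)$ and a substitution $\mu$ witnessing $I\in\rep(G)$. As noted, $J$ restricts to $I$ on $\schema(I)$, and $\schema(J)\models\Cpost$ forces $\schema(J)$ to extend $\Schmin=\schema(G')$. I would extend $\mu$ to a substitution $\nu$ by sending the padding null introduced for a tuple $s$ and a new attribute $B$ to the value on $B$ of some tuple of $J$ matching $\mu(s)$ on the old attributes (one exists since $J$ restricts to $I$). As the padding nulls are fresh, distinct and unconditioned, $\nu$ is well defined and every tuple of $\nu(G')$ is matched in $J$ on $\schema(G')$, so $J$ extends $\nu(G')$ and $J\in\rep(G')$.

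For the minimality transfer~(ii), I would pass to $\hat{\rep}$ via Lemma~\ref{lem-usual-rep}, so a minimal $J\in\rep(G')$ has schema exactly $\schema(G')$ and equals $\nu(G')$ for some $\nu$. Projecting out the new attributes maps $J$ to a minimal instance $M$ of $\hat{\rep}(G)$, which is minimal in $\Inst$ because $G$ is minimal for $\Inst$; and $J$ is just $M$ with one value chosen per new attribute per tuple, i.e.\ a minimal element of $\outcome_\Pro(M)$. Since $\outcome_\Pro$ is monotone for the extension order (an outcome of an extension of $M$ extends an outcome of $M$), the minimal elements of $\outcome_\Pro(\Inst)$ are exactly the minimal outcomes of minimal instances of $\Inst$, which identifies the minimal instances of $\rep(G')$ with those of $\outcome_\Pro(\Inst)$.

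The hard part will be the bookkeeping behind~(ii): I must verify that padding each new attribute with a single fresh null faithfully models the ``one value per new attribute per tuple'' shape of a minimal outcome, and that the correspondence between substitutions of the padding nulls and choices of new-attribute values is exact, so that collapsing distinct padding nulls under $\nu$ neither destroys nor manufactures minimal instances. Establishing the monotonicity of $\outcome_\Pro$ under the extension order and aligning it with the minimality hypothesis carried by $G$ is where the argument needs most care; the inclusion~(i) and the emptiness dichotomy should be routine.
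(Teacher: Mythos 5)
Your construction is exactly the paper's: compute the minimal schema via the algorithm of Proposition~\ref{ref-rep-decidable} and extend $G$ to $G'$ by padding each tuple with pairwise distinct fresh nulls in the newly added attributes (entirely new relations left empty), after which the paper dismisses properties (i) and (ii) as a ``straightforward check.'' Your verification fills in that omitted check correctly; the one simplification worth noting is that in (ii), once you have established $J \in \outcome_{\Pro}(\Inst)$, minimality follows immediately from (i) together with minimality of $J$ in $\rep(G')$, so the monotonicity/``exactly'' characterization of minimal outcomes that you flag as the delicate part can be bypassed entirely.
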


\begin{proof}
Assume that $\outcome_{\Pro}(\Inst) \neq \emptyset$ (this can be easily checked in polynomial time). 
Then one can compute the schema $\Schmin$ from the proof of Proposition 
\ref{ref-rep-decidable}. This schema will add some attributes to some relations in the schema of $G$, and 
possibly some other relations with other sets of attributes. Let $\schema(G) = \Sch$. 

We extend $G$ to a positive conditional table $G'$ over $\Schmin$ as follows: 
\begin{enumerate}
\item For every relation $R$ such that $\Schmin(R) \setminus \Sch(R) = \{A_1,\dots,A_n\}$, with $n \geq 1$, for tuples from $G'$ by 
adding to each tuple in $G$ a fresh null value in each of the attributes $A_1,\dots,A_n$. 
\item For every relation $R$ such that $\Sch(R)$ is not defined, but $\Schmin(R)$ is defined, set $R^{G'} = \emptyset$
\end{enumerate}

The properties of the lemma now follow from a straightforward check. 
\end{proof}

The proof of Proposition \ref{prop-minimal} now follows from successive applications of Lemmas \ref{lem-seq-alter} and  \ref{lem-seq-safe}: 
one just need to compute the appropriate conditional table for each procedure in the sequence $P_1,\dots,P_n$. 
That each construction is in exponential size if the number $n$ of procedures is fixed, or doubly-exponential in other case, 
follows also from these Lemmas, as the size of the conditional table $G'$, for a procedure $P$ and a conditional table $G$, is at most 
exponential in $|G|^{|P|}$ (and thus if we have $n$ procedures of size $|P|$ the size is of order ${(|G|^{|P|})}^n$, or $|G|^{n|P|})$. 

\subsection{Proof of proposition \ref{prop-dyn-query-answering}}

Let $T$ be a conditional instance, $I$ an instance in $\hat \rep(T)$ and $Q$ a boolean conjunctive query. 
By definition of conditional instances and the fact that CQs are preserved under homomorphisms, we have the following fact. 

\begin{observation}
$Q$ holds in every minimal instance in $\hat \rep(T)$ if and only if $Q$ holds in every instance in $\rep(T)$
\end{observation}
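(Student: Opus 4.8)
The plan is to prove the two implications separately, leaning on two ingredients that are already available: Lemma~\ref{lem-usual-rep}, which gives the inclusion $\hat\rep(T) \subseteq \rep(T)$ and the coincidence of minimal instances for the two sets, together with the monotonicity of conjunctive queries under homomorphisms (equivalently, under the \emph{extends} relation). The $(\Leftarrow)$ direction is immediate, so the real work lies in the $(\Rightarrow)$ direction, which starts from the minimal instances of $\hat\rep(T)$ and must reach \emph{every} instance of $\rep(T)$, including those over strictly larger schemas.

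First I would dispatch the easy $(\Leftarrow)$ direction. Assume $Q$ holds on every instance in $\rep(T)$. By part~(1) of Lemma~\ref{lem-usual-rep} we have $\hat\rep(T) \subseteq \rep(T)$, so in particular every minimal instance of $\hat\rep(T)$ is a member of $\rep(T)$ and hence satisfies $Q$; no further argument is needed.

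For the $(\Rightarrow)$ direction, suppose $Q$ holds on every minimal instance of $\hat\rep(T)$, and let $I$ be an arbitrary instance in $\rep(T)$; I must show $Q$ holds in $I$. By definition of $\rep$, there is a substitution $\nu$ such that $I$ extends $\nu(T)$. The ground instance $\nu(T)$ lies in $\hat\rep(T)$, witnessed by $\nu$ and the inclusion $\nu(T)\subseteq\nu(T)$. Since $\nu(T)$ is finite it has only finitely many subinstances, so among those that still belong to $\hat\rep(T)$ I may select a minimal one, $M$; by construction $\nu(T)$ extends $M$, and $M$ is minimal in $\hat\rep(T)$. By hypothesis $Q$ holds on $M$. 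I would then invoke monotonicity: transitivity of \emph{extends} yields that $I$ extends $M$, and a boolean conjunctive query holding on $M$ also holds on every instance extending $M$, since the witnessing homomorphism into $M$ composes with the inclusion of $M$'s tuples into the projection of $I$ onto $\schema(M)$. Hence $Q$ holds in $I$.

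The main obstacle I anticipate is the interplay between the two representation sets and the schema extension built into $\rep$: the minimal instances live over $\schema(T)$, whereas an arbitrary $I \in \rep(T)$ may carry a strictly larger schema. The argument therefore hinges on two checks. First, that every representative sits above some minimal instance of $\hat\rep(T)$; this is where finiteness of ground instances supplies well-foundedness of the extends order. Second, that satisfaction of the boolean CQ $Q$ genuinely propagates upward along \emph{extends} across a schema change, not merely across set containment within a fixed schema. Both points are routine once the named-perspective definition of \emph{extends} is unfolded, but they are precisely where the proof could silently go wrong, so I would verify the monotonicity step explicitly against that definition rather than appeal to plain set inclusion.
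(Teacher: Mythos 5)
Your proof is correct and follows essentially the same route as the paper, which dismisses the observation in one line as a consequence of the definition of conditional instances together with preservation of CQs under homomorphisms: your two directions are exactly an unfolding of those ingredients (the inclusion $\hat\rep(T)\subseteq\rep(T)$ from Lemma~\ref{lem-usual-rep} for one direction, and for the other the fact that every $I\in\rep(T)$ extends $\nu(T)$, hence extends a minimal element of $\hat\rep(T)$ found by finiteness, plus upward propagation of CQ satisfaction along \emph{extends}). The extra care you take with well-foundedness and with monotonicity across a schema change is exactly what the paper leaves implicit, and both checks go through under the named perspective.
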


Thus, if we want to compute the certain answers for a query $Q$ over a conditional instance $I$, all we need to do is 
to guess a counterexample: an assignment $\nu$ for $T$ such that the minimal instance $\nu(T)$ does not satisfy the query. 
We obtain (see e.g. “C. Grahne.
The Problem of Incomplete Information in Relational \\ Databases. Springer, 1991.”)

\begin{observation}
Computing certain answers of conditional instance is in $\Pi_2^p$.
\end{observation}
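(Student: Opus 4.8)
The plan is to establish the $\Pi_2^p$ bound by showing that the complementary problem --- deciding that $Q$ is \emph{not} a certain answer over $\rep(T)$ --- lies in $\Sigma_2^p$, i.e., can be solved by an existential guess followed by a \conp check. By the preceding observation, $Q$ is a certain answer over $\rep(T)$ if and only if $Q$ holds in every minimal instance of $\hat\rep(T)$, and these minimal instances are exactly the ground instances $\nu(T)$ obtained from valuations $\nu: \nulls(T) \to D$. Hence a counterexample to certainty is a valuation $\nu$ with $\nu(T) \not\models Q$, and the whole task reduces to guessing such a $\nu$ and verifying that $Q$ fails on $\nu(T)$.

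First I would argue that it suffices to guess valuations of polynomial size. The key point is that whether $\nu(T) \models Q$ depends only on the \emph{equality type} of $\nu$: the partition of $\nulls(T)$ induced by $\nu$ (which nulls receive a common value) together with the set of nulls that $\nu$ sends to a constant occurring in $T$ or in $Q$. Two valuations sharing the same equality type yield instances $\nu(T)$ and $\nu'(T)$ that are isomorphic via a bijection fixing every constant appearing in $Q$; since boolean CQs are preserved under such isomorphisms, the two instances agree on satisfaction of $Q$. Consequently I may restrict attention to valuations sending each null either to one of the finitely many relevant constants or to one of at most $|\nulls(T)|$ designated fresh values, and such a $\nu$ is describable in polynomial space.

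Next I would spell out the verification. Given a guessed $\nu$ of the above form, constructing $\nu(T)$ is polynomial: for each tuple $t$ one evaluates its element-condition $\rho^T_R(t)$ --- a positive boolean combination of (in)equalities --- under $\nu$ in polynomial time, and keeps $\nu^*(t)$ exactly when the condition holds. The resulting ground instance $\nu(T)$ has polynomial size, so checking $\nu(T) \not\models Q$ amounts to confirming that \emph{no} homomorphism from $Q$ into $\nu(T)$ exists. Since the existence of such a homomorphism is an \np question (guess the homomorphism and check it against the polynomial-size $\nu(T)$), its non-existence is a \conp test. The complement is therefore decided by an existential guess of $\nu$ followed by a \conp oracle call, which places it in $\Sigma_2^p$ and the original certainty problem in $\Pi_2^p$.

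The hard part will be the small-model step: justifying that only the equality type of $\nu$ matters and that polynomially many distinct witness values suffice. Care is needed with the inequality literals $x \neq y$ permitted in element-conditions, since collapsing two nulls onto a common value can simultaneously enable some tuples' conditions and disable others; the isomorphism argument must fix the constants of $Q$ while remaining free to permute the fresh values, so I would make precise that both the survival of a tuple under $\nu$ and the homomorphism-target pattern are invariants of the equality type. Once this invariance is established, the $\Sigma_2^p$ bound on the complement --- and hence the $\Pi_2^p$ bound in the statement --- follows routinely.
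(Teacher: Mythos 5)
Your proof is correct and follows essentially the same route as the paper: the paper also reduces certainty to guessing a counterexample valuation $\nu$ with $\nu(T) \not\models Q$ (placing the complement in $\Sigma_2^p$), though it simply cites Grahne's monograph for the bound rather than spelling out the details. Your equality-type small-model argument and the \conp\ homomorphism-nonexistence check are exactly the standard justification the paper leaves implicit.
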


However, again by construction we can show the following for positive conditional instances: 

\begin{lemma}
Let $T$ be a positive conditional instance, and $N$ the naive instance given by dropping all conditions from 
$T$. Then  $Q$ holds in every minimal instance in $\hat \rep(T)$ if and only if 
$Q$ holds in every instance in $\rep(N)$
\end{lemma}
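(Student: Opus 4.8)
The plan is to first reduce the left-hand side to a statement about all of $\rep(T)$, and then to compare $\rep(T)$ with $\rep(N)$ directly. By the first observation above, the statement ``$Q$ holds in every minimal instance of $\hat \rep(T)$'' is equivalent to ``$Q$ holds in every instance of $\rep(T)$''; hence it suffices to show that, for a positive conditional instance $T$ and its condition-free version $N$, the query $Q$ is certain over $\rep(T)$ if and only if it is certain over $\rep(N)$.

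For the implication from $\rep(T)$ to $\rep(N)$ I would first establish the containment $\rep(N) \subseteq \rep(T)$. Indeed, take any $I \in \rep(N)$, so that $I$ extends $\mu(N)$ for some valuation $\mu$. Since $N$ is obtained from $T$ by deleting conditions, dropping a condition can only make a tuple present, so $\mu(T) \subseteq \mu(N)$; as $I$ extends $\mu(N)$ it also extends $\mu(T)$, witnessing $I \in \rep(T)$. Consequently, if $Q$ holds on every instance of $\rep(T)$ then it holds on every instance of the subset $\rep(N)$, which yields one direction of the equivalence.

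The converse is the substantive part, and this is where positivity is used. Assuming $Q$ holds on every instance of $\rep(N)$, I would take an arbitrary minimal instance $M = \nu(T)$ of $\hat \rep(T)$, where $\nu$ is a valuation satisfying the conditions of the tuples it keeps, and exhibit a representative $J \in \rep(N)$ together with a homomorphism $h \colon J \to M$; since $Q$ holds on $J$ and conjunctive queries are preserved under homomorphisms, $Q$ would then hold on $M$. Concretely, the construction reads off from $\nu$ a valuation of $N$ and a map $h$ that collapses every tuple of that valuation --- in particular the tuples whose condition $\nu$ falsifies, which are therefore absent from $M$ --- onto tuples already present in $M$. The main obstacle is verifying that $h$ is a well-defined homomorphism whose image lies inside $M$: it is precisely here that the absence of inequalities in the element-conditions of $T$ is needed, since it guarantees that the identifications of nulls forced in order to realise the missing tuples inside $M$ never conflict with a disequality requirement. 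I expect the careful bookkeeping of these identifications, rather than the surrounding reductions, to be the crux of the argument.
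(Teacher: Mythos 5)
Your proof rests on a misreading of what $N$ is, and under your reading the statement you are proving is false, so the gap in your converse direction cannot be repaired. You take $N$ to contain \emph{every} tuple of $T$ with its condition erased (this is explicit in your claim $\mu(T) \subseteq \mu(N)$). But in the paper $N$ consists only of the \emph{unconditional} tuples of $T$: conditioned tuples are deleted outright, as the paper's own proof makes clear (``by construction it could be that $J$ contains more tuples than $\nu(N)$, but not the other way around'', i.e.\ $\nu(N) \subseteq \nu(T)$) and as the later application states verbatim (``the naive instance resulting of dropping all tuple with conditions''). Under your reading the lemma fails: let $T$ consist of the single tuple $R(a)$ carrying the positive condition $x = y$ for distinct nulls $x,y$, and let $Q = \exists z\, R(z)$. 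Any valuation with $\nu(x) \neq \nu(y)$ gives $\nu(T) = \emptyset$, so the empty instance is a minimal instance of $\hat \rep(T)$ and the left-hand side fails; yet your $N$ is $\{R(a)\}$, so $Q$ holds on every instance of $\rep(N)$ and the right-hand side holds. This is exactly the direction you defer to ``careful bookkeeping'': the homomorphism $h \colon J \to M$ you want cannot exist (here $M = \emptyset$ while every $J \in \rep(N)$ contains $R(a)$; moreover homomorphisms must fix constants, so a conditioned ground tuple whose condition $\nu$ falsifies generally has nothing in $M$ it could be ``collapsed onto''). Positivity does not rescue this, and it is not where positivity is used in the paper.

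With the correct $N$ the containments reverse and the two directions trade places. The easy direction is the one you thought was hard: every minimal instance of $\hat \rep(T)$ has the form $\nu(T) \supseteq \nu(N)$, hence already lies in $\rep(N)$, so certainty over $\rep(N)$ immediately gives $Q$ on all minimal instances. The substantive direction is the one you thought was easy: assuming $Q$ holds on every minimal instance of $\hat \rep(T)$, one must show $Q$ holds on every instance of $\rep(N)$. The paper does this by valuating all nulls of $T$ with pairwise distinct \emph{fresh} constants: since element-conditions are \emph{positive}, this falsifies every nontrivial condition --- this is where positivity is genuinely needed, for with inequalities allowed the fresh valuation would instead satisfy them and conditioned tuples would survive --- so the resulting instance equals the fresh valuation of $N$ and lies in $\hat \rep(T) \cap \rep(N)$. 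By your own first reduction (certainty over the minimal instances of $\hat \rep(T)$ equals certainty over all of $\rep(T)$), $Q$ holds on this instance; renaming the fresh constants, which occur nowhere in $Q$ or $T$, then propagates $Q$ to $\mu(N)$ for every valuation $\mu$, hence by monotonicity of conjunctive queries to every instance of $\rep(N)$.
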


\begin{proof}
The if direction follows because an arbitrary assignment for the nulls in $N$ that sends each null to a fresh constant not appearing anywhere (not even in conditions) in 
$T$ yields an instance in both $\hat \rep(T)$ and $\rep(N)$ that is also minimal for $\hat \rep(T)$. 
For the only if direction we show that there is a homomorphism from $N$ to every minimal instance in $\hat \rep(T)$. 
Indeed, let $J$ be a minima instance in $\hat \rep(T)$, built from an assignment $\nu$ for $T$. Then the function  
mapping each null in $N$ as mandated by $\nu$ is indeed a homomorphism from $N$ to $J$: by construction it could 
be that $J$ contains more tuples than $\nu(N)$, but not the other way around. 
\end{proof}

We immediately obtain 
\begin{observation}
Computing certain answers of positive conditional instance is in $\np$, and for boolean queries $Q$ it suffices to 
check a homomorphism from $Q$ to $N$, where $N$ is the naive instance resulting of dropping all tuple with 
conditions in positive conditional instances. 
\end{observation}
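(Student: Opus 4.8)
The plan is to obtain this observation as an immediate consequence of the Lemma just proved together with the classical correctness of naive evaluation for positive queries. First I would chain the reductions already set up: by the earlier observation, a boolean CQ $Q$ is a certain answer over $\rep(T)$ precisely when $Q$ holds in every minimal instance of $\hat\rep(T)$; and by the preceding Lemma, for a positive conditional instance $T$ and the naive instance $N$ obtained by erasing all element-conditions from $T$, the latter holds if and only if $Q$ holds in every instance of $\rep(N)$. Thus certain answering over $T$ reduces to certain answering of $Q$ over the single naive instance $N$, which is trivially computable from the conditional table.

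Second, I would establish the key equivalence: for a boolean CQ $Q$, we have that $Q$ holds in every instance of $\rep(N)$ if and only if there is a homomorphism from $Q$ into $N$, where the nulls of $N$ are treated as target elements onto which the variables of $Q$ may be sent, while the constants of $Q$ must be matched literally. The proof of the preceding Lemma already shows that $N$ maps homomorphically into every minimal instance of $\hat\rep(T)$, so $N$ acts as a universal representative of the represented set. The direction $(\Leftarrow)$ is then immediate: given a homomorphism $h\colon Q \to N$ and any $I \in \rep(N)$ witnessed by a substitution $\nu$, the composite $\nu \circ h$ is a homomorphism from $Q$ into $\nu(N) \subseteq I$, so $Q$ holds in $I$, using that CQ satisfaction is preserved under homomorphisms and under the \emph{extends} relation. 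For $(\Rightarrow)$ I would take the substitution $\nu_0$ sending distinct nulls of $N$ to distinct fresh constants absent from both $Q$ and $N$; then $\nu_0(N) \in \rep(N)$, so $Q$ holds in $\nu_0(N)$, and pulling the witnessing homomorphism back through $\nu_0$ (which is injective on the nulls) produces the required homomorphism $Q \to N$. This is precisely the classical correctness of naive evaluation for positive queries over naive tables, so one may alternatively just cite Grahne's monograph.

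Finally, membership in $\np$ follows directly from the homomorphism characterization: one guesses an assignment from the variables of $Q$ to the elements (domain values and nulls) occurring in $N$ and verifies in polynomial time that every atom of $Q$ is sent to a tuple of $N$ with the constants of $Q$ respected. Since $N$ is read off from the conditional instance by simply dropping conditions, this guess-and-check procedure places the problem in $\np$ in combined complexity, and in polynomial time when $Q$ is fixed.

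I expect the only delicate point to be the $(\Rightarrow)$ direction of the equivalence: one must argue that the single, most-general choice $\nu_0$ (distinct fresh constants for distinct nulls) suffices, and that pulling the resulting homomorphism back through $\nu_0$ is well defined exactly because those fresh constants occur nowhere in $Q$. Everything else is routine bookkeeping with homomorphism-preservation of CQs and with the \emph{extends} relation, both of which are already used freely earlier in the paper.
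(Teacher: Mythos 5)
Your overall route --- chain the earlier observation and the Lemma, invoke correctness of naive evaluation, and finish with a guess-and-check argument for \np\ membership --- is exactly the route the paper intends. But there is a genuine error in how you instantiate it: you define $N$ as ``the naive instance obtained by erasing all element-conditions from $T$,'' i.e.\ you \emph{keep} every tuple and merely delete its condition. The statement you were asked to prove defines $N$ differently: it is obtained by \emph{dropping every tuple that carries a condition}. The distinction is not cosmetic. The key property you invoke --- that $N$ maps homomorphically into every minimal instance of $\hat\rep(T)$, so that $N$ is a universal representative --- is true only for the ``drop conditioned tuples'' version. For your version it fails: if a tuple $t$ carries a nontrivial positive condition, say $x = y$ for distinct nulls $x,y$, then any substitution $\nu$ with $\nu(x) \neq \nu(y)$ omits $t$ from $\nu(T)$, so the resulting minimal instance is missing a tuple that your $N$ contains, and no homomorphism from your $N$ into it need exist.

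Concretely, take $T$ consisting of the single tuple $R(a)$ with condition $x = y$ ($x,y$ distinct nulls), and let $Q = \exists z\, R(z)$. The substitution sending $x$ and $y$ to distinct constants yields $\nu(T) = \emptyset$, so the empty instance is (the) minimal instance of $\hat\rep(T)$ and $Q$ is \emph{not} certain. Yet your $N$ equals $\{R(a)\}$ and $Q$ maps into it, so your characterization wrongly declares $Q$ certain; equivalently, the ``if and only if'' you attribute to the preceding Lemma fails in the direction from $\rep(N)$ back to $\hat\rep(T)$, because $\emptyset \in \hat\rep(T)$ but $\emptyset \notin \rep(N)$ for your $N$. (You were arguably misled by the paper's own sloppy phrasing: its Lemma says ``dropping all conditions from $T$,'' but its proof of that Lemma --- the homomorphism ``mandated by $\nu$'' --- and its final algorithm both require the tuples with conditions to be removed, as the Observation states.) With the correct $N$ everything you wrote goes through essentially verbatim: unconditioned tuples appear in $\nu(T)$ for every $\nu$, so $\nu$ itself is a homomorphism from $N$ into every minimal instance; the fresh-constants substitution $\nu_0$ satisfies $\nu_0(T) = \nu_0(N)$, which gives the converse direction by pulling the witnessing homomorphism back through $\nu_0$; and the \np\ guess-and-check is unchanged.
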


\textbf{Membership in \nexptime (\np when $n$ is fixed)}. We can now outline our algorithm for query answering, given 
$P_1,\dots,P_n$, $I$ and $Q$ from the statement of the problem. Let $T$ an instance representing 
$\outcome_{P_1,\dots,P_n}(I)$, and let $N$ the naive instance constructed by dropping tuples with conditions in $T$. 
\begin{itemize}
\item Guess a homomorphism $h$ from $Q$ to $N$
\item For each atom in $h(Q)$, guess a set rules producing this atom, and for each such rule all homomorphisms 
needed to fire the rule during a chase.  
This set is at most exponential size, in $n$ (and 
polynomial if $n$ is fixed) because each atom in the conditional instance representing $\outcome_{P_1,\dots,P_i}(I)$ 
is produced by a rule in $P_{i}$, having at most $|P_i|$ atoms from $\outcome_{P_1,\dots,P_{i-1}}(I)$. The resulting 
size of the set is then bounded by $|P|^n$, where $P$ is the size of the biggest procedure i the sequence $P_1,\dots,P_n$. 
\item We can then check that by chasing the sequence $P_1,\dots,P_n$ of procedures one does produce the 
set of atoms and rules needed to witness $h(Q)$. The check is polynomial in the size of the set of rules producing 
$h(Q)$. 
\end{itemize}

\subsection{Proof of Theorem \ref{theo-readi-decidable-queries-dynamic}}

The key idea for this proof is the fact that, when computing the conditional instances representing the outcome of 
procedures as dictated by Proposition \ref{prop-minimal}, procedures with safe schema-alteration can only produce nulls 
the first time they appear in a sequence. 

To be more precise, assume a sequence  $P_1,\dots,P_\ell$ of procedures from $\Pi$ 
such that every instance in $\outcome_{P_1,\dots,P_\ell}(I)$ satisfies $Q$. 
By Proposition \ref{prop-minimal}, there is a conditional table $T$ whose minimal instances 
coincide with the minimal instances in  $\outcome_{P_1,\dots,P_\ell}(I)$.  

Let also $T_i$ the conditional instance representing the minimal instances of $\outcome_{P_1,\dots,P_i}(I)$. 

While $T_i$ may contain nulls, at most one null can be computed for each procedure with safe-schema alteration 
in $\Pi$ and each assignment tuple in $T_{i-1}$. The first time we apply such a procedure we can create at most 
$D^|\Sch|$ nulls,  where $D$ is the number of elements in $I$ and $|\Sch|$ is the schema of $T$, and thus 
the size of the resulting instance is at most $(D^{|\Sch|})^{|\Sch|}$. Then the number of nulls 
created is at most ${D^{|\Sch|}}^{|\Pi|}$. 

We can then continue the argument in the proof of Theorem \ref{theo-readi-decidable-queries}, except that instead of 
querying the minimal instance of the SKB we query the naive table resulting out of removing tuples with conditions 
from $T$. Since we now have a doubly-exponential number of elements, sequences may be of double exponential 
size (unless the number of procedures is fixed), from which the \twonexptime follows.

\end{document}